\documentclass[11pt, paper=a4, abstract]{scrartcl}

\usepackage[utf8]{inputenc}
\usepackage[backend=biber, style=apa]{biblatex}
\AtBeginBibliography{\footnotesize}
\usepackage{amsmath}
\usepackage{amssymb}
\usepackage{amsthm}
\usepackage{nicefrac}
\usepackage{scalerel} % for Kronecker sum
\usepackage{stackengine} % for Kronecker sum
\usepackage{xcolor}
\usepackage[normalem]{ulem} % tracked deletions
\usepackage{authblk} % for author formatting
\usepackage{csquotes}
\usepackage{setspace}
\usepackage{dsfont}
\usepackage{placeins}
\usepackage{longtable}
\usepackage{ltablex}   % combines longtable + tabularx
\keepXColumns
\usepackage{booktabs}
\usepackage{algorithm}
\usepackage{algpseudocode}
\usepackage{pdflscape}
\usepackage[
    font=footnotesize,
    labelfont={bf,sf},
    labelsep=period,
    format=plain,
]{caption}
\usepackage{graphicx}

\usepackage[left=2cm, top=2cm, right=2cm, bottom=2.5cm]{geometry}

\usepackage{mathtools}
\mathtoolsset{showonlyrefs}

\usepackage{hyperref}
\usepackage{aliascnt}
\hypersetup{
    colorlinks,
    linkcolor={blue!70!black},
    citecolor={blue!70!black},
    urlcolor={blue!70!black}
}
\providecommand{\tightlist}{%
  \setlength{\itemsep}{0pt}\setlength{\parskip}{0pt}}

\newcommand{\bsf}{\boldsymbol{f}}

\newcommand{\bsell}{\boldsymbol{\ell}}

\newcommand{\bsu}{\boldsymbol{u}}

\newcommand{\bsx}{\boldsymbol{x}}
\newcommand{\bsy}{\boldsymbol{y}}

\newcommand{\bsX}{\boldsymbol{X}}

\newcommand{\bfb}{\mathbf{b}}

\newcommand{\bfA}{\mathbf{A}}

\newcommand{\bfD}{\mathbf{D}}

\newcommand{\bfI}{\mathbf{I}}

\newcommand{\bfK}{\mathbf{K}}

\newcommand{\bfP}{\mathbf{P}}
\newcommand{\bfQ}{\mathbf{Q}}

\newcommand{\bsgamma}{\boldsymbol{\gamma}}
\newcommand{\bsdelta}{\boldsymbol{\delta}}

\newcommand{\bszeta}{\boldsymbol{\zeta}}
\newcommand{\bseta}{\boldsymbol{\eta}}

\newcommand{\bsnu}{\boldsymbol{\nu}}

\newcommand{\bspi}{\boldsymbol{\pi}}

\newcommand{\bstau}{\boldsymbol{\tau}}

\newcommand{\bsphi}{\boldsymbol{\phi}}

\newcommand{\bspsi}{\boldsymbol{\psi}}

\newcommand{\bstheta}{\boldsymbol{\theta}}

\newcommand{\bsGamma}{\boldsymbol{\Gamma}}

\newcommand{\bsLambda}{\boldsymbol{\Lambda}}

\newcommand{\bsOmega}{\boldsymbol{\Omega}}

\newcommand{\bfone}{\mathbf{1}}
\newcommand{\bfzero}{\mathbf{0}}

\newcommand{\calN}{\mathcal{N}}

\newcommand{\bbE}{\mathbb{E}}

\newcommand{\bbP}{\mathbb{P}}

\newcommand{\bbR}{\mathbb{R}}

\theoremstyle{plain}

\newaliascnt{lemma}{theorem}

\aliascntresetthe{lemma}
\newaliascnt{proposition}{theorem}
\newtheorem{proposition}[proposition]{Proposition}
\aliascntresetthe{proposition}
\newaliascnt{corollary}{theorem}
\newtheorem{corollary}[corollary]{Corollary}
\aliascntresetthe{corollary}

\theoremstyle{definition}
\newaliascnt{definition}{theorem}

\aliascntresetthe{definition}

\theoremstyle{remark}
\newaliascnt{remark}{theorem}

\aliascntresetthe{remark}

\newcommand{\dimann}[2]{\underset{\scriptscriptstyle \strut #1}{#2}}

\title{\Large Reconciling Interpretability with Covariate-Dependent Shape Flexibility in Penalized Transformation Models for Distributional Regression}

\date{}

\author[1]{Johannes Brachem%
\thanks{%
\texttt{brachem@uni-goettingen.de}}
}
\author[1]{Thomas Kneib}

\affil[1]{University of Göttingen, Germany}

\begin{document}

\maketitle

\begin{abstract}
A central challenge in distributional regression is to allow the shape of the conditional distribution of the response variable to vary flexibly with covariates while retaining directly interpretable effects on its mean and standard deviation.
We extend the penalized transformation model (PTM) family into a conditional-shape PTM, which assigns separate structured additive predictors to the conditional mean, standard deviation, and standardized distributional shape beyond location and scale. A covariate-dependent monotone transformation maps the standardized response to a fixed reference distribution, while affine standardization enforces mean zero and variance one for the induced standardized distribution at every covariate value. Thus, the first two predictors remain exactly the conditional mean and standard deviation. The shape predictor accommodates selected linear, nonlinear, group-specific, spatial, and interaction effects; regularization shrinks unsupported departures toward a nested reference-family location--scale model. We fit the PTM using mini-batched stochastic variational inference with model-aligned Gaussian blocks and staged optimization.
In simulations, the PTM recovers smooth mean and standard-deviation effects and a covariate-dependent transition from skewness to bimodality while suppressing unnecessary shape effects. Under a deliberately misspecified mixture design, it remains competitive with a structured additive Dirichlet-process mixture in test-set density and distribution-function accuracy, although all models show undercoverage and both flexible methods miss fine mixture features. Applications to $13{,}425$ Norwegian water-conductivity observations and $1{,}182{,}514$ German daily-temperature observations demonstrate selective group-specific, seasonal, and spatial shape variation. Predictive performance criteria favor the conditional-shape PTM over a fixed-shape PTM and a Gaussian location--scale model.
\end{abstract}

\noindent\textbf{Keywords:} conditional density estimation; conditional
transformation models; density regression; distributional regression;
structured additive distributional regression;
multivariate structured additive regression

\medskip

\clearpage

\section{Introduction}

Regression analyses often focus on how covariates change the conditional
mean. In many applications, however, scientific interest extends to the entire
conditional distribution: variability may change, distributions may become
skewed or multimodal across groups, seasons, or locations, and risks may depend
on probabilities of extreme or threshold events. Distributional regression
addresses such questions by allowing multiple features of the response
distribution to vary with covariates. We present a transformation-based model
designed to combine flexibility beyond a fixed parametric response family with
direct response-scale interpretations of central tendency and dispersion, and
control over which covariates affect which distributional features. For an
absolutely continuous univariate response, separate structured additive
predictors govern the conditional mean, conditional standard deviation, and
standardized distributional shape. A covariate-dependent monotone
transformation maps the standardized response to a fixed reference
distribution, allowing the shape to be learned from the data while preserving the
exact meanings of the first two predictors. This separation also permits shape
flexibility to be restricted to scientifically supported effects and
regularized toward a simpler reference-family location--scale model.
Here, shape denotes the standardized distributional, or latent-variable,
shape remaining after location and scale have been separated.

\paragraph{Distributional regression.}
Structured additive distributional regression, including generalized additive
models for location, scale, and shape (GAMLSS), assigns structured additive
predictors to the parameters of a chosen response distribution
\parencite{Rigby2005-GeneralizedAdditiveModels,Klein2015-BayesianStructuredAdditivea,Umlauf2018-BamlssBayesianAdditive,Kneib2019-ModularRegressionLego}.
This provides one coherent conditional distribution, allows covariates to be
assigned selectively to distributional parameters, and supports nonlinear,
spatial, group-specific, and interaction effects. Interpretation can be direct
when the parameters have clear statistical meanings, but the attainable
shapes, support, and tail behavior remain constrained by the selected family.
Additive quantile regression instead targets selected conditional quantiles
without requiring a full parametric response distribution
\parencite{Waldmann2013-BayesianSemiparametricAdditive,Fasiolo2021-FastCalibratedAdditive}.
When quantile levels are fitted separately, the fits do not by themselves
define one likelihood-based conditional density and may cross
\parencite{Bondell2010-NoncrossingQuantileRegression}. Quantile regression is
therefore well suited when particular quantiles are the primary estimands,
whereas our target is a complete conditional distribution with distinct mean,
standard-deviation, and shape effects.

Flexible density-regression methods address this target more directly. The
single-weights Dirichlet-process mixture of
\textcite{Rodriguez-Alvarez2024-DensityRegressionDirichleta} specifies a
coherent conditional density through normal components whose means have
structured additive predictors, while the mixture weights and component
variances are shared across covariate values. Recent additive density regression
instead estimates conditional densities using
additive effects in Bayes Hilbert space
\parencite{Maier2025-AdditiveDensityRegression}. In the mixture model, covariates shift the means of the mixture components, whereas in additive density regression, covariate effects act directly on the entire conditional density. Our proposed
decomposition serves a complementary aim: it separates covariate effects on
the response mean and standard deviation, which retain exact response-scale
interpretations, from effects on standardized shape.

\paragraph{Transformation models.}
Transformation models provide another route to flexible conditional
distributions by mapping the response monotonically to a prespecified reference
distribution. Conditional transformation models let this map depend on
covariates through structured partial transformations, with monotonicity
constraints and smoothness regularization
\parencite{Hothorn2014-ConditionalTransformationModels,Hothorn2018-MostLikelyTransformations,Carlan2024-BayesianConditionalTransformation}.
The deep transformation model of
\textcite{Sick2021-DeepTransformationModels} allows each learnable parameter
of a chained transformation to depend on the inputs through neural networks,
while selected parameters may remain input-independent. The deep conditional
transformation model of
\textcite{Baumann2021-DeepConditionalTransformation} provides a closer
structured comparison. Both its response--covariate interaction, which governs
a finite vector of response-basis parameters, and its transformed-scale shift
may contain structured additive terms, neural-network predictors, or both.
The approach developed here instead targets a specific
response-scale decomposition. Affine standardization makes separate structured
predictors exactly the conditional response mean and standard deviation, while
the finite-vector transformation predictor governs standardized shape only and
is regularized toward a nested reference-family location--scale model.

Distribution-free location--scale regression learns a common monotone
transformation together with covariate-dependent location and scale predictors
for the transformed response
\parencite{Siegfried2023-DistributionfreeLocationscaleRegression}. These
predictors have useful transformation- and probability-scale interpretations
but are not generally the response mean and standard deviation.
Another transformed-response regression approach, presented by \textcite{Kowal2025-MonteCarloInference},
likewise learns a common unknown monotone
transformation jointly with linear, quantile, or Gaussian-process regression
models through Bayesian-bootstrap Monte Carlo.

The direct methodological predecessor
of our approach is the fixed-shape penalized transformation model (PTM) of
\textcite{Brachem2026-BayesianPenalizedTransformation}. It combines structured
additive location and scale predictors with one learned, covariate-invariant
latent-variable distribution and uses Markov chain Monte Carlo (MCMC)
inference. Because all observations share the same standardized latent-variable
shape, a fixed-shape PTM cannot represent, for example, two groups with equal first two
moments but different skewness or modality. Conditional-shape PTMs close this gap by allowing standardized shape to
vary with covariates through structured vector-valued additive terms. In the
remainder, PTM denotes the conditional-shape PTM when the referent is
unambiguous; we use fixed-shape PTM explicitly for its submodel.

\paragraph{Model construction.}
For covariates $\bsx$, a PTM specifies
\begin{equation}
    Y = \mu(\bsx)
    + \sigma(\bsx) h^{-1}\!\left(Z \mid \bsdelta(\bsx)\right),
    \qquad
    Z \sim F_Z,
\end{equation}
where $F_Z$ is a prespecified reference distribution and $h$ is a flexible
monotone transformation. Allowing its parameters $\bsdelta(\bsx)$ to vary with
covariates does not by itself guarantee that
$h^{-1}(Z\mid\bsdelta(\bsx))$ has mean zero and variance one; without further
constraints, nominal shape effects could therefore absorb response-level
location or scale variation. We prevent this through an affine standardization
for every covariate-specific transformation. Consequently, $\mu(\bsx)$ and
$\sigma(\bsx)$ are exactly the conditional mean and standard deviation of $Y$,
while $\bsdelta(\bsx)$ governs only standardized shape.
The particular transformation used here is flexible over a bounded
region but retains the asymptotic tail family of the reference distribution, up
to location and scale.
Because the transformation parameters jointly determine a
standardized distribution, we interpret shape effects through
standardized partial density previews, quantile summaries, and complete
conditional distributions.

We model $\mu(\bsx)$ and $\log\sigma(\bsx)$ with conventional structured
additive predictors. The shape predictor supports the same linear, smooth,
group-specific, spatial, and interaction effects, but each term returns a
finite vector over the ordered transformation parameters. Regularization acts
both over the covariate domain and across these parameters.
The penalties in the transformation predictor shrink unsupported
shape effects toward zero; the nested reference-family location--scale model is reached
in the limit when all terms are zero.
Finite-vector
structured predictors and separable penalties are established in deep
conditional transformation models and functional additive mixed models. In
the latter, the second direction indexes a functional response \parencite{Scheipl2015-FunctionalAdditiveMixed,Greven2017-GeneralFrameworkFunctional}.
In deep conditional transformation models and in our approach, it instead
indexes parameters of a monotone transformation, yielding closely related
covariate-by-transformation constructions \parencite{Baumann2021-DeepConditionalTransformation}.

\paragraph{Stochastic variational inference.}
Covariate-dependent shape adds many transformation coefficients and smoothing
parameters to those of the mean and scale predictors. Rather than extending the
full-data MCMC strategy used for PTMs to this larger model, we therefore use
automatic-differentiation stochastic variational inference
\parencite{Kucukelbir-AutomaticDifferentiationVariational} to achieve scalable
inference. Related variational
methods have been developed for additive and structured additive distributional
regression
\parencite{Kleinemeier2023-ScalableEstimationStructured,
Lichter2024-VariationalInferenceUncertainty,
Callegher2025-StochasticVariationalInferencea}. Our implementation uses a pragmatic Gaussian block structure aligned with the
model's additive building blocks. This choice supports large data sets but is
an approximation: the Gaussian form and independence between blocks cannot in
general reproduce non-Gaussian posterior shapes or arbitrary cross-block
dependence.

\paragraph{Contributions and outline.}
The paper's primary methodological contribution is a PTM in which structured
covariate effects govern standardized shape, while the mean and standard
deviation retain exact response-scale interpretations and the shape predictor
is regularized toward a nested reference-family location--scale model.
Supporting work integrates established finite-vector additive machinery with
the model's ordered standardized-shape parameters, provides model-specific
summaries for nonlinear shape effects, and implements a blocked stochastic
variational fitting strategy. The resulting predictor and inference
implementations are available in Liesel, Liesel-GAM, and Liesel-PTM
\parencite{Riebl2026-LieselPythonFramework,
Brachem2026-LieselgamGeneralizedAdditive,
Brachem2026-LieselptmPenalizedTransformation}.

The empirical studies examine both the benefits and limits of the framework.
In a controlled simulation, the PTM recovers smooth mean and scale effects together
with a covariate-dependent transition from skewness to bimodality and shrinks
unsupported shape effects toward the reference model. Under deliberate
misspecification, it remains predictively competitive with the Dirichlet-process mixture comparator DDPstar,
although the hardest setting exposes undercoverage and limits in recovering
fine features for both models. Applications to $13{,}425$ Norwegian
water-conductivity observations and $1{,}182{,}514$ German daily-temperature
observations demonstrate selective group-specific, seasonal, and spatial shape
effects, practical fitting at two data scales, and better predictive performance
than the fixed-shape PTM and Gaussian location--scale model.

The paper proceeds as follows. \autoref{sec:model} defines the model, the
standardized transformation, and the interpretation of shape effects.
\autoref{sec:mvpred-ptm}
formulates
the finite-vector shape predictor and its regularization, and
\autoref{sec:inference} presents the stochastic variational approximation and
optimization strategy. \autoref{sec:simulations} evaluates recovery,
regularization, prediction under misspecification, and uncertainty calibration.
\autoref{sec:applications} presents the two applications, and
\autoref{sec:discussion} discusses scope, practical limitations, and possible
extensions of conditional-shape PTMs.

\section{Model} \label{sec:model}

For the density construction, we assume that $F_Z$ is an absolutely
continuous reference distribution with density $p_Z$. For the mean--variance
standardization, we additionally require that $F_Z$ is nondegenerate and has
finite first two moments. We use the location--scale-standardized representation, so that
$\bbE(Z)=0$ and $\operatorname{Var}(Z)=1$; any alternative reference is first
put into this form by subtracting its mean and dividing by its standard
deviation.
For a set of observations $\{(y_i, \bsx_i)\}_{i=1}^N$, where $\bsx_i=[x_{i1}, \dots, x_{iM}]^\top$ is a covariate vector, an application of the density transformation theorem to our transformation model gives the conditional density
$$
    p_Y(y_i \mid \bsx_i)
    =
    p_Z\left( h\left( \frac{y_i - \mu(\bsx_i)}{\sigma(\bsx_i)} \mid \bsdelta(\bsx_i) \right) \right)
    \frac{1}{\sigma(\bsx_i)} \,
    h'\left( \frac{y_i - \mu(\bsx_i)}{\sigma(\bsx_i)} \mid \bsdelta(\bsx_i) \right)
    ,
$$
where $p_Z$ is the reference density specified above; all analyses reported here use the standard Gaussian reference.
We incorporate dependence on covariates through conventional univariate structured additive predictors for the location and the natural logarithm of the scale, and through a multivariate structured additive predictor for the transformation function's parameters $\bsdelta(\bsx_i) \in \bbR^D$,
$$
    \begin{alignedat}{4}
        \mu(\bsx_i)         & \quad=\quad{} \eta_{\mu,i}
                            & \quad=\quad{} f^{\mu}_{1}(\bsx_i) + \dots + f^{\mu}_{K_\mu}\!(\bsx_i)
                            & \quad=\quad{} \bfb_\mu(\bsx_i)^\top \bsgamma_\mu                                     \\
        \log \sigma(\bsx_i) & \quad=\quad{} \eta_{\sigma,i}
                            & \quad=\quad{} f^{\sigma}_{1}(\bsx_i) + \dots + f^{\sigma}_{K_\sigma}\!(\bsx_i)
                            & \quad=\quad{} \bfb_\sigma(\bsx_i)^\top \bsgamma_\sigma                               \\
        \bsdelta(\bsx_i)    & \quad=\quad{} \bseta_{\delta,i}
                            & \quad=\quad{} \bsf^{\delta}_{1}(\bsx_i) + \dots + \bsf^{\delta}_{K_\delta}\!(\bsx_i)
                            & \quad=\quad{} \bsGamma_\delta \bfb_\delta(\bsx_i).
    \end{alignedat}
$$
In the following, we omit the observation index $i$ when defining and interpreting the model at a generic covariate value $\bsx$, but retain it when referring to individual observations or aggregating likelihood contributions.
Here, $\bsgamma_\mu \in \bbR^{L_\mu}$ and $\bsgamma_\sigma \in \bbR^{L_\sigma}$ are coefficient vectors, $\bsGamma_\delta \in \bbR^{D \times L_\delta}$ is a coefficient matrix, and $\bfb_{\mu}(\bsx)$, $\bfb_{\sigma}(\bsx)$, and $\bfb_{\delta}(\bsx)$ are the corresponding vectors of basis-function evaluations. We describe the transformation function and the shape predictor in detail in
\autoref{sec:trafo} and
\autoref{sec:mvpred-ptm}, respectively. The univariate predictors are treated as special cases of the multivariate predictor. The number of elements in $\bsdelta(\bsx)$ is a manually chosen hyperparameter that determines the potential flexibility of the transformation function $h$; typical choices range from $D=5$ to $D=20$. The reported sensitivity results suggest that regularization across the elements of $\bsdelta(\bsx)$ reduces sensitivity to $D$ once the transformation is sufficiently flexible. We therefore balance flexibility against computational efficiency when choosing $D$.

\paragraph{Levels and interpretation.}

The model includes three levels, each of which can be viewed from an observational and a generative perspective. They are summarized in \autoref{tab:model-levels}.
\begin{itemize}
    \tightlist
    \item The \textit{response level} $Y$. Our primary interest is the complete
          conditional distribution of $Y$ given $\bsx$, including its expectation,
          standard deviation, quantiles, and shape. The location and scale
          predictors retain their familiar interpretations from a Gaussian
          location--scale model: terms in $\eta_\mu$ shift the conditional
          expectation, while exponentiated terms in $\eta_\sigma$ act
          multiplicatively on the conditional standard deviation. Features beyond
          location and scale can be understood more easily through the
          latent-variable level.

    \item The \textit{latent-variable level} $R$. Because $R$ has conditional mean
          zero and variance one, its density isolates the shape of the conditional
          response distribution from its location and scale. Covariate-dependent
          changes in this shape are mediated through
          $h(R\mid\bsdelta(\bsx))$.
    \item The \textit{reference level} $Z$. The reference level is the anchor for our probability model and the target of regularization. It also serves as the starting point for the generative use of our model.
\end{itemize}

\begin{table}[tb]
    \small
    \centering
    \caption{Summary of levels of analysis in our model.\label{tab:model-levels}}
    \begin{tabularx}{\linewidth}{rcXXcc}
        \toprule
        Level           & Variable & Observational view                       & Generative view                     & $\bbE(\cdot | \bsx)$ & $\operatorname{Var}(\cdot | \bsx)$ \\
        \midrule
        Response        & $Y$      & $Y \mid \bsx\sim F_{Y \mid \bsx}$        & $Y = \mu(\bsx) + \sigma(\bsx) R$    & $\mu(\bsx)$          & $\sigma(\bsx)^2$                   \\
        \addlinespace
        Latent variable & $R$      & $R = \sigma(\bsx)^{-1}(Y - \mu(\bsx))$   & $R = h^{-1}(Z \mid \bsdelta(\bsx))$ & $0$                  & $1$                                \\
        \addlinespace
        Reference       & $Z$      & $Z = h\bigl(R \mid \bsdelta(\bsx)\bigr)$ & $Z \sim F_Z$                        & $0$                  & $1$                                \\
        \bottomrule
    \end{tabularx}
\end{table}

\subsection{Transformation function} \label{sec:trafo}

Throughout this subsection, we suppress the dependence of the transformations and their derived quantities on $\bsdelta$ unless it is relevant to the argument.
We define $h: \bbR \to \bbR$ as a composition
$$
    h(r) = (\tilde h \circ g)(r) = \tilde h(g(r)),
$$
where
$\tilde h$ is the \textit{onion spline} transformation function developed
by \textcite{Brachem2026-DataEfficientGenerativeModeling} and $g$ is a standardization that enforces
the separation of location and scale information from the transformation.
We first review the definition of $\tilde h$ and then describe $g$.

\paragraph{Main transformation.}
The transformation function $\tilde h: \bbR \to \bbR$ is a strictly increasing cubic B-spline on a core subdomain $[a, b]$,
$$
    \tilde h(r) = \sum_{j=1}^J B_j(r) \left(\vartheta_1 + \sum_{j'=2}^j \exp(\vartheta_{j'}) \right), \quad \text{if} \quad r \in [a, b],
$$
and the identity function $\tilde h(r)=r$ for $r \in (-\infty, a) \cup (b, \infty)$. The functions $B_j$ are cubic B-spline basis functions defined on an equally spaced knot sequence $\xi_{-2} < \xi_{-1} < \dots < \xi_{J+1}$, with $\xi = \xi_{j+1} - \xi_j$ denoting the distance between neighboring knots. The boundaries of the core domain, $a$ and $b$, are hyperparameters chosen by the researcher and determine the knot sequence through $\xi_2 = a$ and $\xi_{J-3} = b$.
We enforce the constraints $\tilde h(a)=a$ and $\tilde h(b)=b$ by specifying the parameters $\vartheta_1, \dots, \vartheta_J$ as
\begin{align}
    \vartheta_1 & = \xi_0                                                                                                                                                                  \\
    \vartheta_j & = \log(\xi)                                                                               & \text{for}                      & \quad j \in \{2,3,4\} \cup \{J-2, J-1, J\} \\
    \vartheta_j & = \delta_{j-4} -  \log\bigl( {\textstyle \sum_{d=1}^D} \exp(\delta_d)\bigr) + \log(D \xi)
                & \text{for}                                                                                & \quad j \in \{ 5, \dots, J-3\},
\end{align}
so that we fix seven parameters and compute the remaining ones as functions of $D = J-7$ unconstrained parameters $\bsdelta = [\delta_1, \dots, \delta_D]^\top \in \bbR^D$. This definition has several useful consequences.
The fixed boundary increments make the transition to the identity twice
continuously differentiable. The normalization of the flexible increments
fixes their total rise at $D\xi$; if all elements of $\bsdelta$ are equal, all
increments equal $\xi$ and $\tilde h$ reduces to the identity.

\paragraph{Standardization.}
Applying $\tilde h$ directly induces the unstandardized latent random variable
$\tilde R=\tilde h^{-1}(Z)$ with cumulative distribution function (CDF) $\tilde F_R(r) = F_Z\bigl(\tilde h(r)\bigr)$
and corresponding density $\tilde p_R(r) = p_Z\bigl(\tilde h(r)\bigr)\, \tilde h'(r)$.
The stated reference assumptions and the identity tails of $\tilde h$
ensure that $\tilde R$ is nondegenerate with finite mean and variance; see
\autoref{cor:identity-tail-moments}.
The expectation and variance implied by this density are
\begin{equation} \label{eq:R-expectation-and-var}
    \tilde\mu_R = \int r\times\tilde p_R(r)\,\mathrm dr
    \quad
    \text{and}
    \quad
    \tilde\sigma_R^2 = \int (r-\tilde\mu_R)^2 \times \tilde p_R(r)\,\mathrm dr.
\end{equation}
In general, these quantities do not satisfy $\tilde\mu_R=0$ and $\tilde\sigma_R^2=1$, and thus the direct use of $\tilde h$ does not preserve the central model assumption that $R$ has zero mean and unit variance.
We therefore introduce the standardization map
$$
    g(r)=\tilde\sigma_R\,r+\tilde\mu_R
$$
and apply $h(r) = (\tilde h \circ g)(r)$ instead. The distribution of $R$ then has CDF $F_R(r) = F_Z(h(r))$, and $R$ satisfies $\bbE(R)=0$ and $\operatorname{Var}(R)=1$ by construction.
Although the integrals in \eqref{eq:R-expectation-and-var} are not readily available analytically, we can approximate them efficiently by piecewise Gauss--Legendre quadrature between the outer spline knots. By \autoref{prop:identity-tails}, the omitted contributions come only from
the reference tails. For the symmetric bounds and standard Gaussian reference
used here, these contributions cancel for the mean and are negligible for the
variance.

\autoref{fig:onion-spline-illustration} illustrates the unstandardized and standardized transformations and the corresponding latent-variable densities for two draws from the onion-spline parameter distribution.
The transformation panels can be read like quantile--quantile (Q--Q) plots with reversed axes: at probability level $u$, the plot shows $(Q_R(u),Q_Z(u))$, whereas a conventional Q--Q plot shows $(Q_Z(u),Q_R(u))$. Since $F_R(r)=F_Z(h(r))$, a transformation function evaluation that lies above the identity indicates that more cumulative probability is placed below $r$ than under the reference distribution, and a transformation function evaluation below the identity indicates the opposite. Thus, for negative $r$, a curve above the identity indicates more left-tail mass than the reference, whereas for positive $r$, a curve below the identity indicates more right-tail mass; the opposite deviations indicate less mass in the respective tail.

\begin{figure}[tb]
    \centering
    \includegraphics[width=\linewidth]{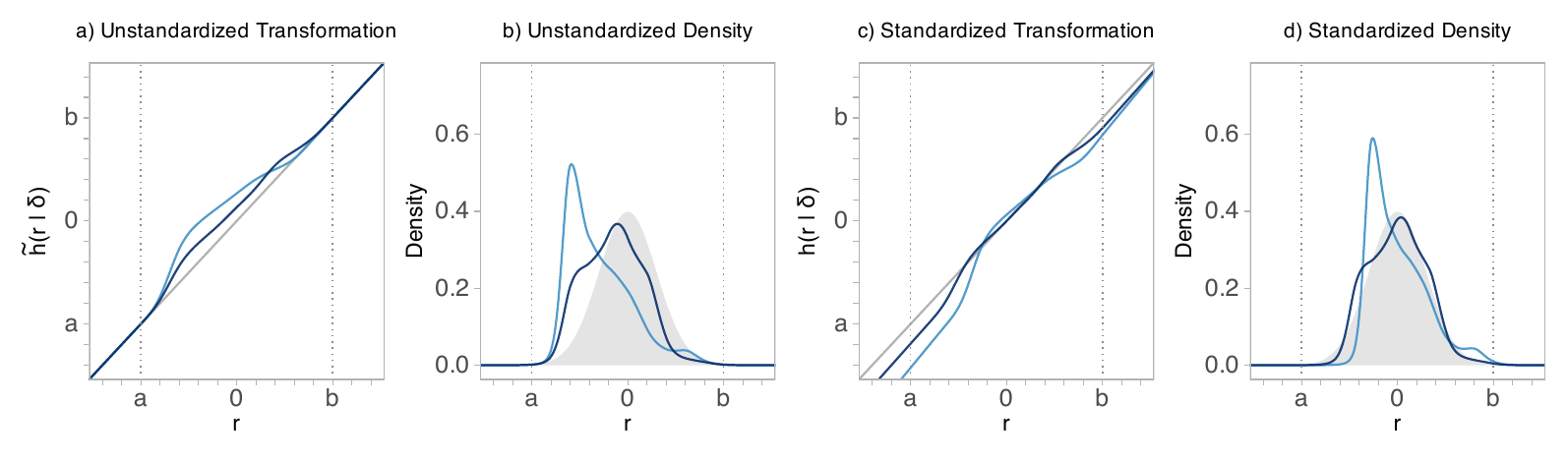}
    \caption{
        Illustration of unstandardized and standardized onion-spline transformations
        and their induced latent-variable densities. The two colored curves in each
        panel correspond to two different sets of $\bsdelta$ parameters. The left pair shows the unstandardized transformation and
        density, while the right pair shows the corresponding transformation and
        density after the mean-scale standardization. The gray line denotes the
        identity transformation or the standard Gaussian reference density, and
        the dotted lines mark the boundaries $a$ and $b$ of the spline core.
    }
    \label{fig:onion-spline-illustration}
\end{figure}

\paragraph{The reference distribution as a base model.}
If all elements of $\bsdelta$ are equal, that is, $\bsdelta=c\bfone$, then
$\tilde h$ is the identity transformation and $\tilde R=Z$.
Because the reference distribution has mean zero and unit variance by assumption, the
subsequent standardization has no effect, and hence $R=Z$ as well.
The reference distribution can thus be viewed as a base model for $R$ that is reached when $\bsdelta=c\bfone$ for any $c \in \bbR$.
More generally,
the normalization of the flexible increments is invariant to a shared additive shift,
so $\bsdelta$ and $\bsdelta+c\bfone$ define the same transformation for every
$\bsdelta\in\bbR^D$ and $c\in\bbR$. A common shift is therefore
unidentified, which motivates us to impose a sum-to-zero constraint on
the shape predictor in \autoref{sec:mvpred-ptm}. We use the base-model property to set
up penalties that shrink differences among the free transformation parameters
$\bsdelta$, so that our model includes regularization toward the base model.
This property is shared with location--scale PTMs and scalable composite transformations
\parencite{Brachem2026-BayesianPenalizedTransformation,Brachem2026-DataEfficientGenerativeModeling}, where it was presented more formally.

The transformation and standardization machinery described in
this section is implemented in the Python library Liesel-PTM
\parencite{Brachem2026-LieselptmPenalizedTransformation}.

\subsection{Tail behavior}
The following proposition and corollary describe the relevant tail properties
and their implications for the existence of moments.

\begin{proposition}[Tail behavior]
    \label{prop:identity-tails}
    Let $\tilde R$ have CDF
    $\tilde F_R(r)=F_Z\bigl(\tilde h(r)\bigr)$ and density $\tilde p_R$. Then, for
    $r\in(-\infty,a]\cup[b,\infty)$, we have $\tilde F_R(r)=F_Z(r)$ and $\tilde p_R(r)=p_Z(r)$.
\end{proposition}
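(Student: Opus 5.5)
The argument reduces to showing that $\tilde h$ equals the identity on the closed tail sets $(-\infty,a]\cup[b,\infty)$, together with its derivative. On the open tails $(-\infty,a)$ and $(b,\infty)$ this holds by definition, $\tilde h(r)=r$. At the endpoints we use the enforced constraints $\tilde h(a)=a$ and $\tilde h(b)=b$. These constraints follow from the parameterization: $\vartheta_1=\xi_0$, the fixed boundary increments $\exp(\vartheta_j)=\xi$ for $j\in\{2,3,4\}\cup\{J-2,J-1,J\}$, and the normalization of the flexible increments to total rise $D\xi$. So $\tilde h(r)=r$ on the whole closed tail set. Substituting into $\tilde F_R(r)=F_Z(\tilde h(r))$ then gives $\tilde F_R(r)=F_Z(r)$ there.

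For the density, we first verify that $\tilde h$ is continuously differentiable across $a$ and $b$ with $\tilde h'(a)=\tilde h'(b)=1$. The paper asserts twice continuous differentiability of the transition, and we reproduce the key step. For a cubic B-spline with coefficients $c_j$, the derivative is a quadratic spline with coefficients $(c_j-c_{j-1})/\xi=\exp(\vartheta_j)/\xi$. The coefficients whose quadratic basis functions are supported at $a=\xi_2$ are exactly the fixed increments $j\in\{2,3,4\}$, each equal to $\xi/\xi=1$, so $\tilde h'(a)=1$. By the same argument with $j\in\{J-2,J-1,J\}$ at $b=\xi_{J-3}$, we get $\tilde h'(b)=1$. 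On the open tails $\tilde h'(r)=1$ trivially. Hence for every $r$ in the tail set, $\tilde p_R(r)=p_Z(\tilde h(r))\,\tilde h'(r)=p_Z(r)$.

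The main obstacle is the index bookkeeping. One has to confirm which basis functions are nonzero at the knots $\xi_2$ and $\xi_{J-3}$, and that the spline value there equals the knot location given $\vartheta_1=\xi_0$. For the value at $a$, the three nonzero cubic B-splines have coefficients $\xi_0$, $\xi_0+\xi$, $\xi_0+2\xi$, and B-spline weights $1/6,\,4/6,\,1/6$. This gives $\tilde h(a)=\xi_0+\xi=\xi_1$. That does not match $\xi_2$, which means the index offsets of the basis must be read from the paper's convention for $B_j$. We would check this alignment explicitly using the reproduction of linear functions by B-splines: with all increments equal to $\xi$, the spline is the identity. Because the boundary coefficients coincide with those of that identity spline, the endpoint values and endpoint derivatives match the identity as well.

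Finally, we note that if $\tilde p_R$ is taken as a version of the density, equality at the two single points $a$ and $b$ matters only up to null sets. Still, the continuity argument above gives equality pointwise.
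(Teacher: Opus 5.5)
Your proposal is correct and follows the paper's proof, which simply substitutes $\tilde h(r)=r$ and $\tilde h'(r)=1$ on the tails into $\tilde F_R(r)=F_Z(\tilde h(r))$ and $\tilde p_R(r)=p_Z(\tilde h(r))\,\tilde h'(r)$; your treatment of the endpoints $a$ and $b$, via the enforced constraints and the unit slope of the fixed boundary increments, makes explicit what the paper's ``outside $[a,b]$'' leaves implicit. The mismatch you found, $\tilde h(a)=\xi_1$, is only an index-offset slip: with the $J+4$ knots $\xi_{-2},\dots,\xi_{J+1}$, the basis function $B_j$ is supported on $[\xi_{j-3},\xi_{j+1}]$, so the cubic bases that are nonzero at $\xi_2$ are $B_2,B_3,B_4$, with coefficients $\xi_1,\xi_2,\xi_3$ and weights $1/6,4/6,1/6$, which gives $\tilde h(a)=\xi_2=a$ directly.
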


\begin{proof}
    Outside $[a,b]$, $\tilde h(r)=r$ and $\tilde h'(r)=1$. Substitution into
    the CDF and density of $\tilde R$ gives the result.
\end{proof}

The onion spline therefore changes the reference distribution only on a
bounded interval. Because $\tilde h$ is continuous, strictly increasing, and
equals the identity in both tails, it has a unique inverse for every real
value, and $\tilde p_R$ integrates to one. This directly implies that
$\tilde R$ and $R$ have finite moments whenever $Z$ does:

\begin{corollary}[Moment existence of the latent variables]
    \label{cor:identity-tail-moments}
    Under the conditions of Proposition~\ref{prop:identity-tails}, for every
    $q>0$,
    $$
        \bbE\bigl(|Z|^q\bigr)<\infty
        \quad\Longleftrightarrow\quad
        \bbE\bigl(|\tilde R|^q\bigr)<\infty.
    $$
    If, in addition, $Z$ has finite variance and
    $\tilde\sigma_R>0$, then $R=(\tilde R-\tilde\mu_R)/\tilde\sigma_R$ is
    well-defined and
    $$
        \bbE\bigl(|Z|^q\bigr)<\infty
        \quad\Longleftrightarrow\quad
        \bbE\bigl(|\tilde R|^q\bigr)<\infty
        \quad\Longleftrightarrow\quad
        \bbE\bigl(|R|^q\bigr)<\infty.
    $$
\end{corollary}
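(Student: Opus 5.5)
The plan is to split the integral defining $\bbE(|\tilde R|^q)$ into the bounded core $[a,b]$ and the two tails, and to treat each piece separately. By Proposition~\ref{prop:identity-tails}, on $(-\infty,a]\cup[b,\infty)$ the density $\tilde p_R$ coincides with $p_Z$. Hence
$$
    \bbE\bigl(|\tilde R|^q\bigr)
    = \int_{a}^{b} |r|^q \tilde p_R(r)\,\mathrm dr
    + \int_{\bbR\setminus[a,b]} |r|^q p_Z(r)\,\mathrm dr ,
$$
and the same split is available for $\bbE(|Z|^q)$, with $p_Z$ in place of $\tilde p_R$ on the core.

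The core contributions are finite in both cases. On $[a,b]$ we have $|r|^q\le \max(|a|,|b|)^q$, and both $\tilde p_R$ and $p_Z$ are probability densities. So each core integral is at most $\max(|a|,|b|)^q$ times a probability. Here we use that $\tilde p_R$ integrates to one, which the paper notes after Proposition~\ref{prop:identity-tails}. The two $q$-th moments therefore differ only by finite quantities, and the tail integral is common to both. This proves the first equivalence, and it holds for every $q>0$.

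For the second statement, assume $Z$ has finite variance. The first part with $q=1,2$ then gives finite $\tilde\mu_R$ and $\tilde\sigma_R^2$. Together with the assumption $\tilde\sigma_R>0$, this makes $R=(\tilde R-\tilde\mu_R)/\tilde\sigma_R$ a well-defined affine image of $\tilde R$. It remains to compare the moments of $R$ and $\tilde R$. I will use the elementary inequality $|u+v|^q\le c_q(|u|^q+|v|^q)$ with $c_q=\max(1,2^{q-1})$, valid for all $q>0$. Applied to $R$ it gives
$$
    |R|^q\le c_q\,\tilde\sigma_R^{-q}\bigl(|\tilde R|^q+|\tilde\mu_R|^q\bigr),
$$
and applied to the inverse relation $\tilde R=\tilde\sigma_R R+\tilde\mu_R$ it gives the analogous bound in the other direction. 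Finiteness of $\bbE(|\tilde R|^q)$ is therefore equivalent to finiteness of $\bbE(|R|^q)$. Chaining this with the first part yields the three-way equivalence.

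There is no real obstacle in this argument. The only delicate point is the case $0<q<1$, where Minkowski's inequality fails. I handle it by using the constant $c_q=1$, which comes from the subadditivity of $t\mapsto t^q$ on $[0,\infty)$ for $q\le 1$.
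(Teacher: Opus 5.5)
Your proof is correct and follows the paper's route. The paper uses three steps: identical tail contributions by Proposition~\ref{prop:identity-tails}, finite contributions on the bounded core $[a,b]$, and preservation of absolute-moment existence under the nondegenerate affine map $\tilde R\mapsto R$. You make explicit what the paper states in one line each: the bound $\max(|a|,|b|)^q$ on the core, and the $c_q$-inequality for the affine step, including the subadditive case $0<q<1$.
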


\begin{proof}
    By Proposition~\ref{prop:identity-tails}, $\tilde R$ and $Z$ have identical
    tail contributions to their absolute moments, while their contributions on
    the bounded interval $[a,b]$ are finite. This proves the first equivalence.
    In particular, finite variance of $Z$ implies that the moments required to
    standardize $\tilde R$ exist for every $\bsdelta$. The second equivalence
    then follows because a nondegenerate affine transformation preserves the
    existence of absolute moments.
\end{proof}

The construction does not change which absolute moments exist. Thus, the
moments needed for standardization exist whenever $Z$ has finite variance.
Identity tails are sufficient but not necessary for this guarantee; their
stronger benefit is well-controlled extrapolation on the unstandardized scale.

\paragraph{Tails after standardization.}
After standardization, the tails are no longer identical to the reference density
at the same value of $r$. However, the standardized distribution retains the reference tail family up to an
affine transformation. In particular,
\begin{equation}
    \bbP(R\leq a^*)=F_Z(a),
    \qquad
    \bbP(R\geq b^*)=1-F_Z(b),
    \label{eq:standardized-reference-tail-probabilities}
\end{equation}
where $a^* = \tilde\sigma_R^{-1}(a-\tilde\mu_R)$ and $b^* = \tilde\sigma_R^{-1}(b-\tilde\mu_R)$.
This is useful when $[a,b]$ covers the part of the
distribution whose shape is most strongly informed by the data and the reference
distribution provides a sensible model for more extreme outcomes. It is
restrictive when the asymptotic tails are themselves of interest. With a
Gaussian reference, for example, the model cannot learn polynomial tails beyond
$a^*$ and $b^*$. Because $a^*$ and $b^*$ depend on
$\bsdelta$ through $\tilde\mu_R$ and $\tilde\sigma_R$, the reference-tail
regime begins at shape-dependent rather than fixed thresholds. The spline can
therefore remain flexible over a finite, data-relevant tail range, although
its asymptotic tail class remains fixed.

These observations also guide the practical choice of $a$, $b$, and $D$.
The interval $[a^*,b^*]$ always contains the probability mass
$F_Z(b)-F_Z(a)$, so choosing $a$ and $b$ as reference quantiles directly
determines how much of the distribution is modeled flexibly. With a standard
Gaussian reference, $a=-4$ and $b=4$ place approximately $99.994\%$ of the
probability mass in the flexible region, making this choice a reasonable default.
The bounds may be chosen asymmetrically if additional flexibility is needed in
one tail, and Q--Q plots of the transformed residuals can indicate whether
either bound should be extended because the reference-tail regime begins too
early.

The width of the flexible region should be considered jointly with $D$: at
fixed $D$, widening $[a,b]$ reduces the local resolution of the transformation
function. Standardization can amplify this effect because
$b^*-a^*=(b-a)/\tilde\sigma_R$, thereby spreading the same $D$ parameters over
a wider interval on the scale of $R$. A larger $D$ may then be useful. On the
other hand, the number of coefficients in a covariate-dependent shape predictor
grows linearly with $D$; for example, $\bsGamma_\delta$ contains
$D L_\delta$ coefficients before constraints. We therefore use $D=10$ as a
modest default that offers useful local resolution while limiting the
computational and estimation burden. Regularization guards against unnecessary
complexity, and larger values can be used when the chosen bounds or diagnostic
checks indicate that more flexibility is needed.

\subsection{Interpreting additive effects}
\label{sec:interpretation}

The model uses three additive predictors, but addition acts on different scales.
The standardization in \autoref{sec:trafo} gives the location and scale
parameters direct interpretations because the latent variable $R$ has zero mean
and unit variance. In particular, we have
\begin{equation}
    \bbE(Y\mid\bsx)=\mu(\bsx),
    \qquad
    \operatorname{SD}(Y\mid\bsx)=\sigma(\bsx).
\end{equation}
An additive change of $c$ in $\mu(\bsx)$ shifts the conditional distribution by
$c$, while the same change in $\log\sigma(\bsx)$ multiplies the conditional
standard deviation by $\exp(c)$. Addition in the shape predictor
$\bsdelta(\bsx)$ has no analogous immediate interpretation on the response
scale. While examining its effect in this section, we suppress the dependence on $\bsx$
and first consider the raw transformation, since the standardization $g$ leaves the
shape of the density unchanged.

Because $\tilde h(a)=a$ and $\tilde h(b)=b$, the transformation's total rise over
the core interval is fixed at $b-a$; $\bsdelta$ only redistributes this increment
budget. We can therefore direct our attention to a relative increment profile
\begin{equation} \label{eq:interpretation-profile}
    \pi_d
    =
    \frac{\exp(\delta_d)}
    {\sum_{k=1}^{D}\exp(\delta_k)},
    \qquad
    d=1,\dots,D,
\end{equation}
where $\pi_d$ is the share of the total rise over $[a,b]$ assigned to flexible
increment $d$ and $\sum_{d=1}^{D}\pi_d=1$. The raw transformation slope is a
smoothed version of this increment profile, normalized so that the uniform
profile yields unit slope. Although the full
profile determines the transformation and hence the induced distribution, an
individual profile entry still has no isolated density-level interpretation. By the
change-of-variables identity in \autoref{sec:trafo}, the full profile affects the
density through both the transformation value $\tilde h(r)$, which determines where the reference density is evaluated, and the local slope $\tilde h'(r)$, which scales the density evaluation.
These effects can reinforce or offset one another.

Now suppose that a baseline shape parameter vector $\bsdelta_a$ and a perturbation
$\bsdelta_b$ combine as $\bsdelta_c=\bsdelta_a+\bsdelta_b$. For
$v\in\{a,b,c\}$, let $\bspi_v$ be the profile obtained from $\bsdelta_v$
through \eqref{eq:interpretation-profile}. Addition of the parameter vectors becomes
pointwise multiplication of their profiles followed by normalization:
\begin{equation}
    \pi_{c,d}
    =
    \frac{\pi_{a,d}\pi_{b,d}}
    {\sum_{k=1}^{D}\pi_{a,k}\pi_{b,k}},
    \qquad d=1,\dots,D.
    \label{eq:interpretation-product-pooling}
\end{equation}
The pooling identity in \eqref{eq:interpretation-product-pooling} reinforces
shared emphasis, whereas a small share in either profile suppresses the
corresponding increment. A uniform perturbation profile is
neutral, while $\bsdelta_b=-\bsdelta_a+c\bfone$, where $c$ can be zero or any other real-valued scalar, makes
$\bsdelta_c$ constant and recovers the uniform reference profile. The pooling
identity and exact expressions for the induced changes in the raw transformation
and log density are derived in \autoref{app:interpretation-details}.
\autoref{fig:interpretation-additive-perturbation} traces one reinforcing
perturbation from the shape parameter vectors through the increment profiles to
the raw transformations and densities.

\begin{figure}[tb]
    \centering
    \includegraphics[width=\linewidth]{
        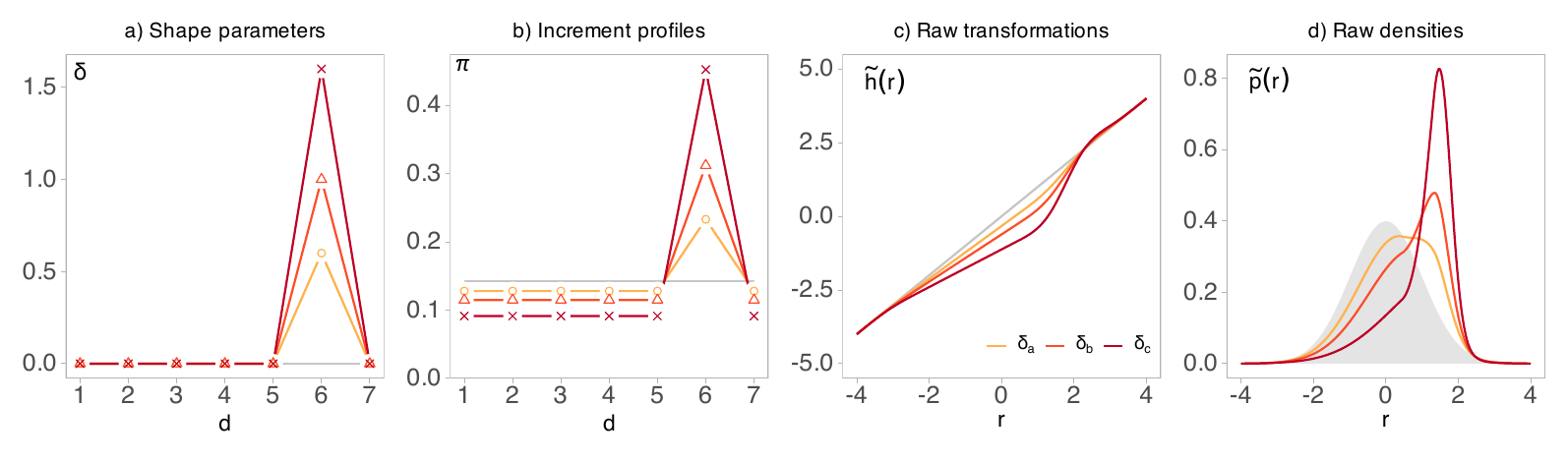
    }
    \caption{
        End-to-end illustration of an additive shape perturbation. The combined
        shape parameter vector satisfies
        $\bsdelta_c=\bsdelta_a+\bsdelta_b$. Panels~a) and b) show the parameter
        vectors and their relative increment profiles, respectively; in
        panel~b), $\bspi_c$ is the normalized pointwise product of $\bspi_a$ and
        $\bspi_b$. Panels~c) and d) show the resulting raw transformations and
        densities. The semitransparent gray elements mark the neutral parameter
        vector, uniform profile, identity transformation, and standard Gaussian
        density, as appropriate.
    }
    \label{fig:interpretation-additive-perturbation}
\end{figure}

The figure also shows the distribution induced by the $\bsdelta_b$ example in isolation.
This density should be understood only as an interpretive aid: the perturbation enters the model through the shape parameters, not as a separate density that is combined with the baseline density.
In particular, $p_c$ is not a mixture, convolution, or normalized pointwise product
of $p_a$ and $p_b$. The exact combination occurs on the increment-profile
scale; the resulting density is obtained only after constructing and
standardizing the combined transformation.

\autoref{fig:interpretation-qualitative-examples} shows four examples illustrating how the isolated perturbation density qualitatively previews a perturbation's effect. Shared
asymmetry is reinforced in panel~a), whereas opposing asymmetries balance in
panel~b). Distinct unimodal and bimodal contributions combine in panel~c), and
the same isolated bimodal perturbation exactly cancels its inverse in
panel~d). Thus, the perturbation density can indicate the type of feature being
favored, while the combined density remains conditional on the baseline.

\begin{figure}[tb]
    \centering
    \includegraphics[width=\linewidth]{
        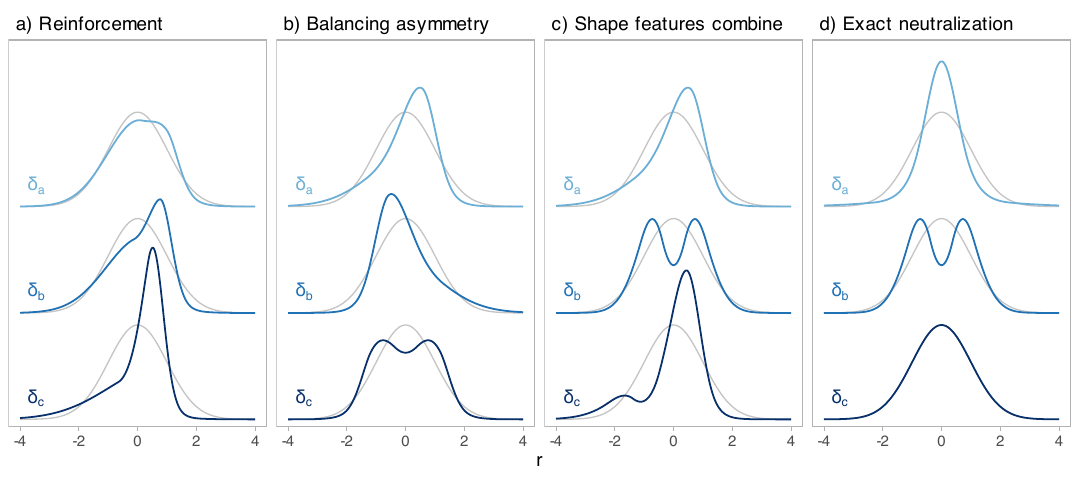
    }
    \caption{
        Illustrative density-level consequences of profile pooling. Within
        each panel, the densities induced by $\bsdelta_a$, $\bsdelta_b$, and
        $\bsdelta_c$ are shown on vertically shifted rows, with the standard
        Gaussian reference repeated in gray at each level. Panel~a)
        shows reinforcement of a shared asymmetric feature. In panel~b),
        mirror-image profiles balance overall skewness but retain two regions
        of concentration. Panel~c) combines an asymmetric unimodal shape
        contribution with a symmetric bimodal contribution. Panel~d) uses
        the bimodal $\bsdelta_b$ from panel~c) together with
        $\bsdelta_a=-\bsdelta_b$, so $\bsdelta_c=\bfzero$ and $p_c$ is exactly
        the reference density. In every panel, $p_c$ is induced by
        $\bsdelta_c=\bsdelta_a+\bsdelta_b$, and all densities have mean zero and
        unit variance. The vertical shifts are only for display.
    }
    \label{fig:interpretation-qualitative-examples}
\end{figure}

Together, the exact profile identity and the examples delimit the
interpretation of additive shape effects. Additivity is exact at the
increment-profile level, whereas its standardized density consequences are
nonlinear and baseline dependent.

\FloatBarrier

\section{The multivariate shape predictor in penalized transformation models} \label{sec:mvpred-ptm}

We define a $D$-dimensional multivariate structured additive predictor
$\bseta \in \bbR^{D}$ as a sum of covariate functions with a multivariate
codomain:
\begin{align}
    \dimann{(D \times 1)}{\bseta} \quad = \quad
    \dimann{(D \times 1)}{\bsf_1(\bsx)} + \cdots +
    \dimann{(D \times 1)}{\bsf_K(\bsx)} \quad=\quad
    \dimann{(D \times L)}{{\bsGamma \vphantom{(_i)}}}
    \dimann{(L \times 1)}{\bfb(\bsx)}.
\end{align}
For term $k \in \{1, \dots, K\}$, the $(L_k \times 1)$ vector of
basis function evaluations is given by $\bfb_k(\bsx)$; $L = L_1 + \cdots +
L_K$ denotes the sum of basis dimensions over all terms.

\paragraph{Prior specification.}
Let $\bsGamma_k$ denote the slice of $\bsGamma$ that contains the
$L_k$ columns corresponding to term $k$, such that
$\bsf_k(\bsx) = \bsGamma_k\bfb_k(\bsx)$. We can then collect all coefficients
of term $k$ across the predictor's $D$ dimensions in a vector:
$$
    \dimann{(DL_k \times 1)}{\bsgamma_k} =
    \operatorname{vec}(
    \dimann{(D \times L_k)}{\bsGamma_k\vphantom{_k}}) = [
    \gamma^{(k)}_{11}, \dots, \gamma_{D1}^{(k)}, \dots, \gamma^{(k)}_{1L_k}, \dots, \gamma_{DL_k}^{(k)}
    ]^\top.
$$
We define a potentially rank-deficient multivariate Gaussian prior
for $\bsgamma_k$ with unnormalized density
\begin{equation}
    \label{eq:mvstar-prior}
    p(\bsgamma_k | \bstau^2_k, \psi^2_k) \propto
    \operatorname{pdet}\bigl(\bfP_k\bigr)^{\nicefrac{1}{2}}
    \exp \left(
    - \frac{1}{2}
    \bsgamma_k^\top
    \bfP_k
    \bsgamma_k
    \right),
\end{equation}
where we write $\operatorname{pdet}(\bfP_k)$ for the product of the
nonzero eigenvalues of $\bfP_k$.
The dependence on $\bstau^2_k \in \bbR^{M_k}_{>0}$ and $\psi^2_k \in \bbR_{>0}$ is notationally suppressed on the right-hand side and arises through the potentially rank-deficient positive-semidefinite precision matrix $\bfP_k$, given by
$$
    \dimann{(DL_k \times DL_k)}{\bfP_k} =
    \bfP_k(\bstau^2_k, \psi^2_k) =
    \biggl(
    \dimann{(L_k \times L_k)}{\bfP^{(x)}_{k}(\bstau^{2}_{k}) }
    \otimes \bfI_D\biggr) +
    \biggl(
    \bfI_{L_k} \otimes
    \dimann{(D \times D)}{\bfP^{(d)}(\psi^{2}_{k})}
    \biggr),
$$
where $\bfI_{D}$ denotes the $(D\times D)$ identity matrix and
$\otimes$ denotes the Kronecker product.
The first component of $\bfP_k$ governs within-dimension
regularization in the covariate directions, with the variances in
$\bstau^2_k$ controlling its strength. The second component governs
between-dimension regularization through
$\bfP^{(d)}(\psi^{2}_{k}) = \psi^{-2}_k \bfK_k^{(d)}$; smaller values of $\psi^2_k$
imply stronger regularization across the $D$ predictor dimensions.

Finite-vector structured additive predictors and two-direction
penalties are established in deep conditional transformation models and
functional additive mixed models
\parencite{Baumann2021-DeepConditionalTransformation,Scheipl2015-FunctionalAdditiveMixed,Greven2017-GeneralFrameworkFunctional}.
We adapt this machinery to ordered standardized-shape parameters and actively
use between-dimension regularization to shrink unsupported effects toward a
nested reference-family location--scale model.
The general predictor construction described here is implemented in
Liesel-GAM \parencite{Brachem2026-LieselgamGeneralizedAdditive}.

Linear constraints can be applied along both axes of $\bsGamma_k$. A
within-dimension constraint has the form
$\bsGamma_k\bfA_k^{(x)\top}=\bfzero$, whereas a
cross-dimension constraint has the form
$\bfA^{(d)}\bsGamma_k=\bfzero$. We impose both through
null-space reparameterizations; full details are given in
\autoref{app:mv-star-predictors}.

\paragraph{Between-dimension first-order random-walk penalty.}

For between-dimension regularization of the terms in $\bseta_{\delta}$, we choose a first-order random-walk penalty as described in \autoref{app:mvterm-bases-and-penalties}; that is, $\bfK_k^{(d)} = \bfD^\top_1 \bfD_1$ for all $k=1, \dots, K_{\delta}$, where $\bfD_1$ denotes the $(D-1) \times D$ first-difference matrix. This penalty discourages large differences between neighboring shape coefficients and thus abrupt changes in the corresponding increment profile, transformation slope, and induced density. In the absence of information supporting a more specific structure, regularization toward such smoothly varying distributional shapes provides a sensible default. As $\psi_k^2 \to 0$, the coefficients of term $\bsf_k(\bsx)$ approach a common value; the sum-to-zero constraint introduced below forces this value to zero, so the term is removed from the shape predictor. If all shape terms reach this limit, the transformation becomes the identity and the latent-variable density reduces to the reference density. The reported Scenario B comparison shows little change between $D=10$ and $D=20$, but the application comparisons show that predictive criteria can vary across $D$ and core choices. The penalty may therefore reduce, but does not eliminate, sensitivity to $D$, especially because large $D$ must be balanced against the added computational demand. Because the number of coefficients in every shape term grows linearly with $D$, we use $D=10$ as a modest default. The center and right columns in the second row of \autoref{fig:mvterm-regularization-comparison} illustrate how decreasing $\psi_k^2$ suppresses variation among the densities induced by a covariate-dependent multivariate penalized B-spline (P-spline). Using an intercept-only shape predictor, \autoref{fig:onion-spline-prior} shows that $\psi^2=0.25$ permits substantially greater variation in the transformation and density, particularly away from the center, whereas draws for $\psi^2=0.01$ concentrate more closely around the identity transformation and reference density.

\begin{figure}[tb]
    \centering
    \includegraphics[width=\linewidth]{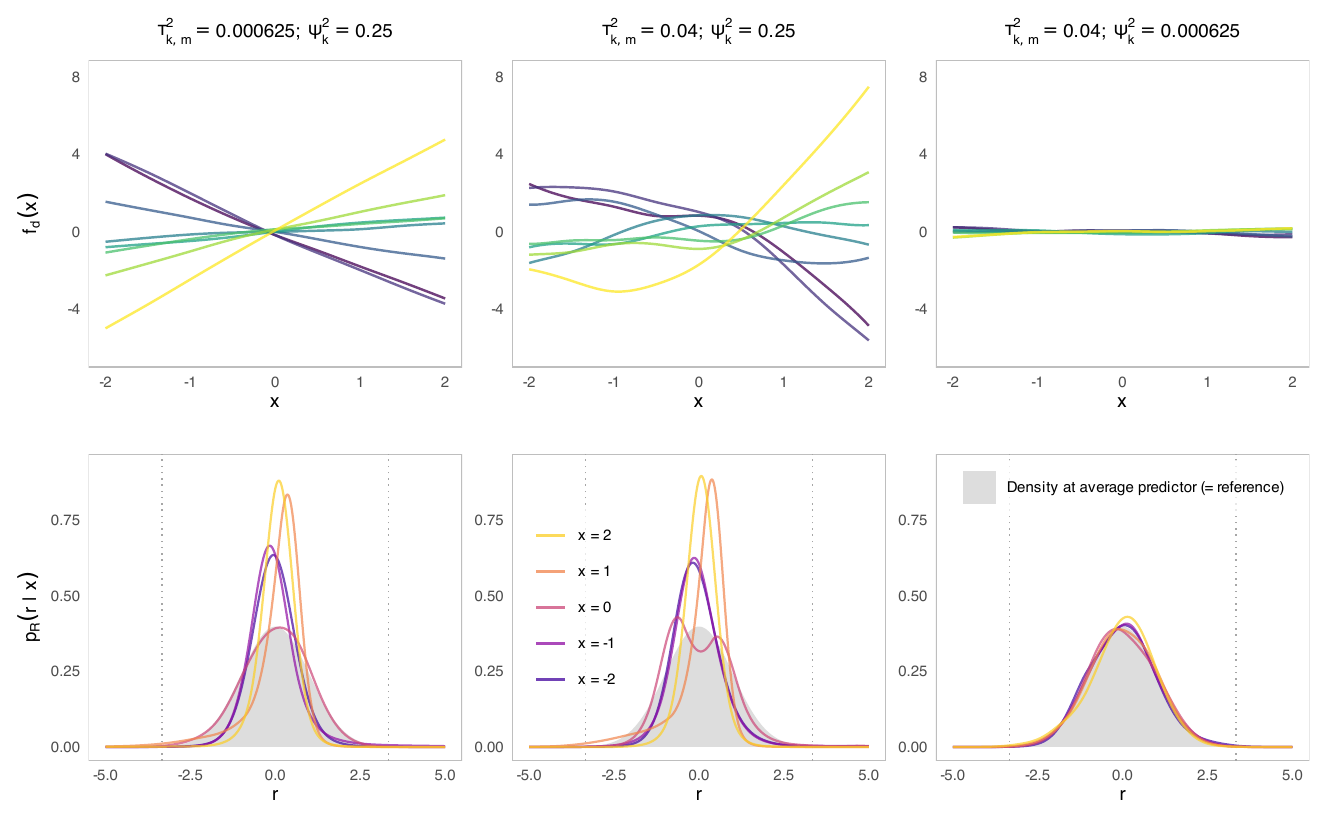}
    \caption{Complementary effects of the within-dimension variance $\tau_{k,m}^2$ and the first-order random-walk between-dimension variance $\psi_k^2$ for a multivariate P-spline with $D=8$. The center column is the common baseline with $\tau_{k,m}^2=0.04$ and $\psi_k^2=0.25$; the left column decreases $\tau_{k,m}^2$ to $0.000625$, and the right column decreases $\psi_k^2$ to $0.000625$. The prior is parameterized in the $(D-1)$-dimensional constrained space, so the functions in the first row sum to zero across dimensions at every $x$. The second row shows the corresponding standardized latent-variable densities at $x\in\{-2,-1,0,1,2\}$. The gray area shows the density evaluated at the empirical average of the term predictor over $x$. Because the term is centered, this average predictor is zero and therefore yields the chosen reference density, which is standard Gaussian in this illustration; it is not the average of the colored conditional densities. Dotted lines mark the onion-spline boundaries $a=-10/3$ and $b=10/3$.}
    \label{fig:mvterm-regularization-comparison}
\end{figure}

\paragraph{Between-dimension sum-to-zero constraint.}
As noted in \autoref{sec:trafo}, the onion-spline transformation is invariant to a shared shift of its shape parameters: $\bsdelta$ and $\bsdelta+c\bfone$ yield the same normalized increments and therefore the same transformation and density. We remove this unidentified constant using the constraint framework above, choosing $\bfA^{(d)}=\bfone^\top$ and imposing $\bfA^{(d)}\bsGamma_k=\bfzero_{1\times L_k}$ for every shape term $k$. Consequently, $\bfone^\top\bseta_\delta(\bsx)=0$ for the complete shape predictor. Because every equivalence class generated by a shared shift contains exactly one zero-sum representative, this constraint ensures identifiability without restricting the induced distributions.

\begin{figure}[tb]
    \centering
    \includegraphics[width=\linewidth]{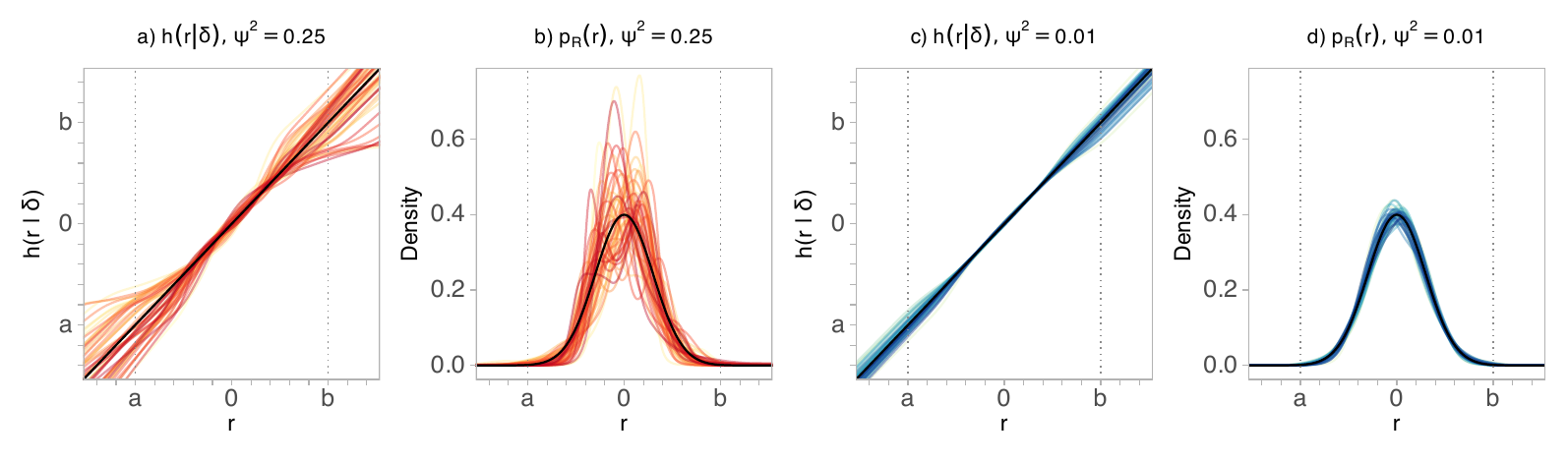}
    \caption{
        Prior samples for the standardized onion-spline transformation with
        $D=8$ and the induced latent-variable density. Panels (a) and (b) show 50 standardized
        transformations and the corresponding latent-variable densities for
        $\psi^2=0.25$; panels (c) and (d) show the same quantities for
        $\psi^2=0.01$.
        The black lines denote the identity transformation or the standard
        Gaussian reference density, and the dotted lines mark the boundaries
        $a=-10/3$ and $b=10/3$ of the spline core.
    }
    \label{fig:onion-spline-prior}
\end{figure}

\paragraph{Hyperprior for the between-dimension smoothing variance $\psi^2$.}
Following the scale-dependent prior construction of \textcite{Klein2016-ScaledependentPriorsVariance}, we assign each between-dimension smoothing variance a $\operatorname{Weibull}(1/2,\theta)$ prior, which favors shrinkage toward the reference model while retaining a right tail that permits larger deviations. We do not currently calibrate $\theta$ to a prespecified effect-scale probability as in the full scale-dependent construction. For the shape effects considered here, we instead use the pragmatic fixed choice $\theta=0.05$, guided by previous empirical work. To assess its implications, we drew $5{,}000$ samples from the prior for an intercept-only shape predictor with $D=10$, using the scaled first-order random-walk penalty and the
sum-to-zero constraint. For each draw, we standardized the implied density and
computed its total variation distance from the Gaussian base density,
$\operatorname{TV}(f,f_0)=\frac{1}{2}\int |f(r)-f_0(r)|\,\mathrm{d}r$. The
median distance was $0.069$, corresponding to 93.1\% density overlap; the 90th,
95th, and 99th percentiles were $0.203$, $0.258$, and $0.370$, respectively.
Only two of the $5{,}000$ draws exceeded a distance of $0.62$. Thus, for the shape
intercept, the prior is skeptical about large deviations from the base
distribution but retains a tail that permits appreciable deviations. This
prior-predictive check characterizes the implemented choice rather than
providing a formal calibration, and its numerical conclusions need not transfer
unchanged to covariate-dependent shape effects.

A more directly interpretable specification would select $\theta$ under the
actual basis and penalty scaling so that the prior probability of exceeding a
prespecified distributional deviation is controlled, following the
prior-predictive strategy of
\textcite{Brachem2026-BayesianPenalizedTransformation}. For a
covariate-dependent shape effect, this calibration would evaluate the induced
density changes over the relevant covariate domain and account jointly for its
within-covariate and between-dimension variance parameters.

\paragraph{Term-wise density previews and average densities.}
The additive structure of the transformation predictor suggests a useful way to visualize the contribution of an individual term. For a focal term $k$, write the partial predictor
$$
\bsdelta_k(\bsx) = \bseta_{\delta,k} = \bsgamma_0^\delta + \bsf_k^\delta(\bsx),
$$
where $\bsgamma_0^\delta$ is the predictor's multivariate intercept. A term-specific
density preview can then be obtained by evaluating the standardized latent-variable
density using $\bsdelta_k(\bsx)$; that is,
$$
p_{R,k}^*(r\mid \bsx) = p_{R,k}^*(r\mid \bsdelta_k(\bsx)) = p_Z\{h(r \mid \bsdelta_k(\bsx))\}\, h'(r \mid \bsdelta_k(\bsx)).
$$
If the individual terms are centered over their observed covariate values, which is the 
case for many effect classes, this is exactly the density evaluated at the averages of all nuisance terms.

The preview density must nevertheless be distinguished from the density
averaged over the nuisance covariates,
$$
\overline{p}_{R,k}(r\mid\bsx)
=
\int
p_R\{r\mid\bsdelta_k(\bsx)+\bsdelta_{-k}(\bsu)\}
p_{\bsX}(\bsu)\,\mathrm{d}\bsu,
$$
where $p_{\bsX}$ is the joint density of the nuisance covariates.
In general, $p_{R,k}^*(r\mid\bsx)\neq\overline{p}_{R,k}(r\mid\bsx)$ because
the density depends nonlinearly on the shape predictor. In practice, the
expectation is approximated by the sample average.

Two complementary displays make the previews useful for
model inspection. Density previews at selected covariate values show the type
of shape feature favored by an individual term, while standardized quantile
curves can reveal tail behavior over a continuous covariate domain and isolate it
from location and scale effects. Neither display is itself a response-scale prediction:
term-wise previews omit the remaining shape terms, and partial standardized
quantiles must be combined with the complete shape, location, and scale
predictors before they can be interpreted predictively.

\section{Approximating the posterior through stochastic variational inference}\label{sec:inference}

For estimation, all coefficient constraints are first resolved so that inference
is carried out on identifiable real-valued parameters. Positive smoothing variances
are represented on the log scale. In particular, for the within- and
between-dimension variances we write $\ell_{\tau,k}=\log(\tau_k^2)$ and $\ell_{\psi,k}=\log(\psi_k^2)$.
Collecting the free coefficients and log
variances gives the parameter tuple
$
\bstheta
=
\bigl(
\bsgamma_\mu,
\bsgamma_\sigma,
\operatorname{vec}(\bsGamma_\delta),
\bsell_\mu,
\bsell_\sigma,
\bsell_{\delta,\tau},
\bsell_{\delta,\psi}
\bigr)
$.
We transform each smoothing-variance prior to the corresponding log scale,
including its Jacobian factor, and let $p(\bstheta)$ denote the resulting joint
prior density on the free, unconstrained scale. The factorization of
the joint prior before resolving constraints and transforming the smoothing
variances is given in \autoref{app:mv-star-predictors}. Writing $\bsy = [y_1, \dots, y_N]^\top$
and assuming conditional independence gives the
unnormalized joint posterior
$$
    p(\bstheta\mid\bsy)
    \propto
    \left[
        \prod_{i=1}^N
    p_Y(y_i\mid\bsx_i;\bsgamma_\mu, \bsgamma_\sigma, \bsGamma_\delta)
        \right]
    p(\bstheta),
$$
where each likelihood contribution depends on the parameters through $\mu(\bsx_i) = \bfb_\mu(\bsx_i)^\top \bsgamma_\mu$,
$\log\sigma(\bsx_i) = \bfb_\sigma(\bsx_i)^\top \bsgamma_\sigma$, and
$\bsdelta(\bsx_i) = \bsGamma_\delta \bfb_\delta(\bsx_i)$ as described in
\autoref{sec:model}.

The covariate-dependent transformation makes posterior computation substantially
more demanding than for a conventional location--scale model. Even intercept-only
PTMs become difficult to scale beyond moderate sample sizes
\parencite{Brachem2026-BayesianPenalizedTransformation}. Conditional-shape PTMs increase both the
data requirements and the per-observation cost: covariate-dependent shape effects
typically require more observations for reliable estimation, while every likelihood contribution
entails two numerical integrations. This makes
full-data MCMC computationally unattractive. In pilot experiments, MCMC schemes
based on iteratively weighted least-squares proposals or Hamiltonian Monte Carlo updates
mixed slowly for the shape-predictor
coefficients, while the No-U-Turn Sampler explored the posterior particularly slowly. These results
suggest that expensive likelihood evaluations are compounded by posterior geometry
that is difficult for generic samplers to navigate.

We therefore approximate the posterior by variational inference (VI) using
stochastic optimization with mini-batches. VI replaces the intractable posterior with a tractable family
of candidate distributions, and optimization selects the member that
best approximates the posterior.
For this family, we use Gaussian blocks with
different covariance structures for the model's location, scale, and shape components;
the resulting approximation is fitted with a staged optimization scheme. Related work
on additive and distributional regression spans several variational constructions.
\textcite{Kleinemeier2023-ScalableEstimationStructured}
combine a factor-covariance Gaussian approximation for the
regression coefficients with exact conditional posteriors for smoothing
variances under inverse-gamma priors;
\textcite{Lichter2024-VariationalInferenceUncertainty}
compare block-diagonal and full mean-field approximations with
semi-implicit VI and a hybrid optimization scheme; and
\textcite{Callegher2025-StochasticVariationalInferencea} derive Gaussian
approximations for the regression coefficients from local Gaussian
approximations to the additive predictors, treating the smoothing parameters
either by point estimation or variationally.

\subsection{Stochastic variational objective}

Let $q_{\bsphi}(\bstheta)$ be a tractable density on the free, unconstrained
parameters defined above. Whereas $\bstheta$ denotes the model parameters,
$\bsphi$ indexes a distribution used to approximate the posterior over
$\bstheta$. In this work, we use the common evidence lower bound (ELBO)
objective, where $\operatorname{KL}$ denotes the Kullback--Leibler divergence,
and determine $\bsphi$ by maximizing the following criterion:
\begin{align}
    \operatorname{ELBO}(\bsphi)
     & = \bbE_{q_{\bsphi}}\!\left[
        \log p(\bsy\mid\bstheta)
        \right]
    - \operatorname{KL}\!\left\{
    q_{\bsphi}(\bstheta)\,\|\,p(\bstheta)
    \right\} \notag \\
     & = \log p(\bsy)
    - \operatorname{KL}\!\left\{
    q_{\bsphi}(\bstheta)\,\|\,p(\bstheta\mid\bsy)
    \right\}.
    \label{eq:vi-elbo}
\end{align}
The first line shows that the ELBO balances expected log-likelihood against divergence from the variational approximation to the prior.
The second line shows that maximizing Equation~\eqref{eq:vi-elbo} minimizes
the reverse Kullback--Leibler divergence to the posterior. Here, $\log p(\bsy)$
is the log marginal likelihood, also called the log model evidence;
it does not affect the optimization because it is constant in $\bsphi$.

For the PTM, the ELBO is not available in closed form, and evaluating all $N$
costly likelihood contributions at every update would defeat scalability. We
therefore use two independent approximations at each update. First, we draw $S$
parameter vectors from $q_{\bsphi}$ to approximate the expectation over
$\bstheta$. Second, we sample a mini-batch
$\mathcal{I}\subset\{1,\ldots,N\}$ of size $m$ to approximate the full-data
log-likelihood for each parameter draw:
\begin{align}
    \widehat{\operatorname{ELBO}}_{\mathcal{I},S}(\bsphi)
    = \frac{1}{S}\sum_{s=1}^{S}\left[
        \frac{N}{m}\sum_{i\in\mathcal{I}}
        \log p\bigl(y_i\mid\widetilde{\bstheta}^{\bsphi}_s\bigr)
        + \log p\bigl(\widetilde{\bstheta}^{\bsphi}_s\bigr)
        - \log q_{\bsphi}\bigl(\widetilde{\bstheta}^{\bsphi}_s\bigr)
        \right],
    \qquad
    \widetilde{\bstheta}^{\bsphi}_s=\mathcal{T}_{\bsphi}(\bszeta_s).
    \label{eq:vi-stochastic-elbo}
\end{align}
Here, $\bszeta_s$ is a parameter-free standard Gaussian random
vector, and $\mathcal{T}_{\bsphi}$ is a differentiable sampling map chosen so
that $\widetilde{\bstheta}^{\bsphi}_s\sim q_{\bsphi}$. Under uniform
subsampling, the factor $N/m$ makes the mini-batch sum unbiased for the full-data
log-likelihood. The prior and variational-density terms are not subsampled and
therefore remain unscaled. Together with the independent parameter draws, this
makes Equation~\eqref{eq:vi-stochastic-elbo} an unbiased estimator of the ELBO.
Unbiasedness does not ensure low variance, however, so $m$ must be large enough
for mini-batches to represent the distributional features that identify shape
effects.

Variational sampling and data subsampling provide two distinct sources of
randomness; hence, our approach can be described as doubly stochastic VI
\parencite{Hoffman2013-StochasticVariationalInference,Titsias2014-DoublyStochasticVariational}.
Pathwise derivatives of Equation~\eqref{eq:vi-stochastic-elbo} are evaluated by
automatic differentiation, following the general automatic differentiation
variational inference (ADVI) construction
\parencite{Kucukelbir-AutomaticDifferentiationVariational}. Reparameterization
separates the parameter-free random draws $\bszeta_s$ from $\bsphi$, allowing
$\nabla_{\bsphi}$ to pass through the sampled log target. The resulting Monte
Carlo gradient can therefore be used with standard stochastic-gradient
optimizers without deriving model-specific score estimators.

\subsection{Blocked Gaussian variational family}

The stochastic estimator above introduced the sampling map; we now specify the
blocked variational distribution that it induces.
We partition $\bstheta$ into $B$ non-overlapping vectors $\bstheta_b$ of
dimension $P_b$ and use independent Gaussian blocks,
\begin{equation}
    q_{\bsphi}(\bstheta)
     = \prod_{b=1}^{B}
    \calN\bigl(\bstheta_b\mid\bsnu_b,\bsOmega_b\bigr).
    \label{eq:vi-blocked-family}
\end{equation}
Here, we let $\calN(\,\cdot\mid\bsnu,\bsOmega)$ denote the
multivariate Gaussian density with mean $\bsnu$ and covariance matrix
$\bsOmega$.
Depending on the block, the covariance is parameterized as
\begin{equation}
    \bsOmega_b =
    \begin{cases}
        \operatorname{diag}\!\left\{
        \exp(2\lambda_{b,1}),\ldots,\exp(2\lambda_{b,P_b})
        \right\},
         & \text{diagonal Gaussian block},              \\[2mm]
        \bsLambda_b\bsLambda_b^\top,
         & \text{dense/full-covariance Gaussian block}.
    \end{cases}
    \label{eq:vi-covariance-options}
\end{equation}
For a diagonal block, $\lambda_{b,j}$ is the log standard deviation, so
$\bsLambda_b$ is diagonal with entries $\exp(\lambda_{b,j})$,
$j=1,\ldots,P_b$. Consequently, the variances are
$\exp(2\lambda_{b,j})$. In the dense case,
$\bsLambda_b$ is lower triangular with a positive diagonal obtained from
unconstrained variational parameters. For draw $s$, we independently sample
$\bszeta_{s,b}\sim\calN(\bfzero,\bfI_{P_b})$ for $b=1,\ldots,B$ and stack
these block vectors as
$\bszeta_s=(\bszeta_{s,1}^\top,\ldots,\bszeta_{s,B}^\top)^\top$. The corresponding sampling map
acts blockwise as
$\mathcal{T}_{\bsphi}(\bszeta_s)_b
=\bsnu_b+\bsLambda_b\bszeta_{s,b}$. The diagonal option in
Equation~\eqref{eq:vi-covariance-options} uses $P_b$ covariance parameters and a total of
$2P_b$ variational parameters including the mean. A dense block uses
$P_b(P_b+1)/2$ Cholesky parameters plus $P_b$ means, and hence
$P_b(P_b+3)/2$ variational parameters in total. The dense covariance also
requires quadratic storage and a triangular matrix-vector product for each
draw, which becomes costly for large $P_b$ and incurs additional cost when
$S$ is increased to reduce the variance of the ELBO estimator.

\begin{figure}[tbp]
    \centering
    \includegraphics[width=\linewidth]{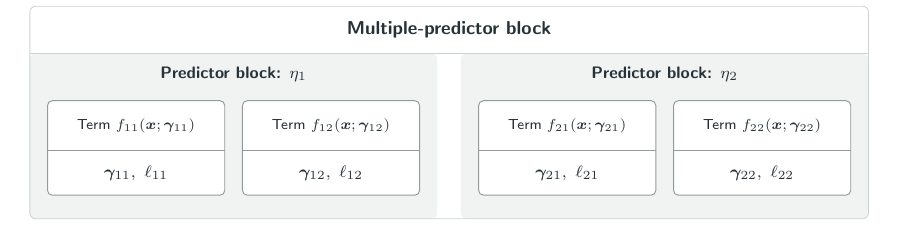}
    \par\medskip
    \includegraphics[width=\linewidth]{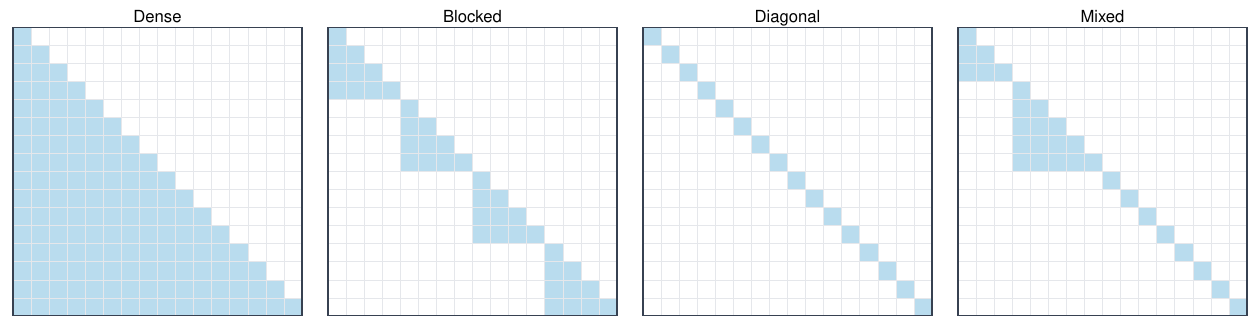}
    \caption{Illustration of blocking and sparsity patterns. The upper panel
        shows model-aligned groups: term blocks couple the coefficients
        $\bsgamma$ of one model term with its log smoothing variance $\ell$;
        predictor blocks combine terms within one predictor; and a
        multiple-predictor block additionally permits dependence across
        predictors. The lower panel shows sparsity patterns for the
        lower-triangular Cholesky factor. Blue cells are free parameters and
        white cells are fixed at zero.}
    \label{fig:vi-blocking-structures}
\end{figure}

The blocking structure mediates a trade-off between the expressiveness of the
approximation and
its computational cost. We use dense blocks for moderate-dimensional groups
whose parameters are expected to be strongly related, such as the parameters within
a predictor or a single term. Large random-intercept and tensor-product terms
may instead receive diagonal blocks to reduce the storage and computational burden.
The exact blocking structure is
part of the model specification for each application. This rule is a
model-specific compromise between
a global dense approximation and complete mean-field factorization, consistent
with the broader use of structured Gaussian approximations in additive models
\parencite{Kleinemeier2023-ScalableEstimationStructured,Lichter2024-VariationalInferenceUncertainty,Callegher2025-StochasticVariationalInferencea}.
Common options for blocking and sparsity patterns are illustrated in
\autoref{fig:vi-blocking-structures}.

The factorization excludes dependence between blocks, even when both blocks are
dense. It can therefore understate uncertainty when cross-block association is
important. Gaussian blocks also cannot directly represent nonlinear dependence or skewed, heavy-tailed,
or multimodal posteriors, and they may be unreliable under weak identification.
These are approximation restrictions rather than properties of the PTM
posterior itself.

\subsection{Staged initialization and optimization}

The model's structured separation into location--scale and shape components
allows us to construct the variational approximation in progressively richer
stages. Because the transformation standardization separates shape from
location and scale, the less expensive location--scale parameters can be fitted
before activating the higher-dimensional shape predictor, whose updates are
computationally more expensive. The procedure illustrated in
\autoref{fig:vi-staged-optimization} therefore concentrates computation on the
parameters currently being introduced. In the reported fits, the shape-only VI
stage often already produced an approximation close to the final result, so the
joint stage required only a short refinement.

We initialize $\bsGamma_\delta = \bfzero$, which
yields the identity transformation and hence the reference density. Conditional
on this reference shape, a maximum a posteriori (MAP) fit establishes a warm
start for the location--scale parameters. A second MAP fit then optimizes the
shape parameters conditional on that solution. We use the resulting point estimates as initial values
for the variational block means.
\begin{figure}[tbp]
    \centering
    \includegraphics[width=\linewidth]{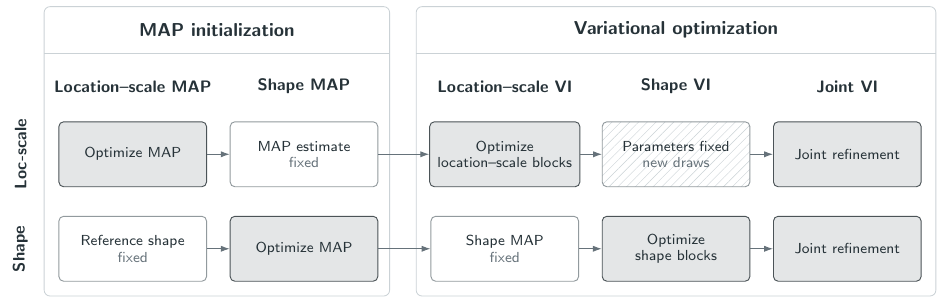}
    \caption{Staged initialization and optimization of the blocked variational
        approximation. Shaded boxes denote the parameter group currently
        optimized, unshaded boxes denote fixed point estimates, and the hatched box
        denotes a fixed variational block from which new draws are sampled.
        Between MAP initialization and variational optimization, temporary
        smoothing-variance bounds are removed and the MAP estimates initialize
        the unrestricted variational family. The final stage updates all blocks
        simultaneously while retaining the prescribed block factorization.}
    \label{fig:vi-staged-optimization}
\end{figure}

During the location--scale-only variational stage, we keep the shape parameters
fixed at their MAP estimates. In the subsequent shape-only stage,
the location--scale variational parameters are held fixed, but draws from these blocks do enter
every stochastic ELBO evaluation. This practice already propagates marginal
location--scale uncertainty into the shape objective before joint refinement.
The final joint stage updates all variational parameters simultaneously.
Location--scale and shape
covariance can be represented only when their parameters share a block; the
separate blocks used in our fits do not estimate such cross-block covariance.

In early pilot fits, this staged sequence produced not only faster but also
more stable optimization
than updating all variational parameters jointly from the outset. When a
variational block is first activated, we initialize its covariance as a diagonal
matrix with small marginal variances. For dense blocks, the off-diagonal
Cholesky parameters therefore start at zero and are learned during optimization.
This initialization also produced stable optimization paths in our pilot fits.

We optimize the negative stochastic ELBO estimate with Adam and a learning-rate
schedule. Every optimizer update uses a new mini-batch and independently samples
new standard-normal draws for the variational approximation. We group successive
mini-batch updates into epochs, with one epoch comprising one pass through the
training data. The epoch-level training ELBO averages the stochastic estimates
over the mini-batches in that epoch. Early stopping terminates a stage when its
relative improvement remains below the user-specified tolerance for the
user-specified number of consecutive epochs. This patience requirement prevents
an isolated noisy epoch from terminating a stage. Batch size, number of Monte
Carlo draws, learning-rate schedule, patience, and block membership are fit-specific tuning
choices whose appropriate values depend on sample size and model dimension.
In practice, we inspect whether the ELBO trace plateaus over the final patience
window and whether relevant posterior summaries remain stable across repeated
fits or initializations. Persistent drift or disagreement indicates that the
optimization schedule or run length should be revisited.

After fitting, draws from $q_{\bsphi}$ are transformed back to the model scale
and used to compute posterior means, standard deviations, intervals, and predictive
distribution summaries.

\subsection{Software implementation.}
The blocked stochastic variational inference described in this section is
implemented in the Python-based probabilistic programming framework Liesel \parencite{Riebl2026-LieselPythonFramework}. The computations in this paper use development
versions of Liesel, Liesel-GAM, and Liesel-PTM
\parencite{Wiemann2026-LieselProbabilisticProgramming,
Brachem2026-LieselgamGeneralizedAdditive,
Brachem2026-LieselptmPenalizedTransformation}.

\section{Simulation studies}\label{sec:simulations}

\subsection{Study design}

The simulation studies address two questions. Scenario A is a
controlled recovery experiment in which the response is generated from the
location--scale decomposition assumed by the PTM. It tests recovery of smooth
location and scale effects and of a standardized latent-variable distribution
whose shape may depend on a covariate. It also tests regularization when a
covariate effect on shape is absent. Scenario B generates the
response directly from a complex mixture of Gaussians. It is intended as
a predictive stress test under misspecification because the fitted PTM imposes
restrictions detailed below.
The data-generating processes (DGPs) are summarized below, with further details given in \autoref{app:simulation-dgps}. We use training sample sizes $N\in\{500,2{,}000,10{,}000\}$ and an independent test
sample of size $2{,}000$ in each of $100$ replications.

\paragraph{Scenario A: Effect recovery and latent-variable-shape calibration.}
We generate data directly from the model $Y_i = \mu(\bsx_i) + \sigma(\bsx_i)R_i$, $i=1, \dots, N$,
where $\bsx_i$ consists of four independent covariates from $\operatorname{U}(-2,2)$.
The location and scale predictors combine oscillating, U-shaped, linear, and bell-shaped effects. We consider two standardized-residual subscenarios. In the first, $R_i$ is drawn from a standard Gaussian distribution.
In the second, the covariate $x_1$ controls a smooth transition from a
skew-normal distribution to a bimodal Gaussian mixture. Both
component distributions are standardized to mean zero and variance one, so that the generated
$\mu(\bsx)$ and $\sigma(\bsx)$ remain the conditional mean and standard
deviation even though the shape changes with $x_1$.

\paragraph{Scenario B: Complex-mixture stress test.}
We generate six independent covariates on $[-1,1]$ and draw $Y_i$ from a
four-component Gaussian mixture. The first mixing function depends nonlinearly on $x_1$,
$x_2$, and their interaction, while a second mixing function depends on $x_4$. Component
means depend on $x_1$, $x_2$, $x_3$, and $x_4$, and their standard deviations
depend on $x_1$, $x_2$, and $x_4$; $x_5$ and $x_6$ are irrelevant distractors.
This deliberately complex setup produces densities with skewness, shoulders,
multiple modes, and comparatively broad tails over the examined finite range. Because $Y_i$ is generated directly from this mixture, we evaluate
Scenario B directly on the response scale. The fitted PTM is deliberately
misspecified in this scenario: the omitted $x_1x_2$ interaction in a mixture
weight can induce nonadditivity in the conditional mean, standard deviation,
and standardized shape, whereas the reported predictors for the mean, log
standard deviation, and standardized shape are additive; its finite
transformation basis and fixed Gaussian asymptotic tail class impose further
restrictions.

\paragraph{Fitted models.}

The conditional-shape PTM contains single-covariate P-splines for every covariate in the
location, log-scale, and transformation predictors. We use 20 basis functions
for each location and log-scale effect and nine for each transformation effect;
all P-splines use second-order difference penalties. The onion-spline core is
$[a,b]=[-4,4]$, and we use $D=10$ transformation parameters with first-order
random-walk penalties across the parameters. The smoothing variances for the
location and log-scale effects have $\operatorname{InverseGamma}(1,0.005)$
hyperpriors, while the within-covariate and between-parameter smoothing
variances in the transformation predictor have
$\operatorname{Weibull}(1/2,0.05)$ hyperpriors.

We compare the conditional-shape PTM with a fixed-shape PTM that retains a flexible but
covariate-independent transformation, a Gaussian location--scale model, and
DDPstar, a 20-component truncation of a Dirichlet-process mixture of normal
structured additive regression models
\parencite{Rodriguez-Alvarez2024-DensityRegressionDirichleta}. DDPstar was also
one of the strongest density-regression competitors in the original PTM study
\parencite{Brachem2026-BayesianPenalizedTransformation}. The fixed-shape PTM is the exact submodel obtained by retaining the flexible shape intercept but omitting every covariate-dependent shape term, and it represents the structure of the earlier PTM. The conditional-shape PTM, fixed-shape PTM,
and Gaussian model are fitted using the variational procedure described in
\autoref{sec:inference}; DDPstar is fitted by MCMC. In Scenario A, the
predictive-accuracy and calibration figures show the variational conditional-shape PTM fit
together with MCMC fits of the three competitors; the corresponding
variational fits of the fixed-shape PTM and Gaussian model were similar and are
omitted. The effect-recovery comparison retains both variational and MCMC fits
of the fixed-shape PTM and Gaussian model. In Scenario B, the conditional-shape PTM, fixed-shape PTM, and
Gaussian model use VI, whereas DDPstar uses MCMC. Pilot MCMC fits of the conditional-shape PTM
mixed slowly and were not pursued further.

\paragraph{Evaluation.}

For Scenario A, effect recovery is measured by the posterior expected mean
squared error, averaged over an evaluation grid, and by coverage and width of
90\% pointwise credible intervals. The quality of the full-distribution fit
is evaluated on an
independent test set using the Kullback--Leibler divergence from the true
conditional density and the posterior-expected absolute deviation between fitted and true
conditional CDFs (labelled MAD). As a training-data model-selection criterion, we additionally
report Pareto-smoothed importance-sampling leave-one-out cross-validation (PSIS-LOO)
\parencite{Vehtari2017-PracticalBayesianModel}, computed exclusively
from the training-data likelihood contributions and reported on the deviance
scale divided by the training-sample size. We also computed the
Watanabe-Akaike information criterion (WAIC) as a sensitivity check but do not report it
because it yields the same conclusions as PSIS-LOO. Lower values are
preferable for both test-set measures and PSIS-LOO. Pointwise credible intervals
for the CDF quantify posterior uncertainty about the fitted conditional CDF.

\subsection{Results}

\begin{figure}[tb]
    \includegraphics[width=\linewidth]{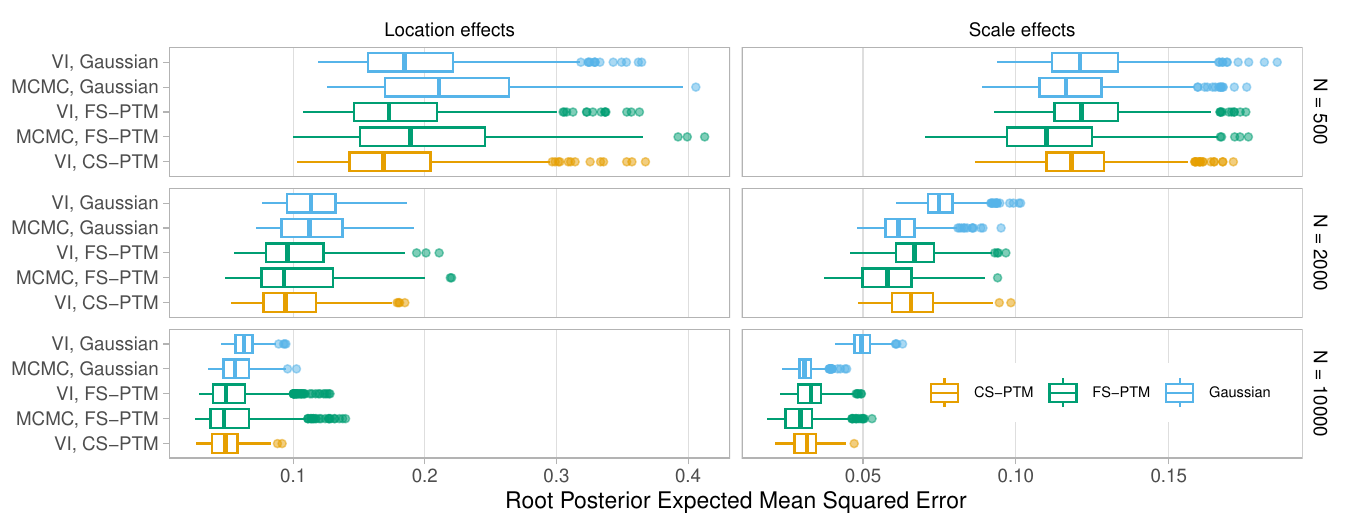}
    \vspace{0.5em}
    \includegraphics[width=\linewidth]{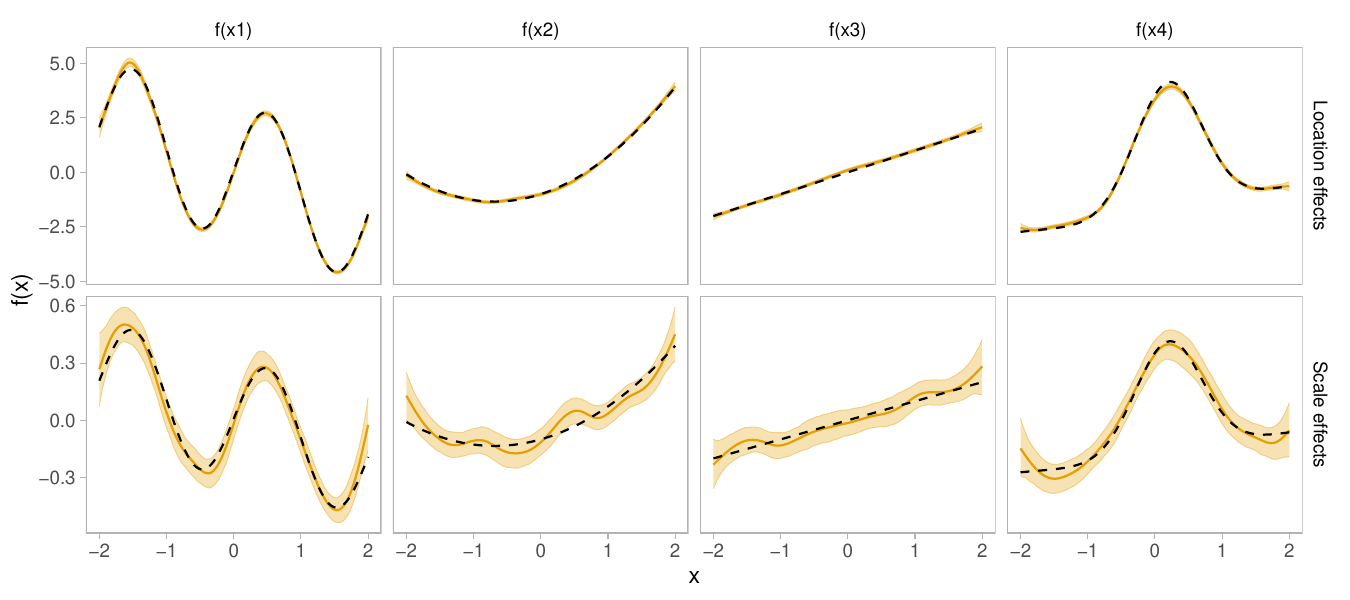}
    \caption{Recovery of the smooth location and log-scale effects in Scenario
        A. Top: root posterior expected mean squared errors by sample size. The
        distributions pool the four covariate effects, the Gaussian and
        skew--mixture subscenarios, and 100 replications. The conditional-shape PTM (CS-PTM) is
        fitted by VI; the fixed-shape PTM (FS-PTM) and Gaussian model are fitted by both VI
        and MCMC. Bottom: representative recovery of the four location effects
        and log-scale effects at $N=2{,}000$. Solid colored lines are posterior
        means, shaded bands are 90\% pointwise credible intervals, and dashed
        black lines are the true functions.}
    \label{fig:sim-effect-recovery}
\end{figure}

\paragraph{Recovery of location and scale effects.}
The smooth response-level effects are recovered well. The top panel of
\autoref{fig:sim-effect-recovery} shows that the root posterior expected mean
squared error decreases steadily with sample size for every model. The
variational and MCMC fits of the fixed-shape PTM and Gaussian model show broadly
similar recovery. In the representative fit shown in the
bottom panel, the posterior means from the PTM closely follow all four location and all four scale functions at $N=2{,}000$. Thus, the PTM successfully recovers
the response-level location and scale effects while also estimating
covariate-dependent shape.

\paragraph{Predictive distributional accuracy.}
\autoref{fig:sim-performance} summarizes the predictive-distributional-accuracy results. 
In the Gaussian case of Scenario A, the conditional-shape PTM, fixed-shape PTM, and Gaussian
model perform almost identically. This indicates
that the additional transformation terms are successfully regularized when
they are unnecessary. In the skew-mixture case, the conditional-shape PTM improves most
rapidly with sample size and has the smallest Kullback--Leibler divergence and
CDF deviation at every sample size, as well as the smallest normalized
PSIS-LOO value from $N=2{,}000$ onward. The fixed-shape PTM improves on the
Gaussian model but cannot reproduce the transition in shape across $x_1$.
DDPstar struggles most at $N=500$ on the test-set criteria. It improves with
sample size and approaches the fixed-shape PTM by $N=10{,}000$, but remains behind the
conditional-shape PTM.

\begin{figure}[tb]
    \includegraphics[width=\linewidth]{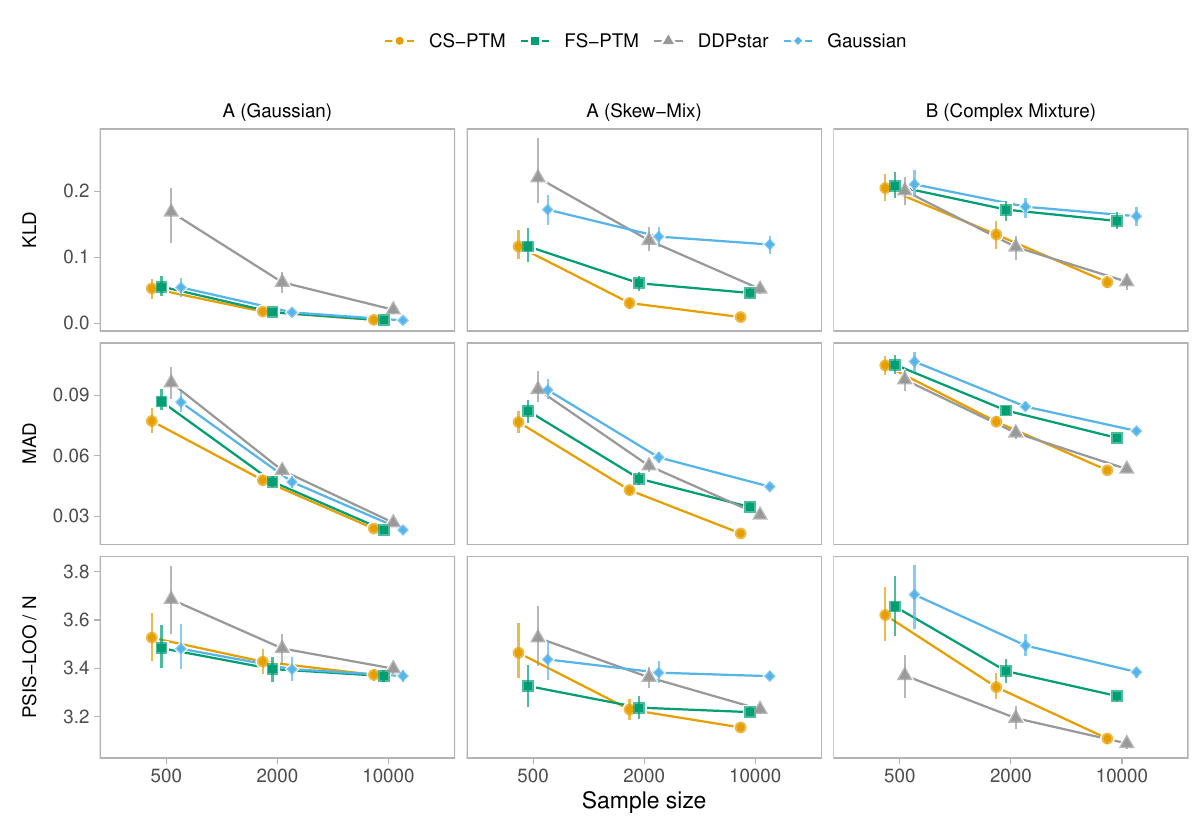}
    \caption{Response-scale distributional accuracy. Points are means over
        100 replications and vertical lines span the 0.1 and 0.9 quantiles.
        PSIS-LOO is reported on the deviance scale and divided by $N$; lower
        values are preferable in every row. In Scenario A, CS-PTM uses VI and
        the three competitors use MCMC; the similar VI fits of the FS-PTM
        and Gaussian model are omitted. In Scenario B, CS-PTM, FS-PTM, and
        Gaussian use VI, whereas DDPstar uses MCMC. CS-PTM and FS-PTM
        denote the conditional-shape and fixed-shape PTMs, respectively.}
    \label{fig:sim-performance}
\end{figure}

Scenario B is harder for every model. At $N=500$, their Kullback--Leibler
divergences and CDF deviations are similar. With increasing sample size, the
PTM and DDPstar improve substantially, whereas the fixed-shape PTM and
Gaussian model level off at noticeably larger values. DDPstar has substantially
smaller normalized PSIS-LOO values at $N=500$ and $N=2{,}000$. This advantage is
consistent with its mixture structure matching the data-generating process
more directly and, at $N=500$, with its better-behaved PSIS importance
weights. At $N=10{,}000$,
DDPstar and the PTM are approximately tied on Kullback--Leibler divergence and
CDF deviation. Thus, despite its additive transformation predictor, the PTM
yields predictive gains similar to those of flexible mixture-based density
regression. As a sensitivity check, \autoref{fig:sim-performance-nparam}
shows that increasing the transformation dimension from $D=10$ to $D=20$
does not meaningfully change the PTM's performance in Scenario B, supporting
the more economical default.

\paragraph{Diagnostics for PSIS-LOO.}
Pareto-$k$ warnings for the PSIS-LOO metric occurred in several simulation settings. These warnings indicate that Pareto-smoothed importance sampling had difficulty reliably approximating the corresponding leave-one-out posteriors, which may result from influential observations, imperfections in the variational posterior approximation, or both. Consequently, the affected PSIS-LOO estimates should be interpreted cautiously. The fraction of problematic observations and the frequency of affected fits generally decreased with increasing sample size, and warnings were more prevalent under the model misspecification of Scenario B. In sum, greater weight should be placed on the Kullback--Leibler divergence (KLD) and mean absolute deviation (MAD), which assess predictive performance directly using independently generated test data.

\begin{figure}[tb]
    \includegraphics[width=\linewidth]{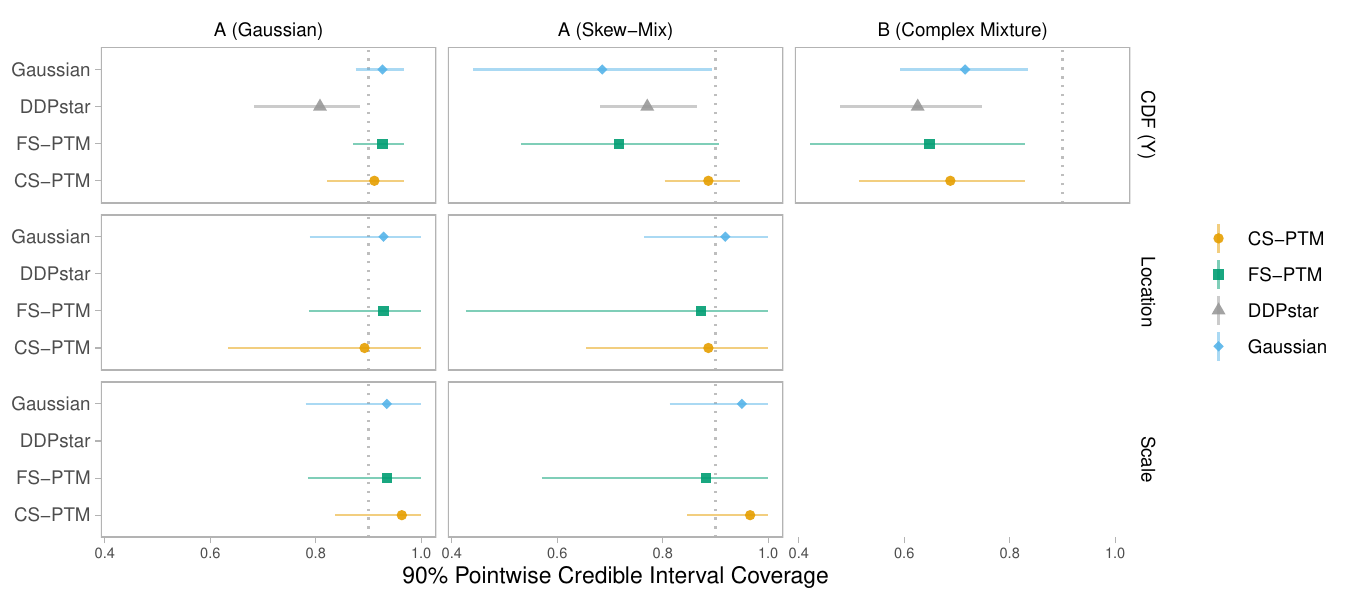}
    \caption{Empirical coverage of 90\% pointwise credible intervals for the
        response CDF and the location and scale effects. Points summarize
        coverage over replications and sample sizes;
        horizontal lines show the corresponding central 90\% range. Empty
        panels indicate quantities without a true or model-specific counterpart.
        Dotted vertical lines mark the nominal coverage level. In Scenario A,
        CS-PTM uses VI and the three competitors use MCMC; the similar VI fits of
        the FS-PTM and Gaussian model are omitted. In Scenario B, CS-PTM, FS-PTM, and Gaussian use VI, whereas DDPstar uses MCMC.
        CS-PTM and FS-PTM denote the conditional-shape and fixed-shape
        PTMs, respectively.}
    \label{fig:sim-calibration}
\end{figure}

\paragraph{Uncertainty calibration.}
The results in \autoref{fig:sim-calibration} indicate that the fitted models
provide reasonably calibrated uncertainty in Scenario A, rather than only
accurate point estimates. In the Gaussian subscenario, the 90\%
pointwise intervals for the response CDF and the location and scale effects
generally have coverage close to the nominal level. Under the skew-mixture
DGP, the PTM has the best CDF coverage of the four models.
Undercoverage is pronounced for all models in Scenario B. Thus, uncertainty
quantification is informative but not uniformly calibrated under more complex
or misspecified data-generating processes. Interval-width summaries, included in the appendix
(\autoref{fig:sim-ci-width}), generally decrease with $N$, as expected. In
Scenario B, the conditional-shape PTM, fixed-shape PTM, and DDPstar have narrower response-CDF
intervals than the Gaussian model despite the undercoverage of all four
models. In Scenario A's skew-mixture subscenario, the conditional-shape and fixed-shape PTMs have
narrower response-CDF and location-effect intervals than the Gaussian model;
the fixed-shape PTM also has narrower scale-effect intervals, while those of the
conditional-shape PTM are similar. Interval width should therefore be interpreted jointly
with coverage rather than as a performance criterion on its own.

\begin{figure}[tb]
    \includegraphics[width=\linewidth]{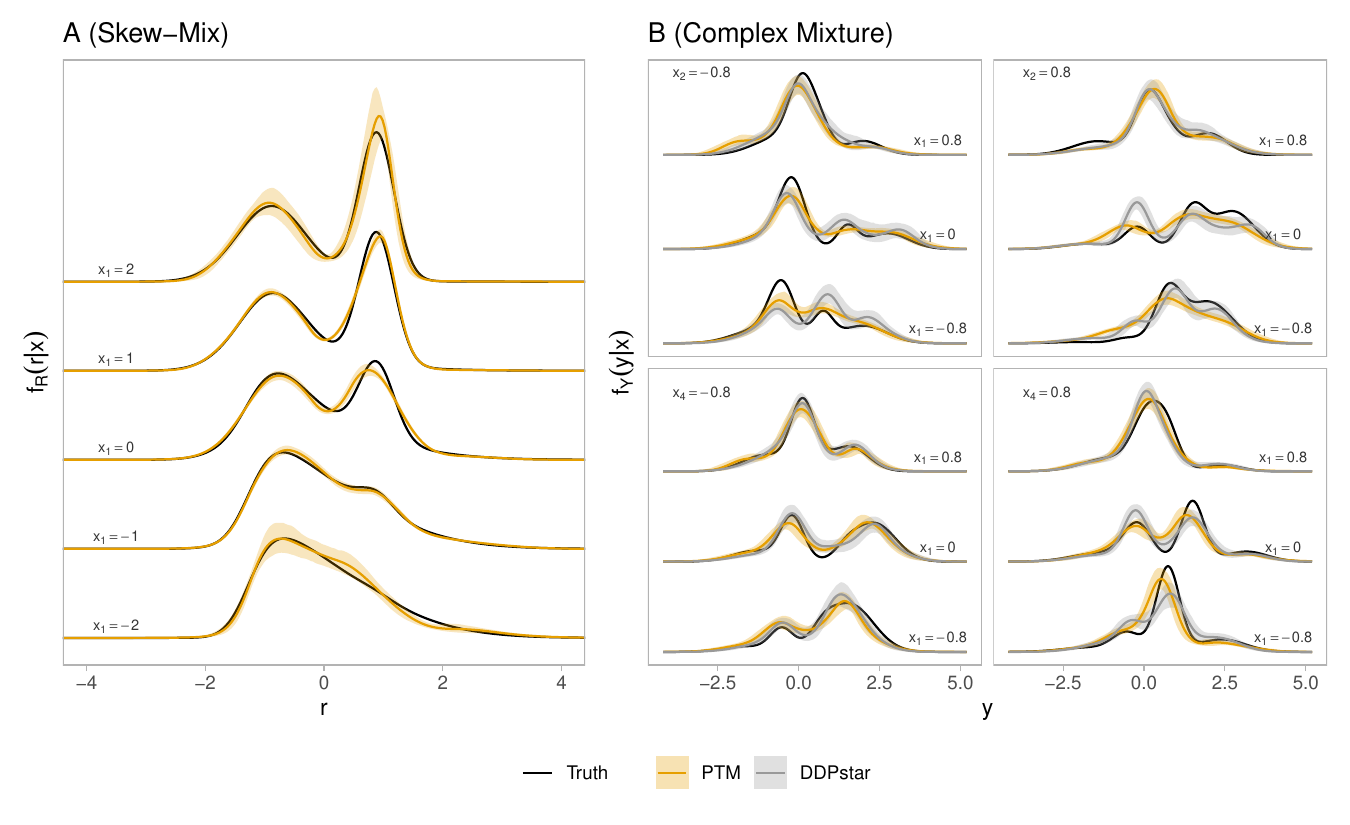}
    \caption{Selected conditional density estimates at $N=10{,}000$. Left:
        standardized-residual densities in the skew-mixture subscenario of
        Scenario A as $x_1$ varies, with the remaining covariates fixed at zero.
        Right: response densities in Scenario B for combinations of $x_1$ with
        $x_2\in\{-0.8,0.8\}$ (top) or $x_4\in\{-0.8,0.8\}$ (bottom), with
        the remaining covariates fixed at zero. Solid colored lines are
        posterior means, bands are 90\% pointwise credible intervals, and
        black lines are the true densities. Curves are vertically offset.}
    \label{fig:sim-densities-demo}
\end{figure}

\paragraph{Graphical diagnostics.}
The conditional density profiles in \autoref{fig:sim-densities-demo} show
strong recovery in Scenario A and a reasonable approximation of the principal
features in the harder Scenario B. In Scenario A, the PTM recovers the
transition from a skewed density to a bimodal density as $x_1$ increases. It
captures the locations and relative prominence of the modes well across the
covariate range. The additional previews in \autoref{fig:sim-densities-controls}
show that the estimated residual-density profiles for $x_2$, $x_3$, and $x_4$ remain highly similar across focal values at each sample size, as expected.

In Scenario B,
both the PTM and DDPstar reproduce the principal movement of modes and
shoulders across the selected $x_1$:$x_2$ and $x_1$:$x_4$ contrasts. Neither
method reproduces every sharp feature of the true mixture, but the PTM tracks
the main qualitative changes with comparatively smooth density estimates. The
lower-sample-size counterparts in \autoref{fig:sim-densities-demo-n500} and
\autoref{fig:sim-densities-demo-n2000} clearly show the sample-size progression.
In Scenario A, the PTM broadly captures the
transition from skewness to bimodality already at $N=500$, but smooths the narrow second peak
for nonnegative $x_1$ into a shoulder. At $N=2{,}000$, it resolves the two modes
much more clearly and follows their locations and relative heights more closely.
In Scenario B, both the PTM and DDPstar
improve from $N=500$ to $N=2{,}000$: their intervals narrow and the principal mode
locations become clearer. Nevertheless, both models struggle with the fine structure
of Scenario B: sharp shoulders and secondary modes are sometimes smoothed or
missed, so the recovery remains incomplete even at larger sample sizes.

Overall, the simulations show that the PTM recovers location, scale, and
covariate-dependent shape under the structured data-generating process and
remains competitive under misspecification, while uncertainty calibration
and fine multimodal-feature recovery remain challenging in the hardest setting.

\section{Applications}\label{sec:applications}

\subsection{Norwegian water conductivity}
\label{sec:application-norway-water}

\subsubsection{Data and scientific context}
We illustrate our model first using electrical-conductivity measurements for Norwegian
rivers and lakes from the European Environment Agency (EEA)
\emph{Waterbase -- Water Quality ICM, 2024}\footnote{ICM denotes inland,
    coastal, and marine waters.} dataset
\parencite{EuropeanEnvironmentAgency2025-Waterbase}. The dataset is licensed
under Creative Commons Attribution 4.0 International (CC BY 4.0). The source extract contains
$13{,}533$ measurements from $370$ monitoring sites collected between 2000 and
2019. Before fitting, we excluded 47 values reported below the analytical limit
of quantification and 62 records carrying the EEA lower-reliability flag
\texttt{U}; one record met both criteria. This filter removed 108
records and left $13{,}425$ observations, of which $10{,}083$ were taken from
rivers and $3{,}342$ from lakes. The \texttt{U} flag indicates lower reliability,
not that a record is necessarily invalid. The monitoring sites represented here belong to 14 river-basin
districts. Each district is a regional water-management area covering land whose
surface water drains into one or more neighboring river systems, together with
the associated groundwater and coastal waters. We use the natural logarithm of conductivity, measured in
$\mu\mathrm{S}/\mathrm{cm}$, as the response.

Electrical conductivity reflects dissolved ionic content and is widely used as
a nonspecific indicator of salinization and water quality. It varies with
geology, hydrological conditions, and anthropogenic inputs
\parencite[see, for example, ][]{Kaushal2021-FreshwaterSalinizationSyndrome}.
Consequently, while unusually high or low measurements cannot by themselves be
attributed to a particular source, they may motivate further
investigation.
Modeling its complete conditional distribution is useful in this setting
because changes in skewness, tails, or modality can alter which observations
are considered unusual even if the conditional mean and variance are similar.
Note, however, that the present analysis is primarily intended to demonstrate our
model rather than to provide a comprehensive environmental assessment.

\subsubsection{Model specification}
Let $\text{wb}_i$ indicate the water-body type, that is, whether observation $i$ was obtained from a river ($\text{wb}_i = 1$)
or a lake ($\text{wb}_i = 0$), let $\text{day}_i$ denote the day of year\footnote{Leap-day observations were retained when defining calendar day of year (1--366).}, let $\text{year}_i$ denote the calendar year,
and let $\text{rbd}_i$ denote the river-basin district. We standardize the
log-conductivity response to mean zero and unit sample standard deviation for
fitting and transform reported effects and predictive quantities back to the
log-conductivity scale. We use the
same covariates in the location, log-scale, and shape predictors,
\begin{alignat*}{6}
    \eta_{t,i}
     & = \gamma^t_0
     & \quad           & + \gamma^t_1 \,\text{wb}_i
     & \quad           & + f^t_2(\text{day}_i)
     & \quad           & + f^t_3(\text{year}_i)
     & \quad           & + f^t_4(\text{rbd}_i),
     & \qquad          & t \in \{\mu,\sigma\},       \\
    \bsdelta(\bsx_i)
     & = \bsdelta_0
     & \quad           & + \bsgamma^\delta_1 \,\text{wb}_i
     & \quad           & + \bsf^\delta_2(\text{day}_i)
     & \quad           & + \bsf^\delta_3(\text{year}_i)
     & \quad           & + \bsf^\delta_4(\text{rbd}_i).
\end{alignat*}
The seasonal effects $f_2^t(\text{day}_i)$ are cyclic P-splines, the long-term trend
$f^t_3(\text{year}_i)$ is a P-spline with a second-order difference penalty, and
$f^t_4(\text{rbd}_i)$ is a discrete spatial effect with a Markov random field penalty
based on neighboring river-basin districts. Their multivariate analogues enter
the transformation predictor. The focal fit uses $D=20$ transformation
parameters and onion-spline core $[a,b]=[-5,5]$; the sensitivity fit with
monitoring-site effects uses $D=10$ and $[a,b]=[-4,4]$. Both use the first-order
random-walk penalty across transformation dimensions described in
\autoref{sec:mvpred-ptm}. Water-body type enters all three predictors
as a linear effect; its shape coefficient $\bsgamma^\delta_1$ is a $D$-vector.
The day-of-year and year effects used basis dimensions 20 and 10 in the
location and log-scale predictors and 20 and 5 in the shape predictor,
respectively.

Following \textcite{Klein2016-ScaledependentPriorsVariance}, we calibrated
$\tau_{k,t}^2\sim\operatorname{Weibull}(1/2,\theta_{k,t})$ priors for the 
smoothing variance parameters to
assign probability $0.01$ to location and log-scale effects exceeding three
response standard deviations and $\log(2)$, respectively. The same calibration
was applied to the monitoring-site effects in the sensitivity model. The within-covariate
variances $\tau_k^2$ and between-transformation-dimension variances $\psi_k^2$
in the shape predictor independently followed
$\operatorname{Weibull}(1/2,0.05)$ priors.
We used a single dense Gaussian block for
the location--scale parameters and a diagonal Gaussian block for the
transformation parameters.

This specification describes the distribution across the sampled sites within a river-basin
district, after adjustment for the other included covariates.
This deliberately pooled specification provides a
clear illustration of discrete spatial shape effects, but repeated
measurements from the same site mean that its regional effects may also reflect
differences in the composition of monitoring sites. We examine this explicitly
in a sensitivity model that adds site-specific random intercepts to the
location and log-scale predictors. 

Additional details on the optimization settings are included in \autoref{app:application-optimization}.

\subsubsection{Results}

The staged variational fit of the PTM took approximately 46 minutes on a MacBook Pro with a 10-core
Apple M1 Pro processor and 32~GB of memory.

\paragraph{Water-body and temporal effects.}
The fitted location and log-scale effects indicate that river measurements tend to
have higher mean log conductivity but a lower conditional standard deviation
than lake measurements, with posterior mean coefficients of $0.299$ (90\%
credible interval: $[0.286, 0.311]$) for location and $-0.346$ (90\% credible
interval: $[-0.362, -0.330]$) for log-scale.
The term-wise density previews in the top-right panel of
\autoref{fig:norway-effects} also indicate marked differences beyond the first
two moments. The river preview is distinctly bimodal, with a
taller mode at negative $r$ and a smaller mode at positive $r$. The lake
preview has a sharper dominant mode just below zero and a longer right
shoulder, with only weak evidence of a secondary mode.

The seasonal location and log-scale effects in the top-left and top-center
panels show a clear annual pattern: mean log conductivity is higher early and
late in the year and lower during much of the intervening period, while the
conditional standard deviation is somewhat larger around the middle of the
year. Separate day-of-year shape previews, however, overlap almost exactly and
provide no visible evidence of a seasonal shape effect (not shown). The year effects on
location and log-scale in the middle-left and middle-center panels are small,
and the middle-right panel suggests only a small change in shape over time.

\paragraph{River-basin effects.}
The most pronounced variation occurs between river-basin districts. The
discrete spatial terms in the bottom row of
\autoref{fig:norway-effects} show drastic regional differences in
location and scale, but also in the shape of the standardized preview-density.
Some districts have comparatively smooth unimodal previews, whereas others
show marked skewness, shoulders, or multiple modes. Note that these are
term-wise density previews: for each district, the shape predictor
is evaluated with the remaining transformation terms held at
zero; this means in particular that the preview densities here implicitly condition
on lake-water.
As \autoref{tab:norway-rbd-summary} shows, the number of observations is uneven across districts and monitoring sites,
so previews for sparsely observed regions should be interpreted with caution.

% Requires \usepackage{booktabs}
\begin{table*}[tbp]
    \centering
    \caption{Summary of observations and monitoring sites by river basin district for the Norwegian water-conductivity data.}
    \label{tab:norway-rbd-summary}
    \small
    \begin{tabular*}{\textwidth}{@{\extracolsep{\fill}}llrrrrr@{}}
        \toprule
        \multicolumn{2}{c}{River basin district} & \multicolumn{2}{c}{Counts} & \multicolumn{3}{c}{Observations per site} \tabularnewline
        \cmidrule(lr){1-2} \cmidrule(lr){3-4} \cmidrule(lr){5-7}
        ID & Name & Observations & Sites & Median & Max. & Min. \tabularnewline
        \midrule
        NO1101 & Moere and Romsdal & 478 & 14 & 20.5 & 96 & 5 \tabularnewline
        NO1102 & Troendelag & $1{,}223$ & 39 & 19 & 234 & 1 \tabularnewline
        NO1103 & Nordland & 603 & 29 & 6 & 213 & 1 \tabularnewline
        NO1104 & Troms & 461 & 28 & 10 & 76 & 1 \tabularnewline
        NO1105 & Finnmark & 300 & 11 & 7 & 219 & 4 \tabularnewline
        NO1106 & Norwegian - Finnish & 689 & 24 & 20.5 & 157 & 7 \tabularnewline
        NO2 & Bothnian Sea & 20 & 3 & 1 & 18 & 1 \tabularnewline
        NO5 & Skagerrak and Kattegat & 106 & 8 & 9 & 27 & 3 \tabularnewline
        NO5101 & Glomma & $2{,}161$ & 43 & 20 & 801 & 1 \tabularnewline
        NO5102 & West-Bay & $2{,}398$ & 50 & 11.5 & 993 & 2 \tabularnewline
        NO5103 & Agder & $1{,}875$ & 58 & 20 & 268 & 2 \tabularnewline
        NO5104 & Rogaland & $1{,}985$ & 29 & 20 & 385 & 3 \tabularnewline
        NO5105 & Hordaland & 470 & 18 & 12 & 136 & 2 \tabularnewline
        NO5106 & Sogn og Fjordane & 656 & 16 & 24.5 & 155 & 3 \tabularnewline
        \bottomrule
    \end{tabular*}
\end{table*}

The Glomma district provides a particularly clear example. Its preview shows a
strong departure from unimodality, in addition to comparatively high location
and scale effects. This pattern is plausibly related to the composition of the
measurements in the district: about 38\% of the Glomma observations were
collected from the Alna River, mostly at a single intensively monitored site in
Oslo. Thus, a
regional density that pools several monitoring sites can become multimodal
when those sites have persistently different conductivity levels. The PTM
makes this heterogeneity visible as a structured shape effect.

\begin{figure}[tbp]
    \centering
    \begin{minipage}[t]{0.32\linewidth}
        \vspace{0pt}
        \includegraphics[width=\linewidth]{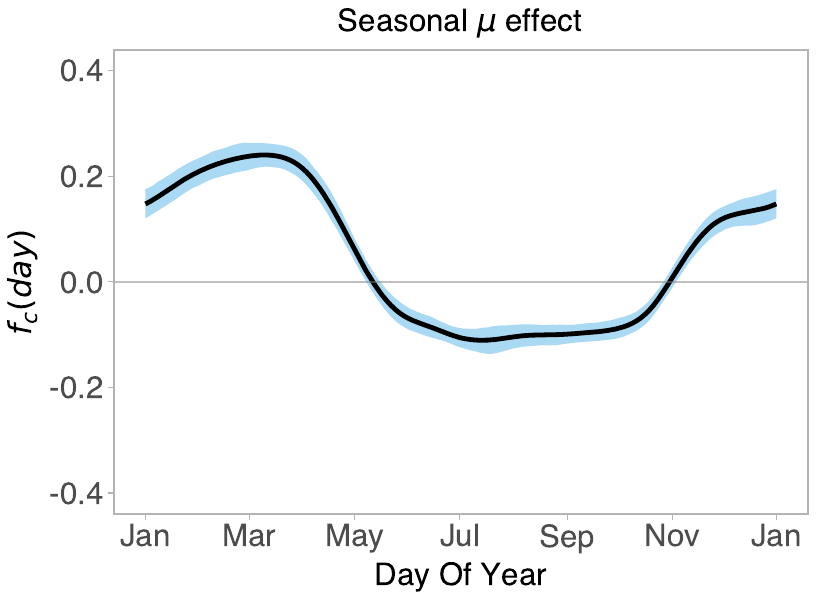}
    \end{minipage}\hfill
    \begin{minipage}[t]{0.32\linewidth}
        \vspace{0pt}
        \includegraphics[width=\linewidth]{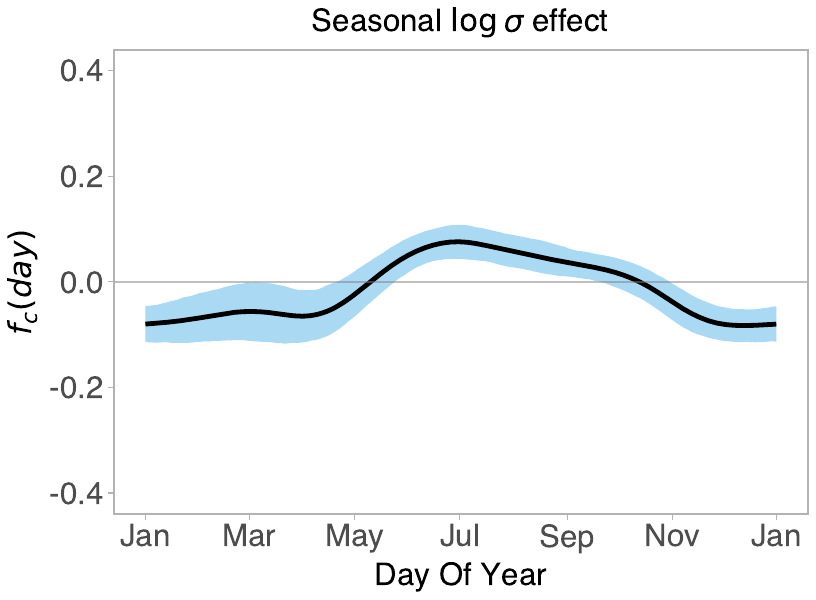}
    \end{minipage}\hfill
    \begin{minipage}[t]{0.32\linewidth}
        \vspace{0pt}
        \includegraphics[width=\linewidth]{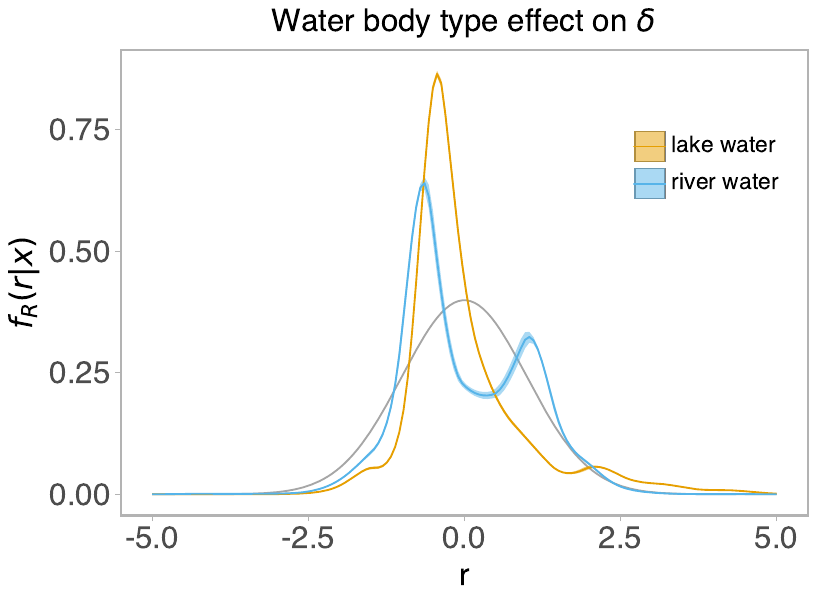}
    \end{minipage}
    \par\vspace{0.75em}
    \includegraphics[width=0.32\linewidth]{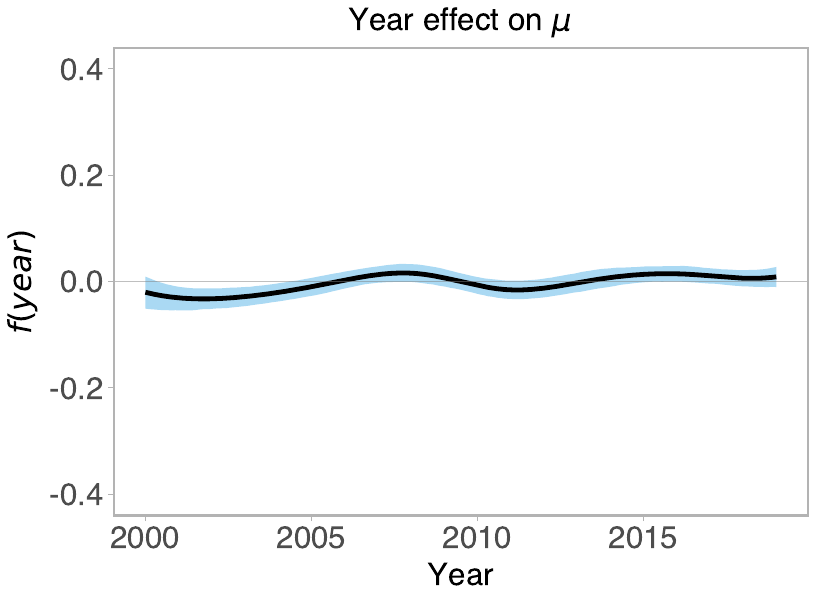}\hfill
    \includegraphics[width=0.32\linewidth]{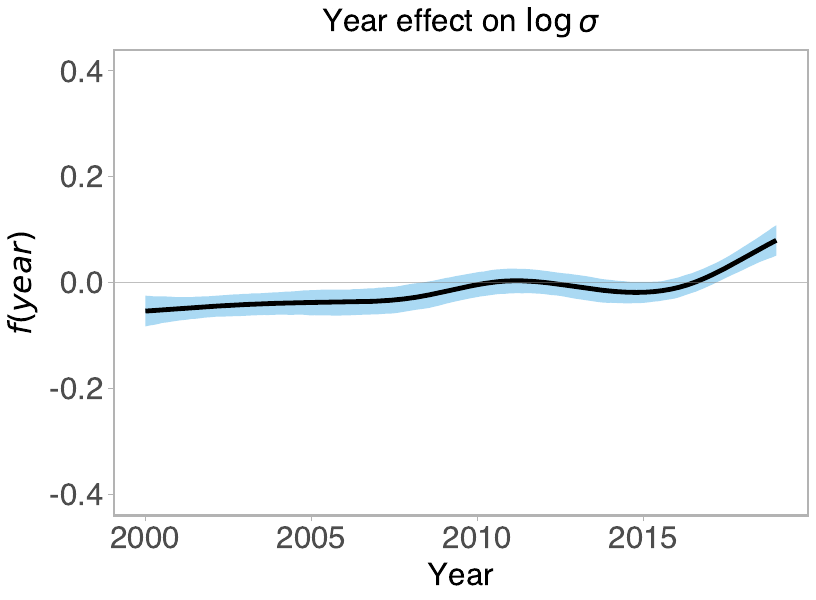}\hfill
    \includegraphics[width=0.32\linewidth]{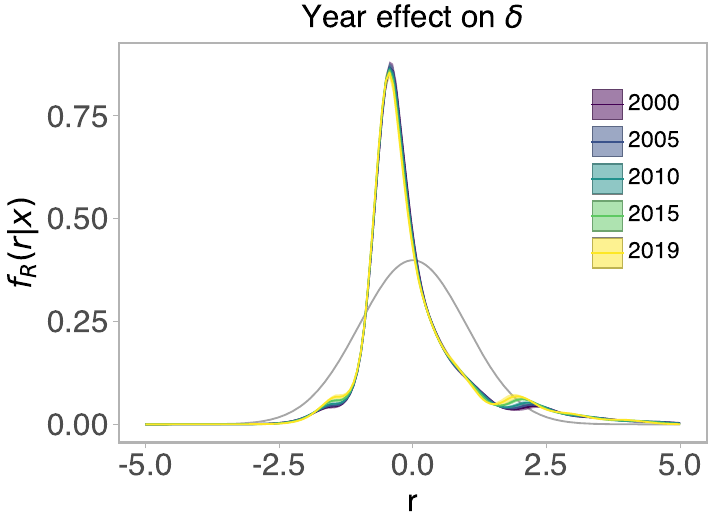}
    \par\vspace{0.75em}
    \includegraphics[width=0.32\linewidth]{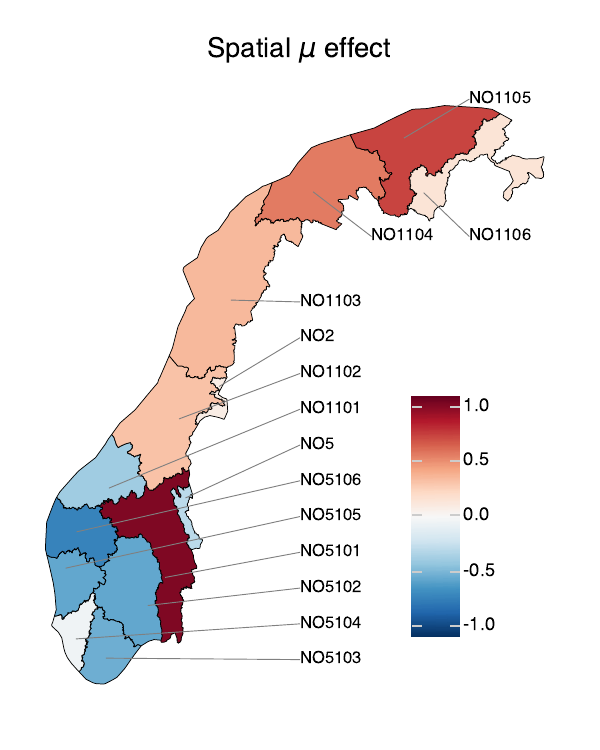}\hfill
    \includegraphics[width=0.32\linewidth]{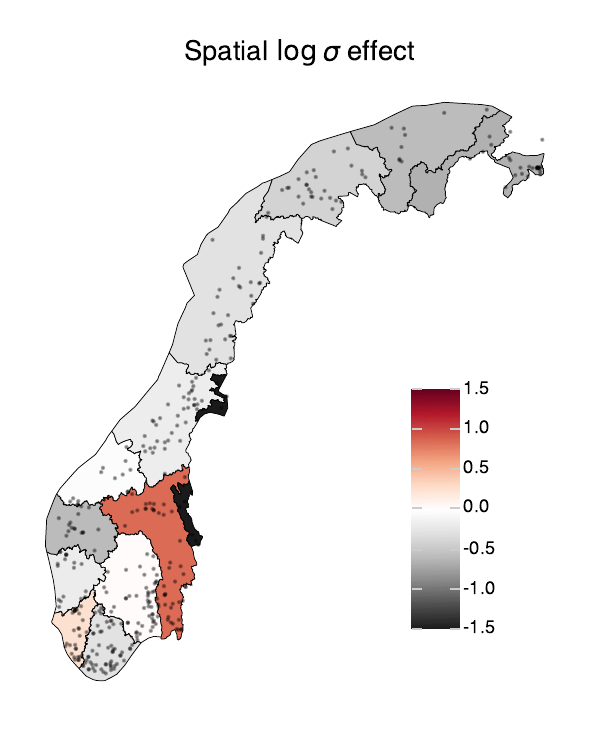}\hfill
    \includegraphics[width=0.32\linewidth]{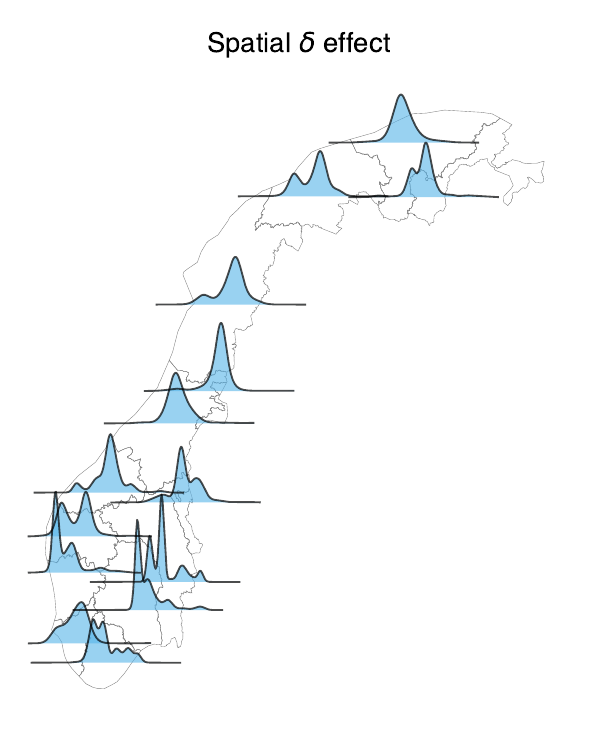}
    \caption{Estimated effects in the PTM with $D=20$ transformation
        parameters, onion-spline core $[a,b]=[-5,5]$, and no monitoring-site
        effects. The top row shows day-of-year effects on location and log-scale
        and the water-body-type effect on shape. The middle row shows year
        effects on location, log-scale, and shape; day-of-year shape previews
        are omitted because their curves overlap almost exactly. The bottom row
        shows river-basin district effects on location, log-scale, and shape.
        The log-scale plot of the bottom row shows the locations of the
        individual monitoring sites as small black dots. Blue shaded ribbons
        are 90\% pointwise credible intervals. Shape effects are
        displayed as term-wise standardized-density previews, with the
        remaining centered transformation terms held at their reference
        values.}
    \label{fig:norway-effects}
\end{figure}

\subsubsection{Sensitivity to monitoring-site effects}
Adding site-specific random intercepts to the location and log-scale predictors
absorbs much of the persistent between-site heterogeneity. In the resulting
fit, shown in
\autoref{fig:norway-effects-site-ri}, the river-basin scale contrasts are strongly attenuated and the spatial
shape previews become substantially more similar and predominantly unimodal,
although some regional structure remains in location, scale, and shape. 
As in the focal fit, the
day-of-year shape previews are essentially identical and not included in the figure. 
The water-body contrast is also strongly attenuated: both previews are broadly unimodal,
with the river preview showing a lower, broader peak and lower density over the displayed upper range
than the lake preview.

This sensitivity analysis illustrates a broader point about the interpretation
of conditional-shape PTMs. A shape effect is conditional on the location and scale structure
included elsewhere in the model. If persistent group-level shifts are omitted,
their mixture can appear as skewness or multimodality and can be represented by
the flexible transformation predictor. Adding the relevant grouping structure
may then remove much of the apparent shape effect. This is not a contradiction: 
the pooled model describes heterogeneity across the
sampled sites mainly through the broad discrete spatial effect, whereas the 
site-adjusted model can more closely represent the idiosyncrasies of particular sites.
Importantly, the focal spatial effects should not be
interpreted as intrinsic or causal properties of the river-basin districts.
At the same time, their attenuation demonstrates how structured shape effects
can serve as a diagnostic for unresolved heterogeneity and motivate a refined
location--scale specification.

Predictive comparisons of the three distributional specifications, with and
without monitoring-site effects, are summarized in
\autoref{tab:application-model-comparison}. The conditional-shape PTM has the lowest PSIS-LOO and WAIC deviances both with and without
monitoring-site effects; the fixed-shape PTM ranks second and the Gaussian
model third. 
An expanded comparison including slightly varied model setups and corresponding
transformed-residual Q--Q plots are provided in
\autoref{tab:dwd-recent-fit-comparison} and
\autoref{fig:application-qq}, respectively. When including monitoring sites, we observed elevated numbers of Pareto-$k$ warnings, so the model ranking for the site-adjusted models should not be treated as conclusive; \autoref{app:application-loo-diagnostics} provides additional details.

\begin{table}[btp]
    \centering
    \caption{Predictive comparison of the models fitted in the two
        applications. Reported quantities are the PSIS-LOO and WAIC deviances,
        their standard errors (SE), and the corresponding effective numbers of
        parameters $p_{\mathrm{eff}}$.}
    \label{tab:application-model-comparison}
    \scriptsize
    \setlength{\tabcolsep}{3pt}
    \begin{tabular*}{\linewidth}{@{\extracolsep{\fill}}l*{3}{rrr}@{}}
        \toprule
        & \multicolumn{6}{c}{Norwegian water conductivity}
        & \multicolumn{3}{c}{} \\
        \cmidrule(lr){2-7}
        & \multicolumn{3}{c}{Without site effects}
        & \multicolumn{3}{c}{With site effects}
        & \multicolumn{3}{c}{German temperature} \\
        \cmidrule(lr){2-4}\cmidrule(lr){5-7}\cmidrule(l){8-10}
        Model
        & Deviance $\downarrow$ & SE & $p_{\mathrm{eff}}$
        & Deviance $\downarrow$ & SE & $p_{\mathrm{eff}}$
        & Deviance $\downarrow$ & SE & $p_{\mathrm{eff}}$ \\
        \midrule
        \multicolumn{10}{@{}l}{\textbf{\sffamily PSIS-LOO}} \\
        CS-PTM & \textbf{$18{,}515$} & $210.6$ & $393$ & \textbf{$-10{,}885$} & $245.4$ & $941$ & \textbf{$1{,}457{,}565$} & $1{,}515.8$ & $5{,}907$ \\
        FS-PTM & $24{,}009$ & $222.9$ & $109$ & $-10{,}746$ & $243.1$ & $720$ & $1{,}504{,}869$ & $1{,}565.9$ & $3{,}163$ \\
        Gaussian  & $25{,}752$ & $230.9$ & $177$ & $-9{,}472$ & $273.1$ & $1{,}098$ & $1{,}511{,}783$ & $1{,}572.8$ & $4{,}833$ \\
        \addlinespace
        \multicolumn{10}{@{}l}{\textbf{\sffamily WAIC}} \\
        CS-PTM & \textbf{$18{,}481$} & $209.4$ & $376$ & \textbf{$-11{,}050$} & $242.8$ & $858$ & \textbf{$1{,}457{,}472$} & $1{,}515.6$ & $5{,}860$ \\
        FS-PTM & $24{,}007$ & $222.9$ & $108$ & $-10{,}818$ & $242.5$ & $684$ & $1{,}504{,}839$ & $1{,}565.9$ & $3{,}148$ \\
        Gaussian  & $25{,}749$ & $231.3$ & $176$ & $-9{,}826$ & $266.2$ & $921$ & $1{,}511{,}737$ & $1{,}572.7$ & $4{,}810$ \\
        \bottomrule
    \end{tabular*}

    \vspace{0.35em}
    \begin{minipage}{\linewidth}
        \footnotesize
        \textit{Note.} The predictive criteria are reported on the deviance
        scale, defined as $-2$ times the corresponding expected log predictive
        density; smaller values indicate better predictive fit.
        Both criteria use draws from the fitted variational approximations.
        For both applications, these observation-level criteria use
        the working assumption of conditionally independent observations.
        Within each
        application and model category, the reported fit minimizes both
        criteria among the corresponding rows in
        \autoref{tab:dwd-recent-fit-comparison}.
        CS-PTM and FS-PTM denote the conditional-shape and fixed-shape
        PTMs, respectively.

    \end{minipage}
\end{table}

\subsection{Daily air temperature in Germany}
\label{sec:application-dwd-temperature}

\subsubsection{Data and scientific context}
Our second application considers daily mean air temperature obtained from the
quality-controlled historical daily station observations of the Climate Data
Center of the German Weather Service (Deutscher Wetterdienst, DWD), version
v24.3 \parencite{DeutscherWetterdienst2024-DailyStationObservations}. The data
were downloaded in 2026 and are distributed under the Creative Commons
Attribution 4.0 International (CC BY 4.0) license. They cover the 20 complete
calendar years from 2006 through 2025 and comprise $164$ monitoring stations
distributed across Germany. After observations with missing temperature are
removed, the analysis includes $1{,}182{,}514$ station-days. The station set contains
all eligible stations above $900$ meters and a space-filling selection of stations
at lower elevations. The response is daily mean temperature in degrees Celsius.
Station elevations range from 0 to $1{,}486$ m, except for the outlying
Zugspitze station at $2{,}956$ m. Ten stations lie above
$900$ m. Their spatial distribution is shown in
\autoref{fig:dwd-temperature-stations}.

\begin{figure}[tbp]
    \centering
    \hfill
    \raisebox{-0.5\height}{%
        \includegraphics[width=0.32\linewidth]{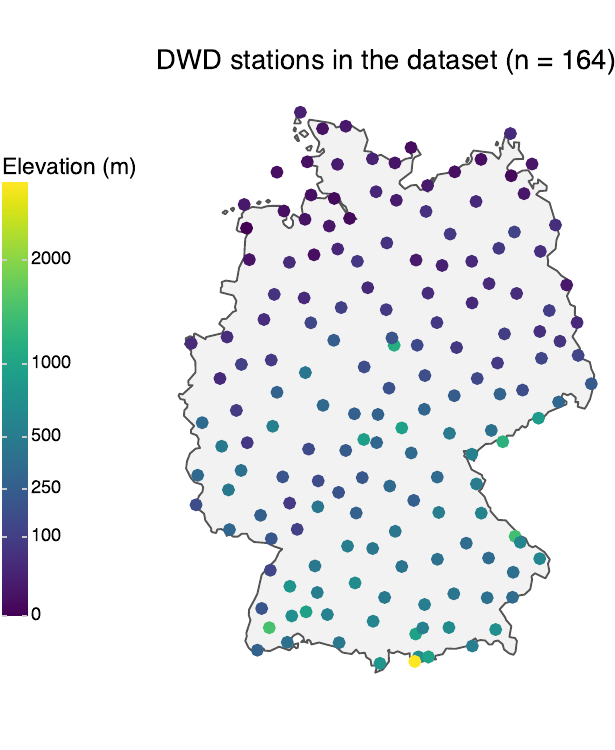}}
    \hfill
    \raisebox{-0.5\height}{%
        \includegraphics[width=0.5\linewidth]{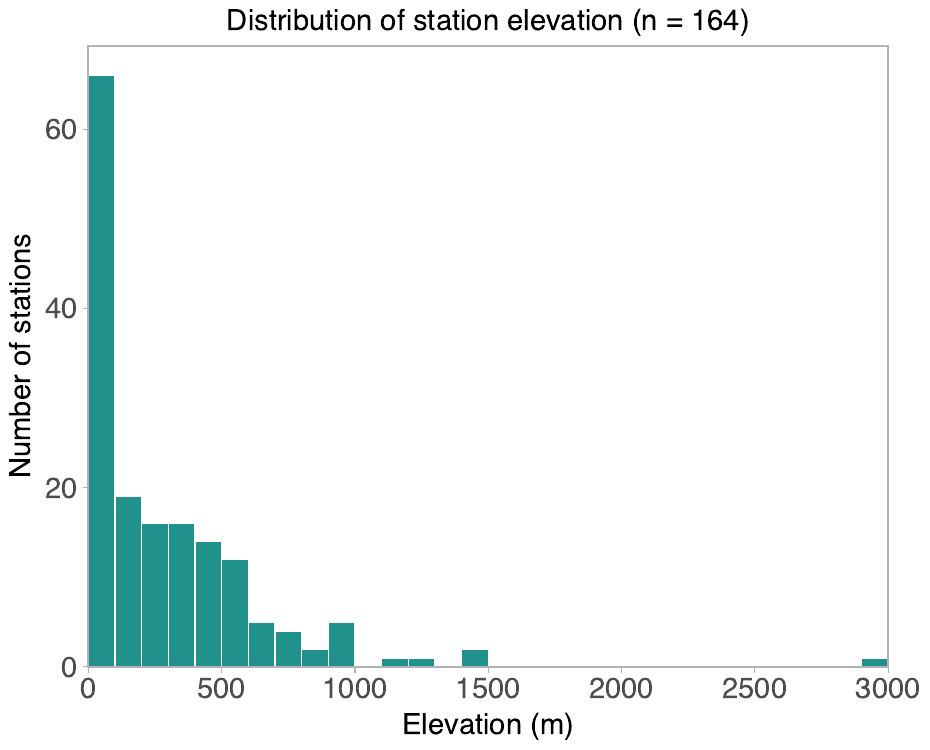}}
    \hfill
    \caption{Locations (left) and elevation distribution (right) of the 164 DWD
        monitoring stations included in the German daily-temperature application.}
    \label{fig:dwd-temperature-stations}
\end{figure}

The design is an updated temperature adaptation of earlier structured additive
distributional-regression applications to German precipitation
\parencite{Umlauf2018-BamlssBayesianAdditive,Kneib2019-ModularRegressionLego}.
Those analyses motivate the combination of seasonal, elevation, spatial, and
space-season interaction effects used here, but the response, observation
period, and interpretation differ. 
The application also provides a large-data
example with more than one million observations that demonstrates the scalability of 
our approach. The model does not use recent weather observations or output from a 
weather-forecast model. It therefore describes how temperature distributions vary across 
place and time; it does not forecast the temperature on a specific future day.

\subsubsection{Model specification}
Let $\text{day}_i$ denote the day of the year\footnote{Leap-day observations were retained when defining calendar day of year (1--366).}, $\text{elev}_i$ denote station elevation, $(\text{lon}_i,\text{lat}_i)$ denote longitude
and latitude, and $\text{year}_i$ denote the calendar year. Before fitting, we center
and scale temperature using its empirical mean and standard deviation. Apart from adding a P-spline to capture the long-term effect of $\text{year}_i$, we use the same
broad predictor structure in the location, log-scale, and shape predictors as \textcite{Umlauf2018-BamlssBayesianAdditive} and \textcite{Kneib2019-ModularRegressionLego},
\begin{alignat*}{7}
    \eta_{t,i}
     & = \gamma^t_0
     & \;              & + f^t_1(\text{day}_i)
     & \;              & + f^t_2(\text{year}_i)
     & \;              & + f^t_3(\text{elev}_i)
     & \;              & + f^t_4(\text{lon}_i)
     & \;              & + f^t_5(\text{lat}_i)
     & \;              & + f^t_6(\text{day}_i,\text{lon}_i,\text{lat}_i),      \\[0.5em]
    \bsdelta(\bsx_i)
     & = \bsdelta_0
     & \;              & + \bsf^\delta_1(\text{day}_i)
     & \;              & + \bsf^\delta_2(\text{year}_i)
     & \;              & + \bsf^\delta_3(\text{elev}_i)
     & \;              & + \bsf^\delta_4(\text{lon}_i)
     & \;              & + \bsf^\delta_5(\text{lat}_i)
     & \;              & + \bsf^\delta_6(\text{day}_i,\text{lon}_i,\text{lat}_i),
\end{alignat*}
for $t \in \{\mu, \sigma\}$.
The seasonal effects $f^t_1(\text{day}_i)$ and $\bsf^\delta_1(\text{day}_i)$ are
cyclic P-splines, the other effects are univariate and bivariate P-splines, and
$f^t_6$ and $\bsf^\delta_6$ are tensor-product terms that allow the seasonal pattern to vary over space and include the bivariate interactions. The conditional-shape PTM uses
$D=10$ transformation parameters on the core interval $[-5,5]$. We compare it with a fixed-shape PTM whose
transformation has no covariate-dependent effects and with a Gaussian
location--scale model, while retaining the same location and scale predictors.
In the location and log-scale predictors, the day, year, and elevation smooths
used basis dimension $k=20$, the longitude and latitude main effects used
$k=10$, and each margin of the tensor-product effect used $k=10$.
The corresponding dimensions in the shape predictor were $k=10$ and $k=7$,
respectively.
We calibrated the location and log-scale smoothing-variance priors as in
\autoref{sec:application-norway-water}.
The within-covariate and
between-transformation-dimension variances in
the shape predictor independently followed
$\operatorname{Weibull}(1/2,0.05)$ priors.
The variational approximation used separate dense blocks for
moderate-dimensional main-effect terms (including the main-effect terms in the shape predictor)
and diagonal blocks for the tensor-product interactions.

Additional details on the optimization settings are included in \autoref{app:application-optimization}.

\subsubsection{Results}

On a MacBook Pro with a 10-core Apple M1 Pro processor and 32~GB of memory,
the staged PTM fit took approximately 74 minutes.

\paragraph{Predictive performance and location--scale effects.}
The predictive criteria in \autoref{tab:application-model-comparison} favor
the PTM. An expanded comparison in
\autoref{tab:dwd-recent-fit-comparison} shows that other conditional-shape PTM variants with slightly
varied setups
likewise outperform the fixed-shape PTM and Gaussian models, with the focal $D=10$,
$[-5,5]$ specification attaining the best performance in both criteria.  
\autoref{fig:application-qq} provides transformed-residual Q--Q plots.
\autoref{fig:dwd-temperature-effects} summarizes the results.
The location effects show the expected seasonal, spatial, and elevation-related temperature variation, while the scale effects reveal distinct patterns in conditional dispersion. The seasonal contribution reaches
its minimum in late January and its maximum in late July to early August.
Across the nine focal cities, the seasonal posterior mean effects
excluding spatial main contributions
range from $-9.99$ to $-8.18\,{}^\circ\mathrm{C}$ at the
winter minima and from $7.98$ to $9.66\,{}^\circ\mathrm{C}$ at the summer
maxima.
The cities share a broadly similar seasonal location pattern, with modest
spatial variation mainly in the amplitude of the annual cycle.
The seasonal log-scale contribution is
largest in winter, corresponding to multiplicative scale
factors of $1.24$--$1.34$, and smallest in late September, with factors of
$0.76$--$0.84$.
These factors exponentiate the posterior mean of the seasonal main
effect plus the space-season interactions on the log-scale predictor;
they are partial scale contributions, not complete conditional standard deviations.

\begin{figure}[tbp]
    \centering
    \includegraphics[width=0.32\linewidth]{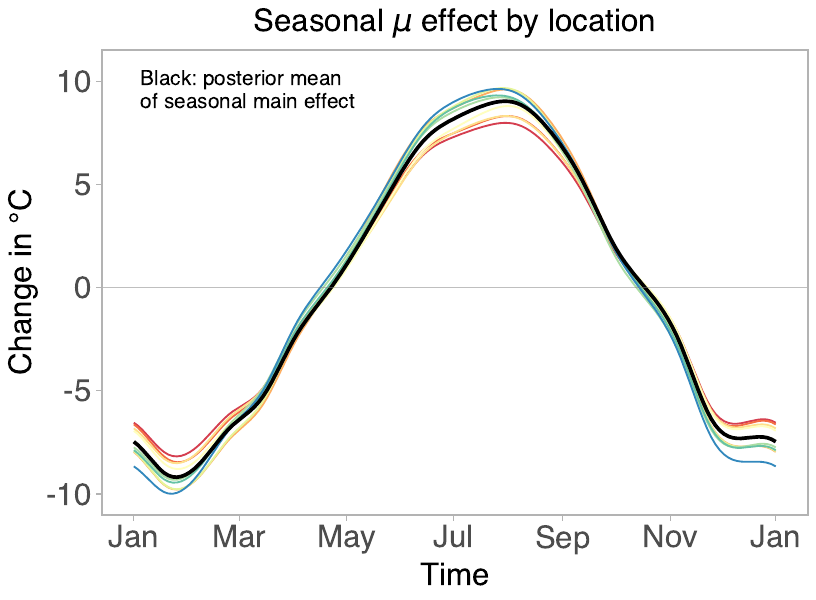}\hfill
    \includegraphics[width=0.32\linewidth]{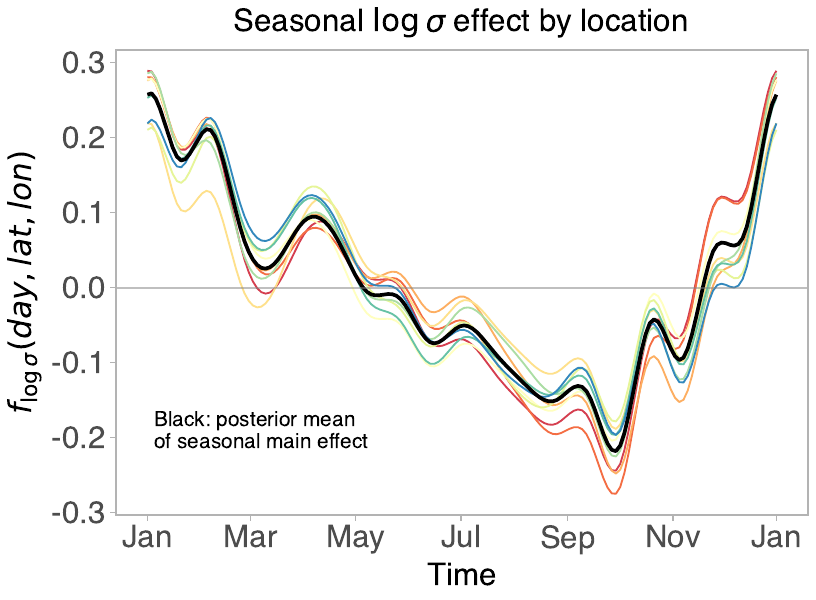}\hfill
    \includegraphics[width=0.32\linewidth]{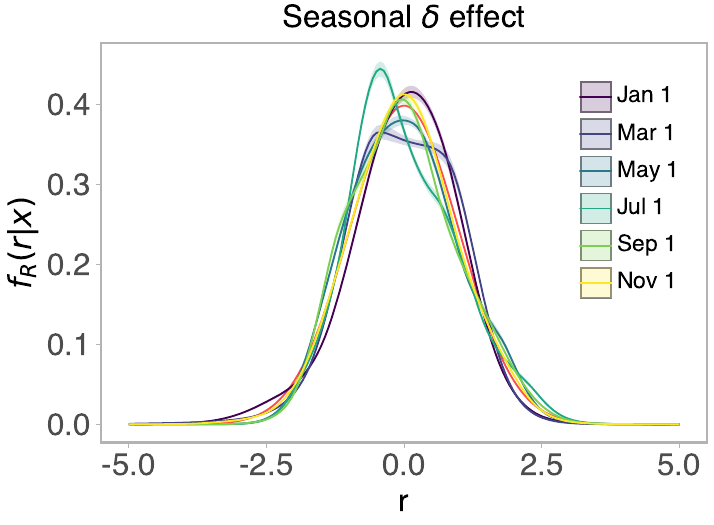}
    \par\vspace{0.75em}
    \includegraphics[width=0.32\linewidth]{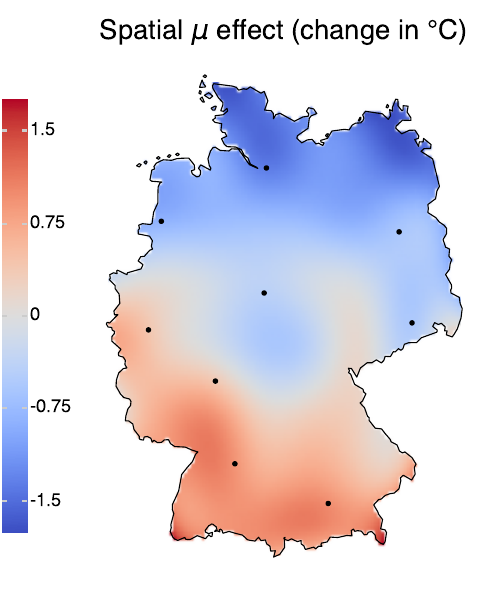}\hfill
    \includegraphics[width=0.32\linewidth]{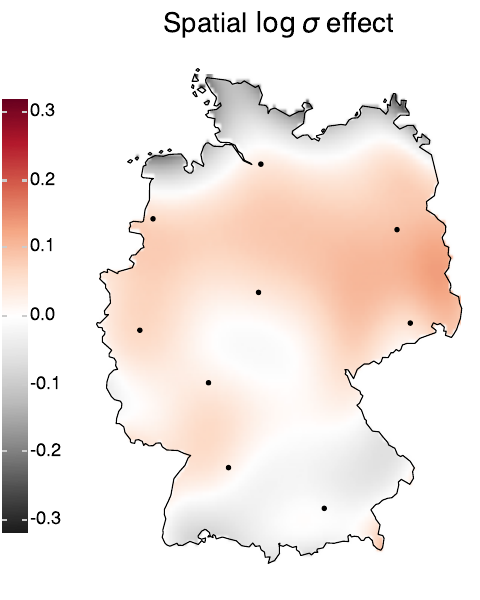}\hfill
    \includegraphics[width=0.32\linewidth]{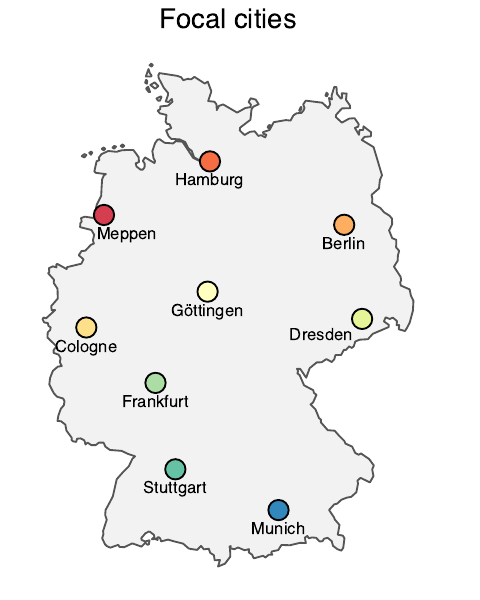}
    \par\vspace{0.75em}
    \includegraphics[width=0.32\linewidth]{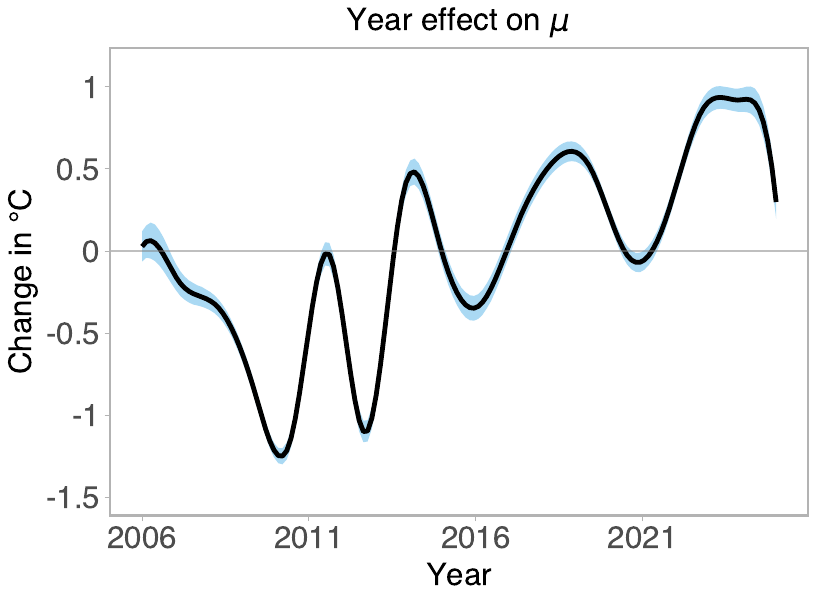}\hfill
    \includegraphics[width=0.32\linewidth]{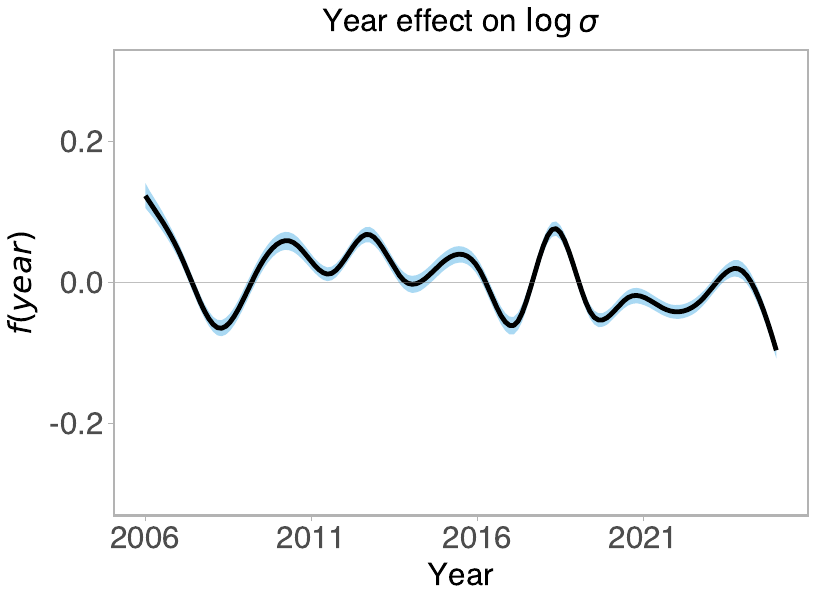}\hfill
    \includegraphics[width=0.32\linewidth]{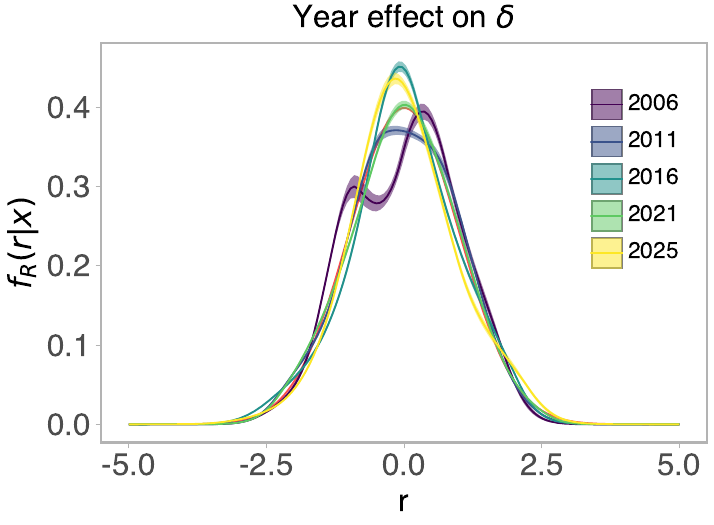}
    \par\vspace{0.75em}
    \includegraphics[width=0.32\linewidth]{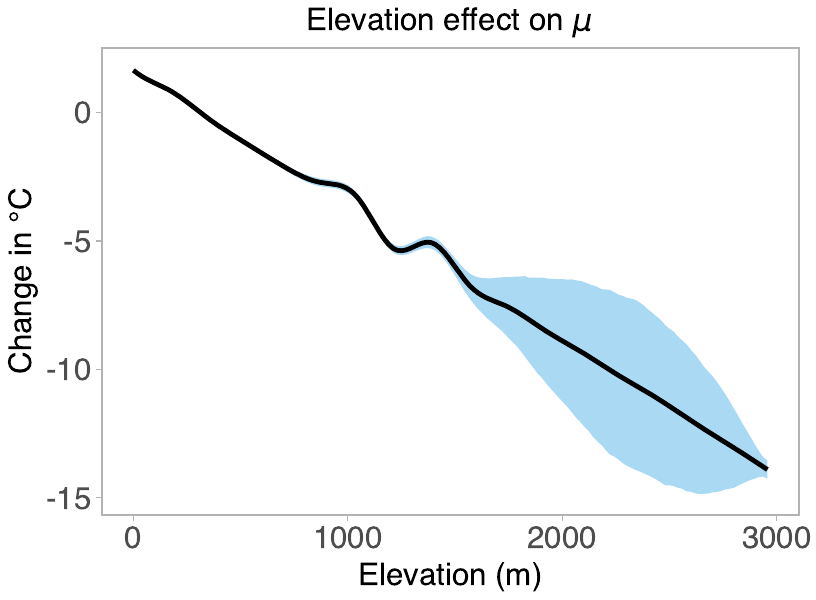}\hfill
    \includegraphics[width=0.32\linewidth]{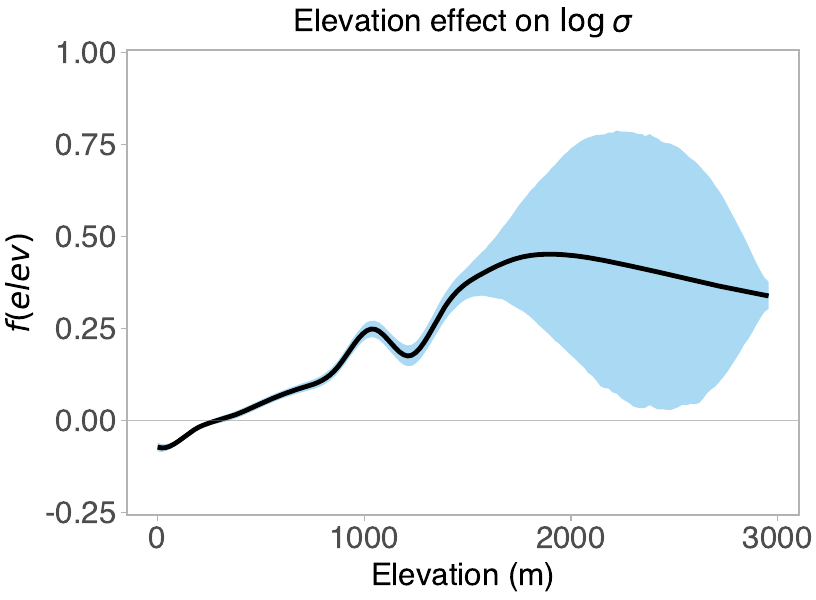}\hfill
    \includegraphics[width=0.32\linewidth]{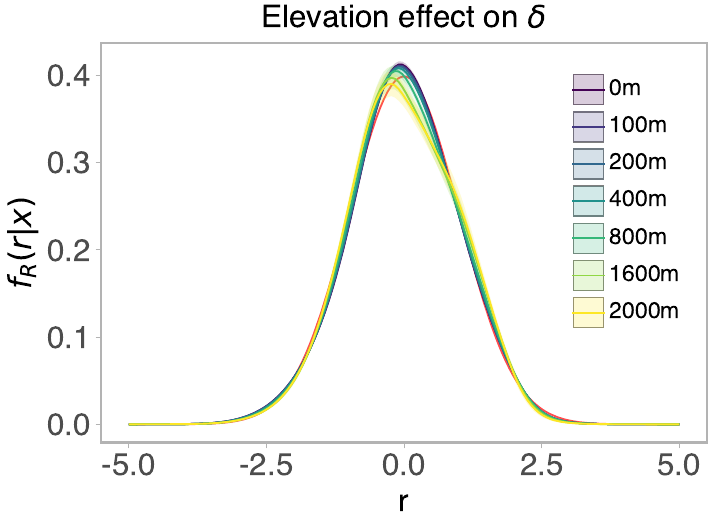}
    \caption{Effect summaries for the
        German daily-temperature application. Location effects in the left
        column are shown in degrees Celsius after multiplying effects on the
        standardized response scale by the empirical response standard
        deviation, $7.40\,^\circ\mathrm{C}$. In the seasonal location and
        log-scale panels, colored curves combine the seasonal main effect
        with the day--latitude, day--longitude, and day--latitude--longitude interactions at the nine focal cities;
        black curves show the posterior mean of the seasonal main effect alone.
        The colored curves exclude the latitude and longitude main effects
        and the latitude--longitude interaction, so colored-minus-black differences
        isolate the space-season interactions.
        Uncertainty for these effects is not shown to avoid clutter. Where
        shown in the other effect panels, blue shaded ribbons are 90\%
        pointwise credible intervals. The spatial maps
        show the spatial main contribution and exclude the day-specific
        space-season interaction; the middle-right focal-city map is the
        location and color key. Each standardized
        density preview combines the shape intercept with the indicated term
        while the other centered shape terms are held at zero. These previews
        are diagnostic partial shape summaries, not complete conditional
        densities or densities averaged over the omitted covariates.
        Thin red curves mark the standard Gaussian density for reference.}
    \label{fig:dwd-temperature-effects}
\end{figure}

The spatial location effect shows a broad north-south gradient,
with lower temperature over much of northern and northeastern Germany
and higher temperature in the south. In contrast, the spatial log-scale
effect is positive across a broad central-to-eastern region and negative along
the northern coast and in parts of southern Germany. The fitted year effect
on location oscillates but trends upward as successive low and high points generally increase.
This pattern is consistent with
general warming, but it should be noted that this is a descriptive conditional result for the selected
station network, not a causal or nationally representative trend estimate.
The year effect on log scale is also non-monotone, and it is smaller than the
seasonal scale variation.

The elevation effect on location decreases almost linearly by about
$15$--$16\,{}^\circ\mathrm{C}$ between sea level and $3{,}000$ m, equivalent
on average to roughly $5\,{}^\circ\mathrm{C}$ per $1{,}000$ m. The log-scale effect increases over the
well-supported elevation range. There are no stations between $1{,}486$ and
$2{,}956$ m, and the posterior intervals for the two elevation effects widen accordingly.
The broad interpolation across the gap is generally plausible, but its
shape is not directly supported by observations.
The near-overlap of the elevation density previews indicates that the
corresponding shape effect is strongly regularized.

\begin{figure}[tbp]
    \centering
    \includegraphics[width=\linewidth]{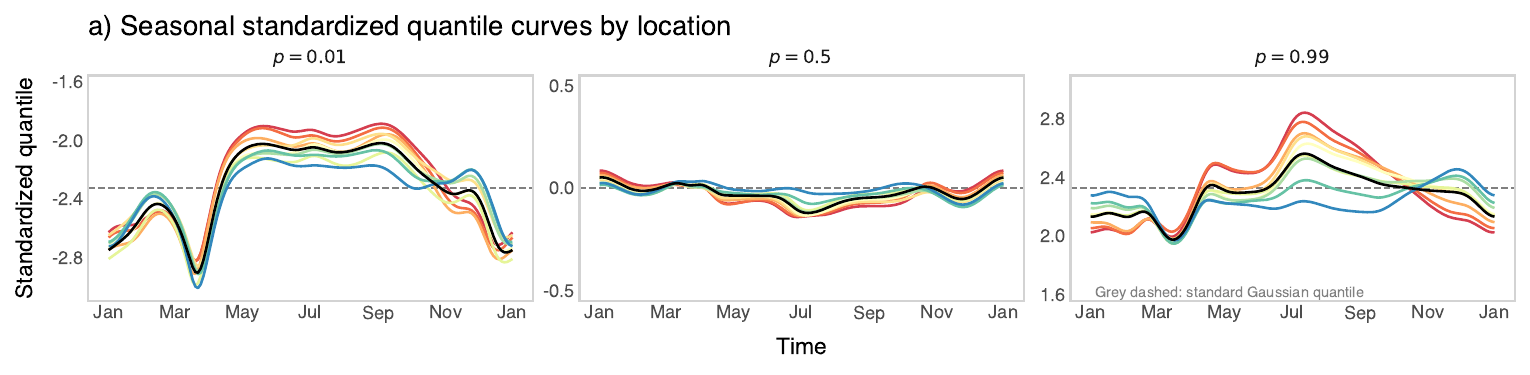}
    \par\vspace{0.35em}
    \includegraphics[width=\linewidth]{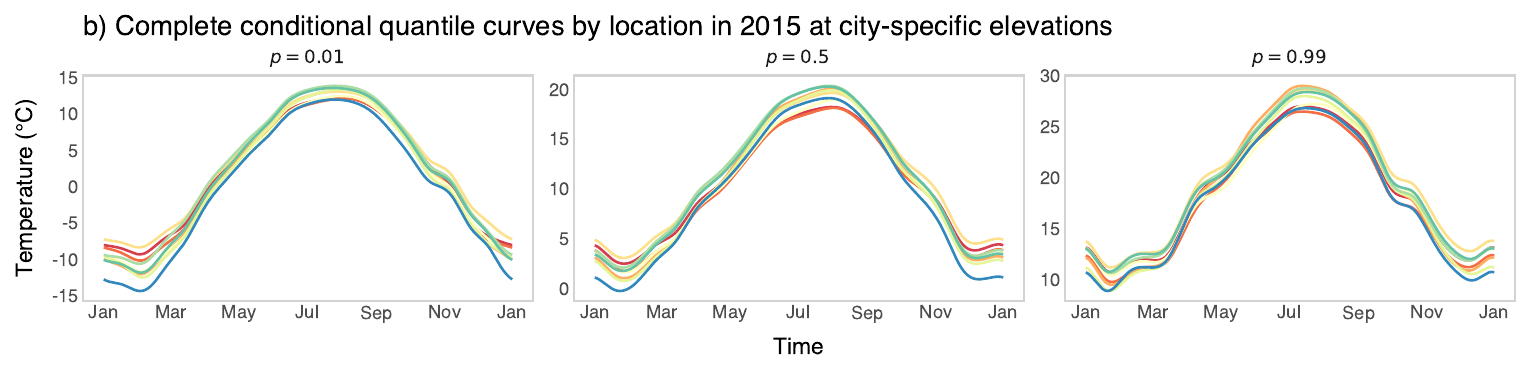}
    \par\vspace{0.35em}
    \includegraphics[width=\linewidth]{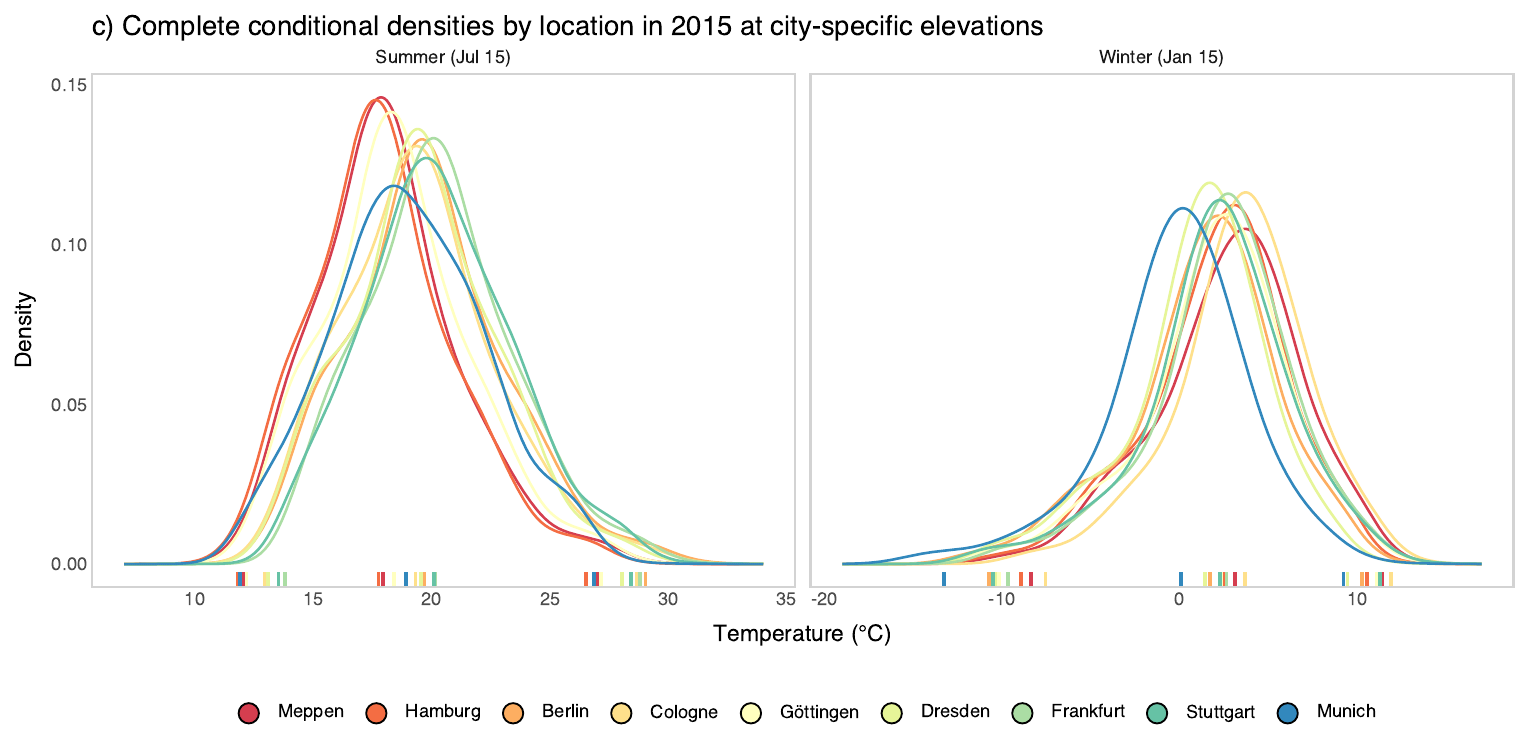}
    \caption{Seasonal distribution summaries for the PTM at the nine
        locations in the German daily-temperature application. The top row
        shows diagnostic posterior mean standardized quantiles at
        $p\in\{0.01,0.5,0.99\}$. Colored curves combine the shape intercept
        with the seasonal, spatial main, and space-season interaction terms,
        while black curves retain only the shape intercept and seasonal main
        effect; year and elevation shape effects are omitted. Gray dashed lines
        mark the corresponding standard Gaussian quantiles. The middle row
        shows complete conditional quantiles in degrees Celsius during 2015 at
        the city-specific elevations. The lower panels show the corresponding
        complete conditional densities on two representative
        days. Colored
        rug marks on the horizontal axes indicate the location-specific
        $p\in\{0.01,0.5,0.99\}$ quantiles. Credible interval ribbons are
        omitted to avoid clutter. Corresponding summaries for the fixed-shape PTM and Gaussian
location--scale model are shown in
\autoref{fig:dwd-basic-ptm-quantile-curves} and
\autoref{fig:dwd-gaussian-quantile-curves}, respectively.}
    \label{fig:dwd-temperature-quantile-curves}
\end{figure}

\paragraph{Standardized shape effects.}
The seasonal density previews in the right column of \autoref{fig:dwd-temperature-effects}
suggest a reversal in asymmetry: winter previews extend farther toward the lower end of the displayed range, whereas summer previews extend farther toward the upper end of the displayed range.

The top row of \autoref{fig:dwd-temperature-quantile-curves} probes this asymmetry at the displayed $p=0.01$ and $p=0.99$ quantiles more directly.
During much of summer, the fitted $p=0.99$ quantile exceeds the
corresponding standard Gaussian quantile at most focal locations, placing the displayed upper quantile farther from zero, while the fitted $p=0.01$ quantile is less negative at all
nine locations, placing the displayed lower quantile closer to zero. Munich is the clearest
exception for the upper quantile: its summer $p=0.99$ curve remains near or below
the Gaussian reference. In winter and early spring, the fitted $p=0.99$
quantiles are generally less extreme than the Gaussian reference, whereas the
fitted $p=0.01$ quantiles are more negative, so the displayed upper quantile lies closer to zero and the lower quantile lies farther from zero.
Because all displayed
distributions have mean zero and variance one, these movements
describe pure shape changes, isolated from location and scale.

The seasonal shape pattern also shows substantial spatial variation, most prominently
in the extreme quantiles. During the plotted summer window,
Meppen's $p=0.99$ quantile reaches $2.84$, whereas Munich's peaks at
$2.24$; differences among locations are smaller at $p=0.5$.
The changing gaps between the
colored curves and the black seasonal-main-effect curve indicate that spatial
shape effects interact with the seasonal pattern rather than acting as fixed offsets.

However, note that the density previews in \autoref{fig:dwd-temperature-effects}
and the derived quantile curves in \autoref{fig:dwd-temperature-quantile-curves} 
should not be confused with full conditional densities or quantiles of full conditional
distributions. They are instead probes that allow us to inspect how seasonal and 
spatial shape terms modify the reference distribution in isolation, giving us insight 
into how the model arrives at its predictions.

\paragraph{Complete conditional distributions.}
Panels~b and~c of
\autoref{fig:dwd-temperature-quantile-curves} show how the fitted location,
scale, and shape effects combine into complete conditional quantiles and densities
on the temperature scale, where the seasonal location effect is dominant.
\autoref{fig:dwd-basic-ptm-quantile-curves} and
\autoref{fig:dwd-gaussian-quantile-curves} provide companion displays for the
fixed-shape PTM and Gaussian location--scale model.
The colored rug marks in the lower panels show
the posterior mean complete conditional quantiles at
$p\in\{0.01,0.5,0.99\}$;
\autoref{tab:dwd-complete-conditional-quantiles} reports the exact values. 

The conditional-shape PTM's added shape flexibility
can alter quantile predictions even when the fitted centers are nearly unchanged.
For Meppen on July 15, the $(p=0.01,0.5,0.99)$ quantiles are
$(12.03,17.95,27.01)\,{}^\circ\mathrm{C}$ under the conditional-shape model,
$(11.00,18.29,25.66)\,{}^\circ\mathrm{C}$ under the fixed-shape model, and
$(10.69,18.14,25.59)\,{}^\circ\mathrm{C}$ under the Gaussian model. The conditional-shape model
and Gaussian medians differ by only $0.19\,{}^\circ\mathrm{C}$, but the
distances from the conditional-shape model median to its lower and upper quantiles are $5.92$ and
$9.06\,{}^\circ\mathrm{C}$, respectively, compared with
$7.45\,{}^\circ\mathrm{C}$ in both directions under the Gaussian model. The
farther extension into the displayed upper range of Meppen's summer density in panel~c makes this unequal
quantile spacing visible.

Munich on January 15 shows the reverse pattern: its
$p=0.01$ curve reaches particularly low winter values in panel~b, and its blue
density in panel~c extends farther into the displayed lower range. The conditional-shape model
$p=0.01$, $0.5$, and $0.99$ quantiles are $-13.30$, $0.09$, and
$9.22\,{}^\circ\mathrm{C}$, respectively. The corresponding values are
$-9.54$, $0.05$, and $9.73\,{}^\circ\mathrm{C}$ for the fixed-shape model and $-9.91$,
$-0.08$, and $9.75\,{}^\circ\mathrm{C}$ for the Gaussian model. The fitted
medians again differ little, while the conditional-shape model has a substantially lower $p=0.01$ quantile and a slightly lower $p=0.99$ quantile than both comparison models. Across all nine
locations, its winter $p=0.01$ quantiles are $1.67$--$3.75\,{}^\circ\mathrm{C}$
below those of the fixed-shape model and $1.42$--$3.39\,{}^\circ\mathrm{C}$ below those
of the Gaussian model. 

Together, the figures and predictive comparison show how the PTM preserves
a familiar location--scale interpretation while adding controlled shape flexibility where
the data support it.

\section{Discussion}\label{sec:discussion}

Conditional-shape PTMs extend the PTM family from structured
additive location--scale regression to settings in which covariates
may also affect distributional shape.
The model is constructed to include the conditional mean $\mu(\bsx)$ and
standard deviation $\sigma(\bsx)$ of the response, while the transformation predictor
$\bsdelta(\bsx)$ controls the standardized shape of the response's conditional distribution. Its vector-valued structured additive terms allow
scientifically motivated covariate effects, and their penalties regularize departures
from a reference-family (usually Gaussian) location--scale model. The construction thus
combines a coherent conditional distribution with an explicit separation of
location, scale, and shape.

The empirical results support this contribution at three distinct levels. In
the controlled setting of simulation Scenario A, the PTM recovers the response-level
effects and the covariate-dependent transition from skewness to bimodality; in
the Gaussian subscenario, its predictive performance is nearly identical to
that of the simpler nested models, indicating effective regularization of
unnecessary transformation flexibility. Under the deliberate
misspecification of Scenario B, the PTM improves with sample size and
performs approximately as well as a Dirichlet process mixture model in terms of test-set Kullback--Leibler divergence and
CDF deviation.
The two applications illustrate a complementary benefit. The PTM retains
seasonal, spatial, temporal, or group-specific shape variation while
strongly regularizing other shape effects, and it improves PSIS-LOO and WAIC
relative to both simpler comparison models.

Methodologically, conditional-shape PTMs occupy a middle ground between parsimonious
location--scale regression and less structured conditional-density methods.
They are most useful when a flexible estimate of a complete conditional distribution is required,
the first two conditional moments should retain exact response-scale
interpretations, and substantive knowledge suggests which covariates may
change shape. A Gaussian location--scale model is preferable when its residual
shape is adequate, and a fixed-shape PTM is more economical when a single
non-Gaussian standardized distribution suffices. A well-supported parametric
distributional model may be more parsimonious, while quantile regression
is natural when only selected quantiles are the estimands and mixture-based or other
flexible density regression may be preferable when unrestricted predictive
density estimation is primary. The distinctive role of the conditional-shape PTM is therefore
structured control over where distributional flexibility enters, not
universal dominance over these alternatives.

The construction also suggests a practical fitting and interpretation
strategy. Scientifically important grouping, location, and scale effects
should be specified before residual skewness or multimodality is attributed to
shape. The reference
distribution and spline-core boundaries $a$ and $b$ should reflect defensible tail
assumptions, because changing the core support can extend finite-range flexibility but
cannot change the asymptotic tail class. Shape effects can be probed through
contrasts in standardized preview densities or standardized conditional
quantiles. This decomposability is an interpretive advantage of the conditional-shape PTM, but
the resulting partial summaries should not be mistaken for response-scale
predictions or for densities marginalized over the remaining covariates.
Predictive interpretation requires the complete shape, location, and scale
predictors. Fitted models can be
checked using transformed-residual Q--Q plots.

Four principal limitations should be noted. First, the blocked
Gaussian variational approximation omits cross-block posterior dependence and
non-Gaussian posterior features, and no exact-posterior PTM benchmark is
available. This family is comparatively rudimentary relative
to recent approaches for
additive and distributional regression
\parencite{Kleinemeier2023-ScalableEstimationStructured,Lichter2024-VariationalInferenceUncertainty,Callegher2025-StochasticVariationalInferencea}.
In additive-model comparisons, retaining dependence among coefficient blocks
was crucial for calibrated credible intervals even when point estimates were
similar \parencite{Lichter2024-VariationalInferenceUncertainty}. The
undercoverage in Scenario B cannot be attributed to this
approximation alone because every model is misspecified to some degree there,
but it shows that good predictive point summaries do not guarantee calibrated
uncertainty. Second, shape effects are conditional on the full model
specification: omitted grouping structures or interactions can be absorbed as
apparent shape effects, as demonstrated by the Norwegian sensitivity analysis and the
unmodeled interaction in Scenario B. In practice, users should therefore check whether
shape effects point to influences of hidden covariates. Third, the identity-tailed
transformation inherits the asymptotic tail class of the reference
distribution. It can flexibly describe distributions over the spline core,
but it cannot learn a fundamentally different limiting tail behavior.
Finally, the hyperprior for the between-parameter smoothing variance $\psi^2$
was chosen pragmatically based on an intercept-only prior check. Its implied
regularization need not be comparable across shape terms with different
bases, penalties, or covariate domains. These limitations motivate three focused extensions: richer structured
posterior approximations,
shape-effect priors calibrated on an interpretable distributional deviation
scale, and transformations that permit more flexible tail adaptation
without sacrificing the mean-standard-deviation separation.

Conditional-shape PTMs provide a principled and practical
extension of interpretable location--scale regression: they permit
scientifically targeted changes in distributional shape within one coherent
conditional distribution and regularize toward a simpler reference model when
that flexibility is unsupported.

\section*{Data and code availability}

The data and code supporting this study are currently available upon request.
They will be made publicly available upon submission to a journal or in a
forthcoming update.

\section*{Acknowledgments}
The authors used OpenAI Codex with the GPT-5.6 Sol model for coding and writing assistance. All generated outputs were critically reviewed and revised by the authors. The authors maintain sole responsibility for the software, analyses, interpretations, and final manuscript.

\section*{Conflict of interest}

The authors declare no conflicts of interest.

\section*{Funding}
We are grateful to the German Research Foundation (DFG) for financial support through grant 443179956.

\printbibliography

\clearpage
\appendix
\footnotesize
\setkomafont{paragraph}{\footnotesize}
\numberwithin{figure}{section}
\numberwithin{table}{section}

\section{Details for multivariate structured additive predictors}
\label{app:mv-star-predictors}

This appendix collects the derivations, computational details, and
examples supporting the compact formulation in \autoref{sec:mvpred-ptm}.

Expanding the predictor into its $D$ elements and the basis vector into its $K$ terms yields
\begin{align}
    \begin{bmatrix}
        \eta_{1} \\
        \vdots   \\
        \eta_{D} \\
    \end{bmatrix}
    \quad=\quad
    \begin{bmatrix}
        f_{1,1}(\bsx) \\
        \vdots        \\
        f_{D,1}(\bsx) \\
    \end{bmatrix} +
    \cdots +
    \begin{bmatrix}
        f_{1,K}(\bsx) \\
        \vdots        \\
        f_{D,K}(\bsx) \\
    \end{bmatrix}
    \quad=\quad
    \begin{bmatrix}
        \bsgamma^\top_{1,1} & \cdots & \bsgamma^\top_{1,K} \\
        \vdots              & \ddots & \vdots              \\
        \bsgamma^\top_{D,1} & \cdots & \bsgamma^\top_{D,K} \\
    \end{bmatrix}
    \begin{bmatrix}
        \bfb_1(\bsx) \\
        \vdots       \\
        \bfb_K(\bsx) \\
    \end{bmatrix}.
\end{align}

\paragraph{Between-dimension regularization.}
Between-dimension regularization is governed by the matrix
$\bfP^{(d)}(\psi^{2}_{k}) = \psi^{-2}_k \bfK_k^{(d)}$,
where $\bfK_k^{(d)}$ is a constant, potentially rank-deficient positive-semidefinite penalty matrix that defines the correlation structure. The variance parameter $\psi^2_k$ acts as an inverse smoothing parameter, so smaller values imply stronger between-dimension regularization for term $k$. The structural penalty matrix $\bfK_k^{(d)}$ is chosen manually to define an appropriate correlation pattern.
% Note that $\bfK^{(d)}$ is not term-specific, so we assume the same between-dimension correlation structure for all $k = 1, \dots, K$, while the strength of regularization defined through $\psi^2_k$ can vary across terms.

\paragraph{Within-dimension regularization.}
In the most general case, the matrix $\bfP^{(x)}_{k}(\bstau^{2}_{k})$ is an anisotropic tensor-product precision matrix that defines the within-dimension regularization in the covariate directions. The variance parameters in $\bstau^2_k = [\tau^2_{k,1}, \dots, \tau^2_{k,M_k}]^\top$ control the strength of regularization. For an $M_k$-dimensional tensor product, it is given by the sum
$$
    \bfP^{(x)}_{k}(\bstau^{2}_{k}) =
    \tau^{-2}_{k, 1}\widetilde \bfK^{(x)}_{k,1}
    + \cdots +
    \tau^{-2}_{k, M_k}\widetilde \bfK^{(x)}_{k,M_k}
$$
with the individual $\widetilde \bfK^{(x)}_{k,m}$ for $m = 1, \dots, M_k$ given by
$$
    \widetilde \bfK^{(x)}_{k,m} = \bfI_{L_{k,1}} \otimes \cdots \otimes \bfI_{L_{k,m-1}} \otimes
    \bfK^{(x)}_{k,m}
    \otimes \bfI_{L_{k,m+1}} \otimes \cdots \otimes \bfI_{L_{k,M_k}},
$$
such that $\bfK^{(x)}_{k,m}$ is the constant, potentially rank-deficient positive-semidefinite penalty matrix defining the correlation structure in the $m$th covariate direction of term $k$.

\paragraph{Hyperpriors for variance parameters.}
Hyperprior choices for variance parameters can be based on the considerations
for hierarchical models outlined by
\textcite{Gelman2006-PriorDistributionsVariance} and the concept of penalized
model complexity developed by \textcite{Simpson2017-PenalisingModelComponent}
and adapted to structured additive models by
\textcite{Klein2016-ScaledependentPriorsVariance}. This leads to $\operatorname{Weibull}(1/2,\theta)$ priors for smoothing variances $\tau^2$ and $\psi^2$, which favor the base models reached as the variance parameters approach zero while retaining a right tail that permits larger
scales. The parameter $\theta$ denotes the scale of the hyperprior and can be chosen
using an appropriate scaling criterion; see \textcite{Klein2016-ScaledependentPriorsVariance}. Other common hyperprior choices include inverse-gamma, half-normal, and half-Cauchy distributions; see \textcite{Gelman2006-PriorDistributionsVariance} and \textcite{Klein2016-ScaledependentPriorsVariance} for discussions of their merits.

\paragraph{Joint prior.}
The joint prior for all parameters in a multivariate structured additive predictor factors into the priors for the parameters and hyperparameters of the terms:
\begin{align}
    p(\bsGamma, \bstau^2, \bspsi^2) & =
    p(\bsGamma \mid \bstau^2, \bspsi^2) \, p(\bstau^2) \, p (\bspsi^2) \\
                                    & =
    \left[
        \prod_{k=1}^K p(\bsgamma_k | \bstau^2_k, \psi^2_k)
        \right]
    \times
    \left[
        \prod_{k=1}^K
        \prod_{m=1}^{M_k}
        p(\tau^2_{k,m})
        \right]
    \times
    \left[
        \prod_{k=1}^K
        p(\psi^2_k)
        \right],
\end{align}
where $\bstau^2 = [\bstau^{2\top}_1, \dots, \bstau^{2\top}_K]^\top$ and $\bspsi^2 = [\psi^2_1, \dots, \psi^2_K]^\top$.

\paragraph{Computational efficiency.}
Direct evaluation of the prior density \eqref{eq:mvstar-prior} is computationally expensive because it requires computing $\operatorname{pdet}(\bfP_k)$. However,
the structure of $\bfP_k$ allows between-dimension regularization to be treated as an additional direction in an anisotropic tensor-product smooth. The log-pseudo-determinant can therefore be computed efficiently from eigendecompositions of the constant, relatively low-dimensional penalty matrices $\bfK^{(x)}_{k,1}, \dots, \bfK^{(x)}_{k,M_k}$ and $\bfK^{(d)}_k$, as described in Theorem~3.1 and Corollary~3.2 of \textcite{Bach2025-AnisotropicMultidimensionalSmoothing}. For a given model, these eigendecompositions need to be computed only once, greatly reducing the cost of subsequent prior-density evaluations.

\paragraph{Linear constraints.}
Linear constraints can be applied to the within-dimension and cross-dimension
directions of a term. These two cases correspond to the two axes of the
coefficient matrix $\bsGamma_k$. We apply the reparameterization approach described by \textcite{Kneib2019-ModularRegressionLego}.
We first consider a within-dimension
constraint with full-row-rank matrix
$\bfA_k^{(x)}\in\bbR^{r_k\times L_k}$. Requiring the constraint to hold in
each of the $D$ predictor dimensions gives $\bsGamma_k\bfA_k^{(x)\top}=\bfzero_{D\times r_k}$.
Now let $\bfQ_k^{(x)}\in\bbR^{L_k\times(L_k-r_k)}$ have orthonormal columns
spanning the null space of $\bfA_k^{(x)}$. The constraint can then be enforced by
writing the term as
$\bsf_k(\bsx)=\overline{\bsGamma}_k \overline{\bfb}_k(\bsx)$, where $\overline{\bfb}_k(\bsx) = \bfQ_k^{(x)\top}\bfb_k(\bsx)$ is a reparameterized basis and $\overline{\bsGamma}_k \in \bbR^{D \times (L_k-r_k)}$ is a reparameterized
coefficient matrix whose prior uses reparameterized within-dimension penalty matrices
$\bfQ_k^{(x)\top} \widetilde{\bfK}_{k,m}^{(x)}\bfQ_k^{(x)}$.
In the following, we suppress the bars: $\bfb_k(\bsx)$ and $L_k$ denote the
reparameterized covariate basis and its dimension, while $\bsGamma_k$ denotes
the corresponding coefficient matrix. The same convention applies to
$\widetilde{\bfK}_{k,m}^{(x)}$.

Now let $\bfA^{(d)}\bsGamma_k=\bfzero_{q\times L_k}$ be a constraint
in the cross-dimensional direction with constraint matrix $\bfA^{(d)}\in\bbR^{q\times D}$,
and let $\bfQ^{(d)}\in\bbR^{D\times(D-q)}$ have orthonormal columns spanning the null
space of $\bfA^{(d)}$. Then the constraint can be applied by writing
$\bsGamma_k = \bfQ^{(d)}\widetilde{\bsGamma}_k$, where $\widetilde{\bsGamma}_k \in \bbR^{(D-q) \times L_k}$
is a reparameterized coefficient matrix whose prior uses the reparameterized
cross-dimension penalty matrix $\bfQ^{(d)\top}\bfK_k^{(d)}\bfQ^{(d)}$. Since a
cross-dimensional constraint typically applies to all terms in a predictor, the
reconstruction back into $D$ dimensions can be factored out and applied once after summing all reduced terms via
$\bseta = \bsGamma \bfb(\bsx) = \bfQ^{(d)}\{[\widetilde \bsGamma_1, \dots, \widetilde \bsGamma_K] \bfb(\bsx)\}$.

The reparameterization matrices can be obtained from an eigendecomposition of
$\bfA^\top\bfA$: the eigenvectors associated with zero eigenvalues form an orthonormal basis
$\bfQ$ of the null space of $\bfA$; see Section~2.3.2 of \textcite{Kneib2019-ModularRegressionLego} for details.

\subsection{Examples of term structures} \label{app:mvterms-examples}

The general framework outlined above simplifies considerably for many concrete models. We include four common examples of specific term definitions below.

\paragraph{Intercept in multiple dimensions.} An intercept $\bsf_k(\bsx) = \bsgamma_k$ is characterized by a scalar, constant basis $\bfb_k(\bsx)~=~1$, implying $M_k=1$ and a single zero-penalty in the covariate direction, $\bfK_k^{(x)} = 0$. In this case, the prior precision matrix simplifies to
$$
    \bfP_k = \bfP_k(\psi^2_k) = \psi_k^{-2}\bfK^{(d)}_k,
$$
and the dimension of $\bsgamma_k$ reduces to $(D \times 1)$.

\paragraph{Univariate term in multiple dimensions.} A univariate term $\bsf_k(\bsx) = \bsf_k(x_{im})$ is characterized by a basis function that is constant in all except one of the covariates in $\bsx = [x_{1}, \dots, x_{M}]^\top$, so the basis becomes $\bfb_k(\bsx)~=~\bfb_k(x_{m})$ for some $m \in \{1, \dots, M\}$, implying $M_k = 1$. The covariate-related component of the prior precision matrix then simplifies to the $(L_{k,m} \times L_{k,m})$ matrix
$$
    \bfP_k^{(x)}(\bstau^2_k) = \bfP_k^{(x)}(\tau^2_{k,m}) = \tau^{-2}_{k,m}\bfK^{(x)}_{k,m}.
$$
Consequently, the full prior precision matrix is
$$
    \bfP_k = \bfP_k(\tau^2_{k,m}, \psi^2_k)
    =
    \frac{1}{\tau^2_{k,m}}
    \left(
    \bfK^{(x)}_{k,m} \otimes \bfI_D
    \right)
    +
    \frac{1}{\psi^2_{k}}
    \left(
    \bfI_{L_{k,m}} \otimes
    \bfK^{(d)}_{k}
    \right)
$$
and $\bsgamma_k$ has dimension $(DL_{k,m} \times 1)$.

\paragraph{Bivariate tensor product term in multiple dimensions.}
A bivariate tensor product term $\bsf_k(\bsx) = \bsf_k(x_{m}, x_{m'})$ for $m, m' \in \{1, \dots, M\}$ and $m \neq m'$ is characterized by a basis function that is a tensor product of two marginal univariate bases of potentially different dimensions,
$$
    \dimann{L_{k,m}L_{k,m'} \times 1}{\bfb_k(\bsx)}
    =
    \bfb_k(x_{m}, x_{m'}) =
    \dimann{L_{k,m} \times 1}{\bfb_{k,m}(x_{m})}
    \otimes
    \dimann{L_{k,m'} \times 1}{\bfb_{k,m'}(x_{m'})},
$$
implying $M_k = 2$.
The covariate-related component of the prior precision matrix becomes the Kronecker sum of the two marginal precision matrices,
$$
    \bfP_k^{(x)}(\bstau^2_k) = \bfP_k^{(x)}(\tau^2_{k,m}, \tau^2_{k,m'}) = \frac{1}{\tau^{2}_{k,m}}
    \left(\bfK^{(x)}_{k,m}
    \otimes
    \bfI_{L_{k,m'}}
    \right)
    +
    \frac{1}{\tau^{2}_{k,m'}}
    \left(
    \bfI_{L_{k,m}}
    \otimes
    \bfK^{(x)}_{k,m'}
    \right)
    .
$$
The full prior precision matrix is then
$$
    \bfP_k =
    \frac{1}{\tau^2_{k,m}}
    \left(
    \bfK^{(x)}_{k,m}
    \otimes
    \bfI_{L_{k,m'}}
    \otimes \bfI_D
    \right)
    +
    \frac{1}{\tau^2_{k,m'}}
    \left(
    \bfI_{L_{k,m}}
    \otimes
    \bfK^{(x)}_{k,m'}
    \otimes \bfI_D
    \right)
    +
    \frac{1}{\psi^2_{k}}
    \left(
    \bfI_{L_{k,m}} \otimes
    \bfI_{L_{k,m'}} \otimes
    \bfK^{(d)}_{k}
    \right)
$$
and $\bsgamma_k$ has dimension $(DL_k \times 1)$ with $L_k = L_{k,m'}L_{k,m}$.

\paragraph{Terms in a single dimension.}
A one-dimensional term is characterized by a scalar zero-penalty for the between-dimension direction, $\bfK^{(d)}_k = 0$. In this case, the prior precision matrix simplifies to
$\bfP_k = \bfP_k^{(x)}(\bstau^2_k)$
and the dimension of $\bsgamma_k$ reduces to $(L_k \times 1)$. This special case results in a conventional structured additive regression term.

\subsection{Examples of penalty and basis choices} \label{app:mvterm-bases-and-penalties}

\paragraph{Between-dimension penalty.} The choice of between-dimension penalty determines how regularization acts across dimensions. For example, choosing a zero-penalty matrix $\bfK_k^{(d)} = \bfzero_{D\times D}$ implies no correlation and therefore no regularization across dimensions. This is the implicit default in \textcite{Klein2015-BayesianStructuredAdditive} and in common structured additive distributional regression models, where several uncorrelated univariate predictors can be interpreted as the elements of one multivariate predictor.
Another natural choice is a ridge penalty $\bfK_k^{(d)} = \bfI_D$, which leads to general random-effects-like regularization. Of particular interest for our model is the choice of a first-order random-walk penalty $\bfK_k^{(d)} = \bfD^\top_1 \bfD_1$, where $\bfD_1$ denotes the $(D-1) \times D$ first-difference matrix.
This penalty encourages neighboring component functions to remain similar; in the random-walk representation, each function is centered on the preceding one, and smaller values of $\psi_k^2$ permit less variation between them.

\paragraph{Within-dimension penalties and bases.} Within-dimension penalties and bases define effect types just as they do for univariate structured additive predictors. \textcite[Ch. 9]{Fahrmeir2013-RegressionModelsMethods} provide an overview. We briefly highlight some common special cases:
\begin{itemize}
    \item \textit{Linear effects} are defined with bases $\bfb_k(\bsx) = [x_1, \dots, x_{M_k}]^\top$. They are often supplied with a zero-penalty $\bfK_k^{(x)} = \bfzero_{M_k \times M_k}$ to omit penalization.
    \item \textit{Random intercepts} for $G$ clusters use a one-hot-coded basis vector $\bfb_k(x_k)$ comprising $G$ dummy variables. Specifically, $b_{k,g}(x_k) = 1$ if $x_k = g$, and $b_{k,g}(x_k) = 0$ otherwise. The penalty matrix for a random intercept is a $G$-dimensional identity matrix.
    \item \textit{Discrete spatial effects} use one-hot-coded dummy bases analogous to those for random intercepts, indicating the discrete spatial region to which $x_k$ belongs. The penalty matrix has the form of a Gaussian Markov random field \parencite{Rue2005-GaussianMarkovRandom}, where two regions are modeled as conditionally correlated if they share a border.
    \item \textit{Penalized splines} use B-spline basis evaluations for $\bfb_{k}(x_k)$, and the penalty matrix is defined as the cross-product of the $(L_k - 2) \times L_k$ second-order difference matrix $\bfD_{2}$, such that $\bfK^{(x)}_{k} = \bfD_{2}^\top \bfD_{2}$. Combining two marginal P-splines into a bivariate tensor product as described in \autoref{app:mvterms-examples} can construct smooth surfaces such as spatial effects or generalized interactions.
\end{itemize}

The complementary effects of within- and between-dimension regularization are demonstrated in the top row of \autoref{fig:mvterm-regularization-comparison} using penalized splines for the within-dimension and a first-order random-walk between-dimension penalty. Both directions receive a sum-to-zero constraint. Decreasing $\tau_{k,m}^2$ at fixed $\psi_k^2$ increasingly suppresses nonlinear within-dimension variation and drives the P-spline toward its centered linear null space. As $\psi_k^2$ decreases at fixed $\tau_{k,m}^2$, between-dimension variation is increasingly suppressed, ultimately collapsing all functions toward zero.

\section{Details for interpreting additive shape perturbations}
\label{app:interpretation-details}

This appendix derives the profile relationships used in
\autoref{sec:interpretation}. We retain the notation
$\bsdelta_c=\bsdelta_a+\bsdelta_b$ and
\begin{equation}
    \pi_{v,d}
    =
    \frac{\exp(\delta_{v,d})}
    {\sum_{k=1}^{D}\exp(\delta_{v,k})},
    \qquad
    v\in\{a,b,c\}.
\end{equation}

\subsection{Increment profiles and slopes}

On the spline core, define the cumulative bases
\begin{equation}
    A_j(r)
    =\sum_{\ell=j}^{J}B_\ell(r),
    \qquad
    j=2,\dots,J.
\end{equation}
The parameterization in \autoref{sec:trafo} implies that flexible increment
$d$ can be written as
\begin{equation}
    \exp(\vartheta_{v,d+4})=D\xi\pi_{v,d},
\end{equation}
whereas each fixed increment equals $\xi$. If $H_0$ collects the intercept
and the fixed-increment terms, the unity decomposition of the cubic B-spline
basis gives
\begin{equation}
    \tilde h_v(r)
    =
    H_0(r)
    +
    D\xi
    \sum_{d=1}^{D}\pi_{v,d}A_{d+4}(r).
    \label{eq:app-interpretation-profile-transformation}
\end{equation}
Let $\mathcal J_{\mathrm{fix}}=\{2,3,4,J-2,J-1,J\}$. Here $B_j^{(2)}$ denotes the $j$th quadratic B-spline basis function. Differentiating yields
\begin{equation}
    \tilde h_v'(r)
    =
    \sum_{j\in\mathcal J_{\mathrm{fix}}}B_j^{(2)}(r)
    +
    D\sum_{d=1}^{D}B_{d+4}^{(2)}(r)\pi_{v,d}.
    \label{eq:app-interpretation-profile-slope}
\end{equation}
The quadratic B-spline bases are nonnegative and form a partition of unity.
Consequently, the local slope is a smoothed average of the relative
control-point increments, which equal one for fixed increments and
$D\pi_{v,d}$ for flexible increments. The contribution associated with each profile entry affects the transformation slope only over the limited region where its B-spline basis function is nonzero. However, changing one shape parameter changes all profile entries through the softmax normalization. Locality therefore
applies to the increments, not to individual parameters in $\bsdelta_v$.

\subsection{Additive perturbations}

Let $\kappa_{ab}=\sum_{k=1}^{D}\pi_{a,k}\pi_{b,k}$. Addition of the shape
parameter vectors gives
\begin{align}
    \pi_{c,d}
    & =
    \frac{\exp(\delta_{a,d})\exp(\delta_{b,d})}
    {\sum_{k=1}^{D}\exp(\delta_{a,k})\exp(\delta_{b,k})}
    \\
    & =
    \frac{\pi_{a,d}\pi_{b,d}}{\kappa_{ab}}.
    \label{eq:app-interpretation-pooling}
\end{align}
Subtracting \eqref{eq:app-interpretation-profile-transformation} for the
baseline from the same identity for the combined profile gives
\begin{align}
    \tilde h_c(r)
    -\tilde h_a(r)
    & =
    D\xi\sum_{d=1}^{D}
    \bigl(\pi_{c,d}-\pi_{a,d}\bigr)A_{d+4}(r),
    \\
    \tilde h_c'(r)
    -\tilde h_a'(r)
    & =
    D\sum_{d=1}^{D}
    \bigl(\pi_{c,d}-\pi_{a,d}\bigr)B_{d+4}^{(2)}(r).
\end{align}
Both differences are zero outside the spline core because every
unstandardized onion transformation is the identity there.

Before standardization, the induced density is
$\tilde p_v(r)=p_Z\{\tilde h_v(r)\}\tilde h_v'(r)$. Hence, wherever the
reference density is positive, we have
\begin{equation}
    \log\frac{\tilde p_c(r)}{\tilde p_a(r)}
    =
    \log\frac{\tilde h_c'(r)}{\tilde h_a'(r)}
    +
    \log\frac{p_Z\{\tilde h_c(r)\}}{p_Z\{\tilde h_a(r)\}}.
    \label{eq:app-interpretation-raw-density-ratio}
\end{equation}
The first term is the change in the Jacobian contribution and the second is
the change in the reference-density contribution. The mean-scale
standardization in \autoref{sec:trafo} is then applied separately to the
distributions induced by $\bsdelta_a$ and $\bsdelta_c$. Thus, the exact pooling
of increment profiles does not translate into an additive density response;
the standardized density consequences remain nonlinear and baseline
dependent.

\FloatBarrier
\subsection{Density at the average predictor and average density}
\label{app:average-density-comparison}

\begin{figure}[H]
    \centering
    \includegraphics[width=\linewidth]{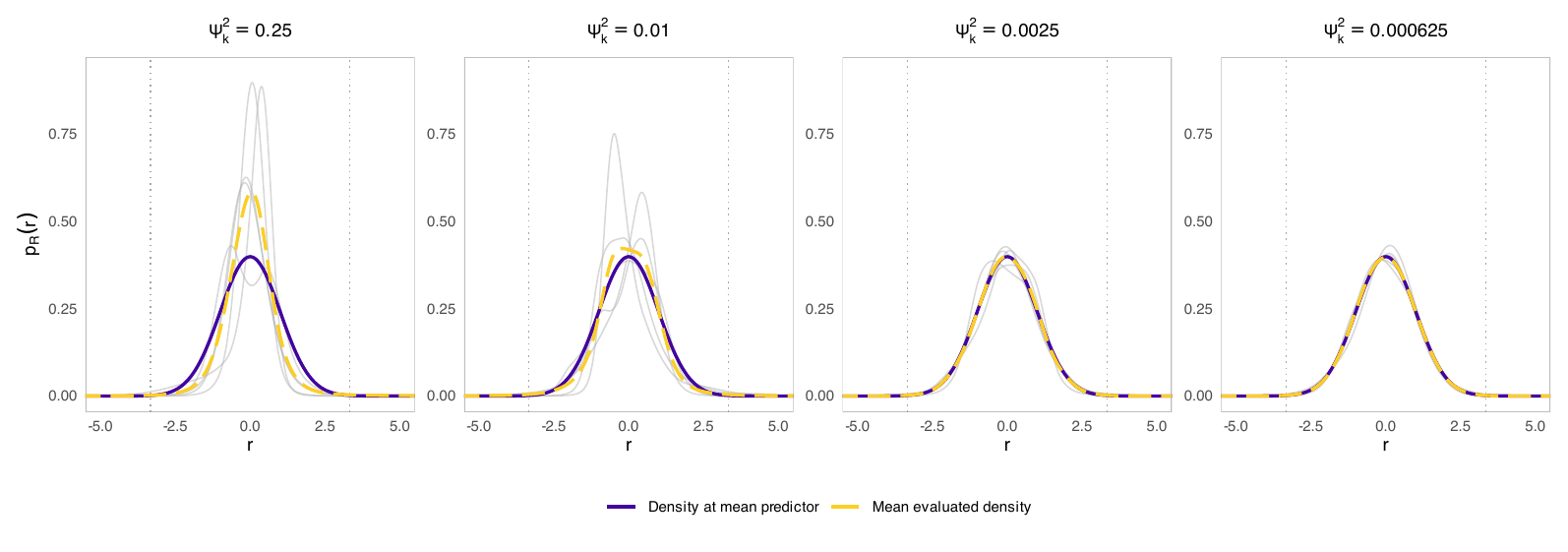}
    \caption{Density at the average transformation predictor versus the average of the evaluated densities for the seed-1 prior sample used in \autoref{fig:mvterm-regularization-comparison}. The columns correspond to the four values of $\psi_k^2$, with $\tau_{k,m}^2=0.04$ held fixed. The solid purple curve is the standardized latent-variable density evaluated at the empirical average of the transformation predictor over the 251 covariate values used for centering. This average is zero, so the resulting density equals the standard Gaussian reference. The dashed yellow curve is the equally weighted average of the 251 standardized conditional densities. Thin gray curves show the conditional densities at $x\in\{-2,-1,0,1,2\}$, and dotted vertical lines mark the onion-spline boundaries $a=-10/3$ and $b=10/3$.}
    \label{fig:mvterm-average-effect-density}
\end{figure}
\FloatBarrier

\section{Simulation details} \label{app:simulation-dgps}

\subsection{Scenario A}

For each replication, let $x_1,\dots,x_4$ be independent
$\operatorname{U}(-2,2)$ variables. The four response-level effects are
\begin{align*}
    f_1(x) & = -x + \pi\sin(\pi x),                   \\
    f_2(x) & = x + \frac{(2x)^2}{5.5},                \\
    f_3(x) & = x,                                     \\
    f_4(x) & = 0.5x + 15\phi\{2(x-0.2)\}-\phi(x+0.4),
\end{align*}
where $\phi$ is the standard Gaussian density. These functions produce,
respectively, oscillating, U-shaped, linear, and bell-shaped effects. We set
\begin{equation*}
    \mu(\bsx)=\sum_{j=1}^{4}f_j(x_j),
    \qquad
    \log\sigma(\bsx)=0.1\sum_{j=1}^{4}f_j(x_j),
    \qquad
    Y=\mu(\bsx)+\sigma(\bsx)R.
\end{equation*}

We consider two conditional distributions for $R$. In the Gaussian
subscenario, $R\sim\mathcal{N}(0,1)$ independently of $\bsx$. For the
skew--mixture subscenario, first let $S$ have a skew-normal distribution with
location zero, scale one, and shape parameter $\alpha=5$. With
\begin{equation*}
    d=\frac{\alpha}{\sqrt{1+\alpha^2}},
    \qquad
    m_S=d\sqrt{\frac{2}{\pi}},
    \qquad
    s_S^2=1-\frac{2d^2}{\pi},
\end{equation*}
define the standardized variable $\widetilde S=(S-m_S)/s_S$. Next, let
\begin{equation*}
    M\sim 0.5\,\mathcal{N}(-2,1^2)
    +0.5\,\mathcal{N}(1,0.5^2),
    \qquad
    m_M=-0.5,
    \qquad
    s_M^2=2.875,
\end{equation*}
and define $\widetilde M=(M-m_M)/s_M$. The conditional residual density is
\begin{equation*}
    p_R(r\mid x_1)
    =\{1-\omega(x_1)\}p_{\widetilde S}(r)
    +\omega(x_1)p_{\widetilde M}(r),
    \qquad
    \omega(x_1)=\operatorname{logit}^{-1}(2x_1).
\end{equation*}
Since both $\widetilde S$ and $\widetilde M$ have mean zero and variance one,
$R$ does as well for every value of $x_1$. Hence, the functions above are the
true conditional mean and log standard deviation in both subscenarios. Only
$x_1$ affects the residual shape; $x_2, x_3, x_4$ enter the fitted
transformation predictor as negative controls for its smoothness penalties.

\subsection{Scenario B}

Let $x_1,\dots,x_6$ be independent $\operatorname{U}(-1,1)$ variables. Define
two mixing functions,
\begin{align*}
    w(\bsx) = \operatorname{logit}^{-1}\{-0.2 + 1.7\sin(\pi x_1) - 1.2x_2 + x_1x_2\},
    \qquad
    v(\bsx) = \operatorname{logit}^{-1}(-0.4 + 2x_4),
\end{align*}
and mixture weights
\begin{equation*}
    \omega_1=0.68w(\bsx),
    \qquad
    \omega_2=0.32w(\bsx),
    \qquad
    \omega_3=\{1-w(\bsx)\}v(\bsx),
    \qquad
    \omega_4=\{1-w(\bsx)\}\{1-v(\bsx)\}.
\end{equation*}
The four component means are
\begin{align*}
    m_1(\bsx) & = -0.2+0.8\sin(\pi x_3)+0.4x_1,    \\
    m_2(\bsx) & = m_1(\bsx)-1.15-0.5x_2,           \\
    m_3(\bsx) & = 0.4+0.6x_3+1.1\cos(\pi x_1/2),   \\
    m_4(\bsx) & = m_3(\bsx)+1.25+0.7\sin(\pi x_4),
\end{align*}
and their standard deviations are
\begin{align*}
    s_1(\bsx) & = 0.35+0.10(x_4+1)/2,                       \\
    s_2(\bsx) & = 0.65+0.25\operatorname{logit}^{-1}(2x_4), \\
    s_3(\bsx) & = 0.30+0.12(x_2+1)/2,                       \\
    s_4(\bsx) & = 0.55+0.20(1-x_1)/2.
\end{align*}
Writing $\phi(y\mid m,s^2)$ for a Gaussian density, the response density is
\begin{equation*}
    p(y\mid\bsx)
    =\sum_{k=1}^{4}\omega_k(\bsx)
    \phi\{y\mid m_k(\bsx),s_k^2(\bsx)\}.
\end{equation*}
Thus, $x_1$ changes the main mixing function as well as component locations and scales;
$x_2$ changes the main mixing function and $s_3$, and separates two component means; $x_3$ moves
component locations; and $x_4$ changes the secondary mixing function, component scales,
and the position of the fourth component. The variables $x_5$ and $x_6$ have
no effect. The omitted $x_1x_2$ interaction in the main mixing function can
induce nonadditivity in the conditional mean, standard deviation, and
standardized shape, whereas the reported predictors for the mean, log standard
deviation, and standardized shape are additive.

For both scenarios, each replication contains $12{,}000$ observations. The nested
training sets contain the first $500$, $2{,}000$, or $10{,}000$ observations, and the final
$2{,}000$ observations form a common independent test set. We use $100$ independently
generated replications for every scenario and sample size.

\subsection{Additional simulation results}

\autoref{fig:sim-performance-nparam} assesses the sensitivity of the Scenario
B results to the number of transformation parameters. \autoref{fig:sim-densities-demo-n500} and
\autoref{fig:sim-densities-demo-n2000} complement
\autoref{fig:sim-densities-demo} by showing the same selected conditional
density estimates at the two smaller training-sample sizes.
\autoref{fig:sim-densities-controls} shows the term-wise density previews for
the three negative-control covariates in Scenario A.
\autoref{fig:sim-ci-width} reports the interval-width summaries discussed in
the main text.

\begin{figure}[tbp]
    \centering
    \includegraphics[width=\linewidth]{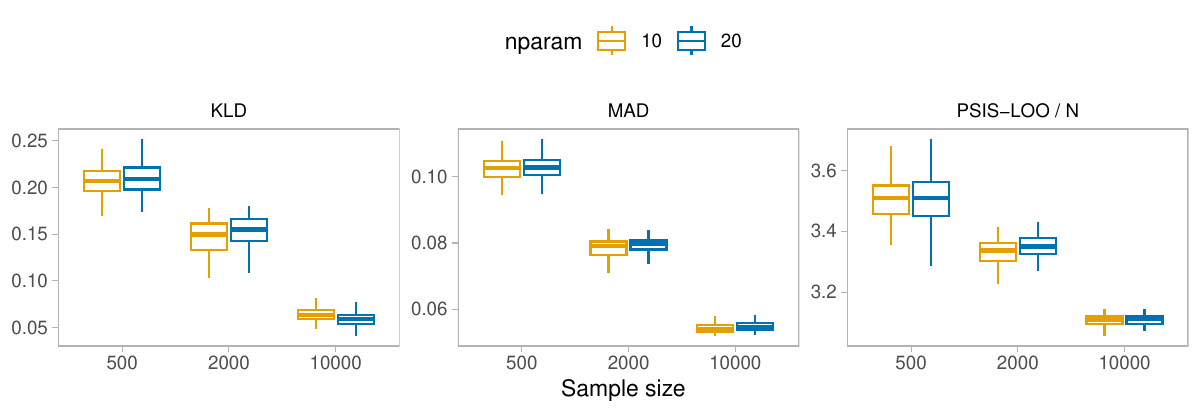}
    \caption{PTM response-scale distributional accuracy in Scenario B with
        $D=10$ and $D=20$ transformation parameters. Boxplots summarize $100$
        replications at each training-sample size. PSIS-LOO is reported on the
        deviance scale and divided by $N$; lower values are preferable in all
        three panels.}
    \label{fig:sim-performance-nparam}
\end{figure}

\begin{figure}[H]
    \centering
    \includegraphics[width=\linewidth]{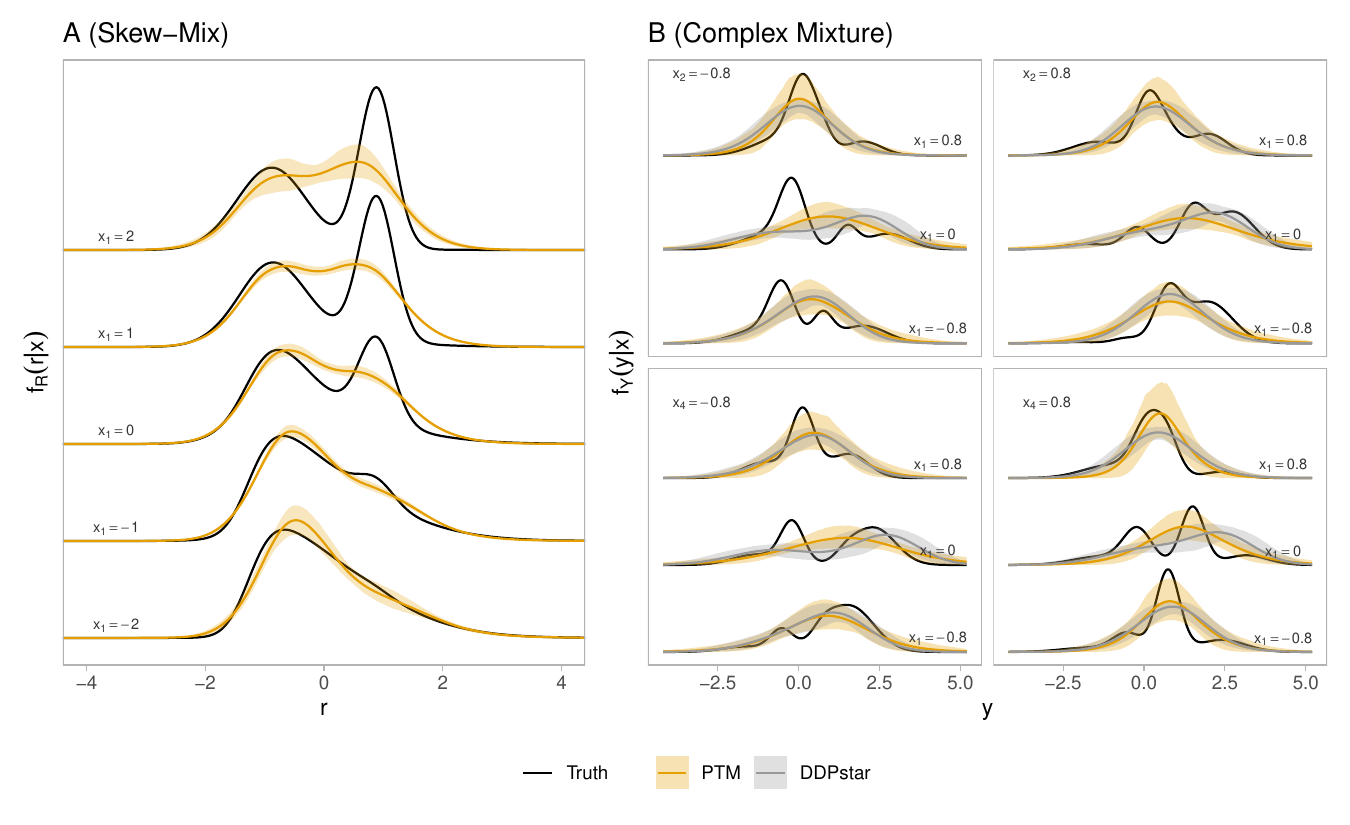}
    \caption{Selected conditional density estimates at $N=500$. The layout and
        graphical conventions match \autoref{fig:sim-densities-demo}.}
    \label{fig:sim-densities-demo-n500}
\end{figure}

\begin{figure}[H]
    \centering
    \includegraphics[width=\linewidth]{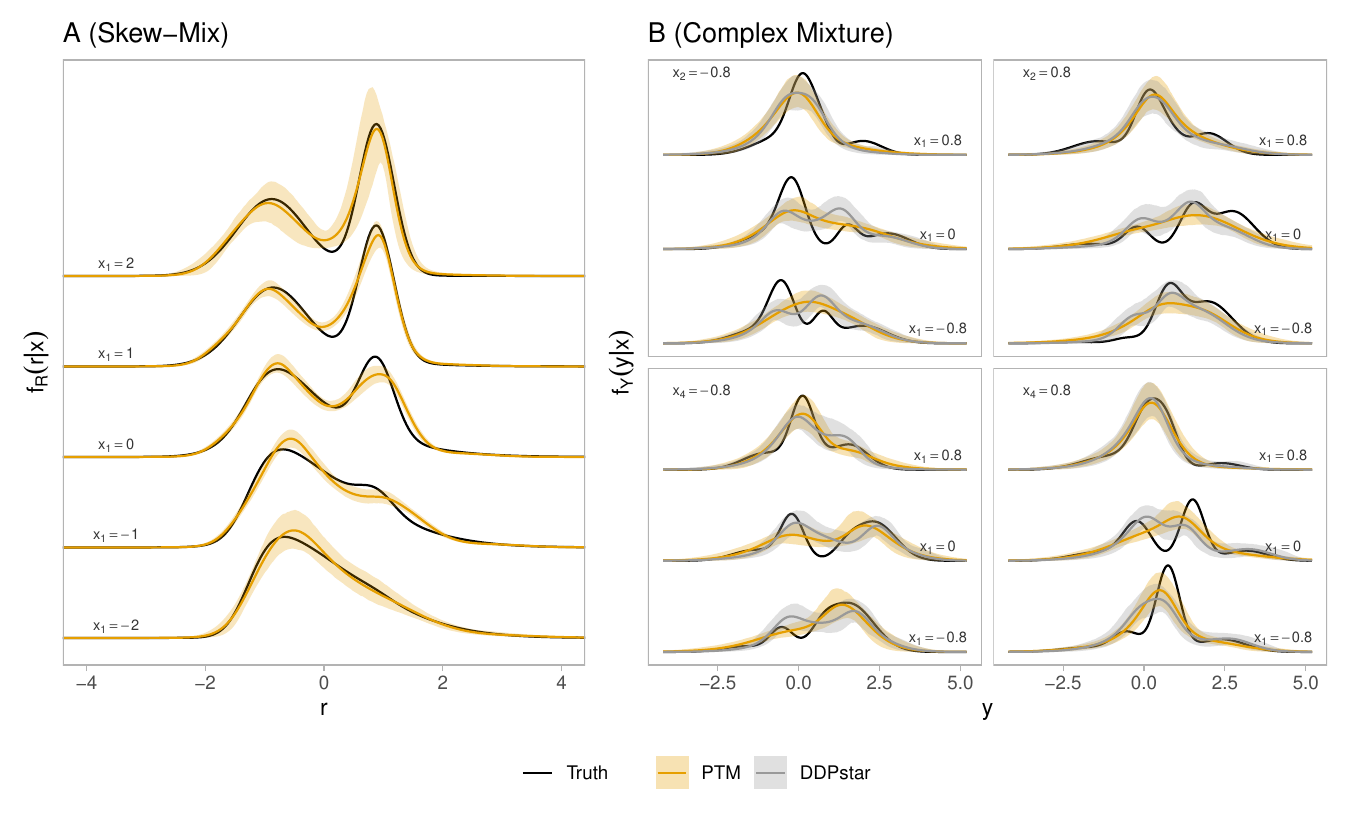}
    \caption{Selected conditional density estimates at $N=2{,}000$. The layout and
        graphical conventions match \autoref{fig:sim-densities-demo}.}
    \label{fig:sim-densities-demo-n2000}
\end{figure}

\begin{figure}[H]
    \centering
    \includegraphics[width=.8\linewidth]{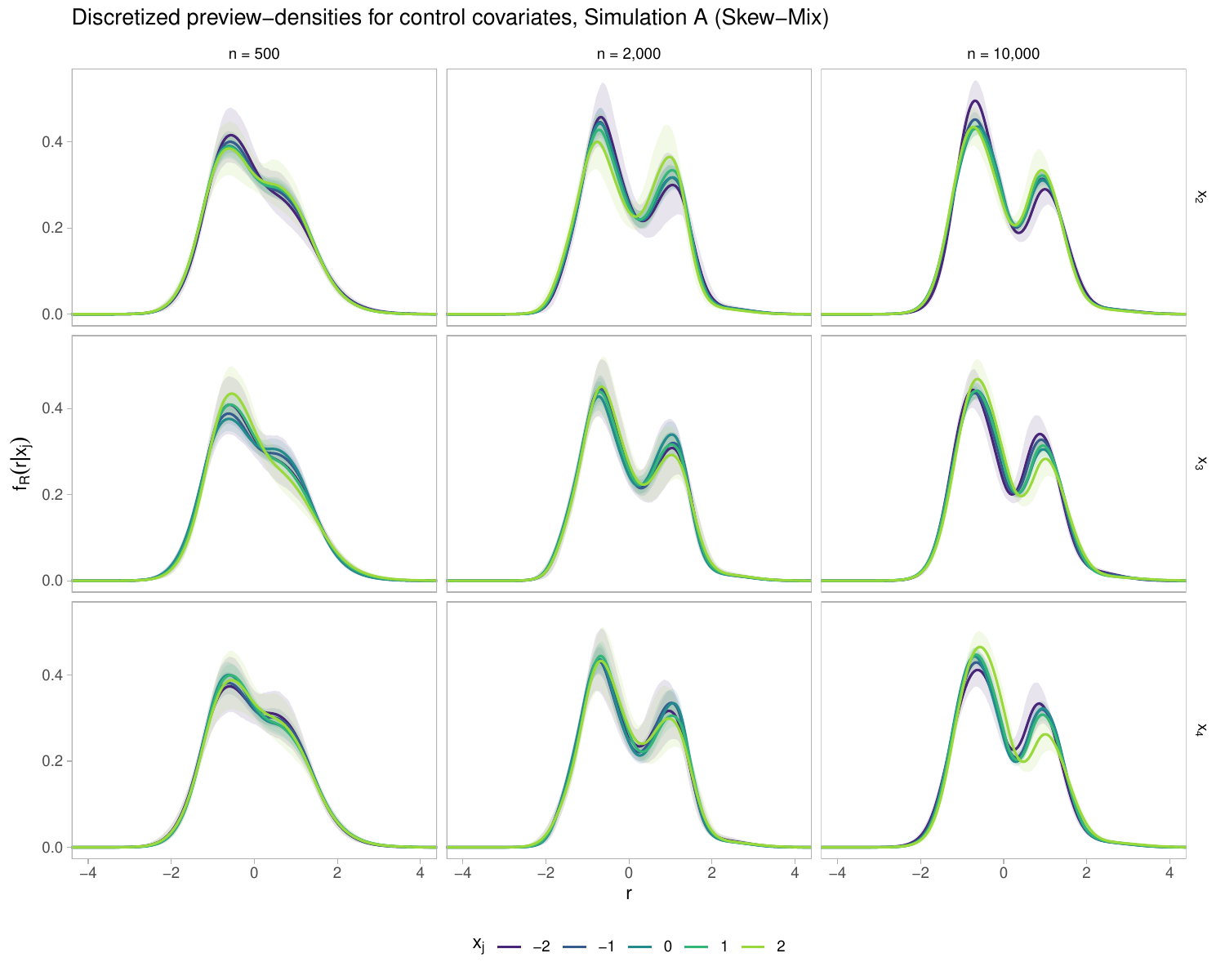}
    \caption{Discretized residual-density previews for the negative-control
        covariates in the skew--mixture subscenario of Scenario A. Rows
        correspond to $x_2$, $x_3$, and $x_4$, and columns correspond to
        $N\in\{500,2{,}000,10{,}000\}$. Within each panel, colored lines and
        bands show posterior means and 90\% pointwise credible intervals at
        focal values $-2, -1, 0, 1, 2$. Each preview is induced by the
        transformation intercept and the indicated covariate term, so it is a
        term-wise diagnostic rather than a full conditional density.}
    \label{fig:sim-densities-controls}
\end{figure}

\begin{figure}[H]
    \includegraphics[width=\linewidth]{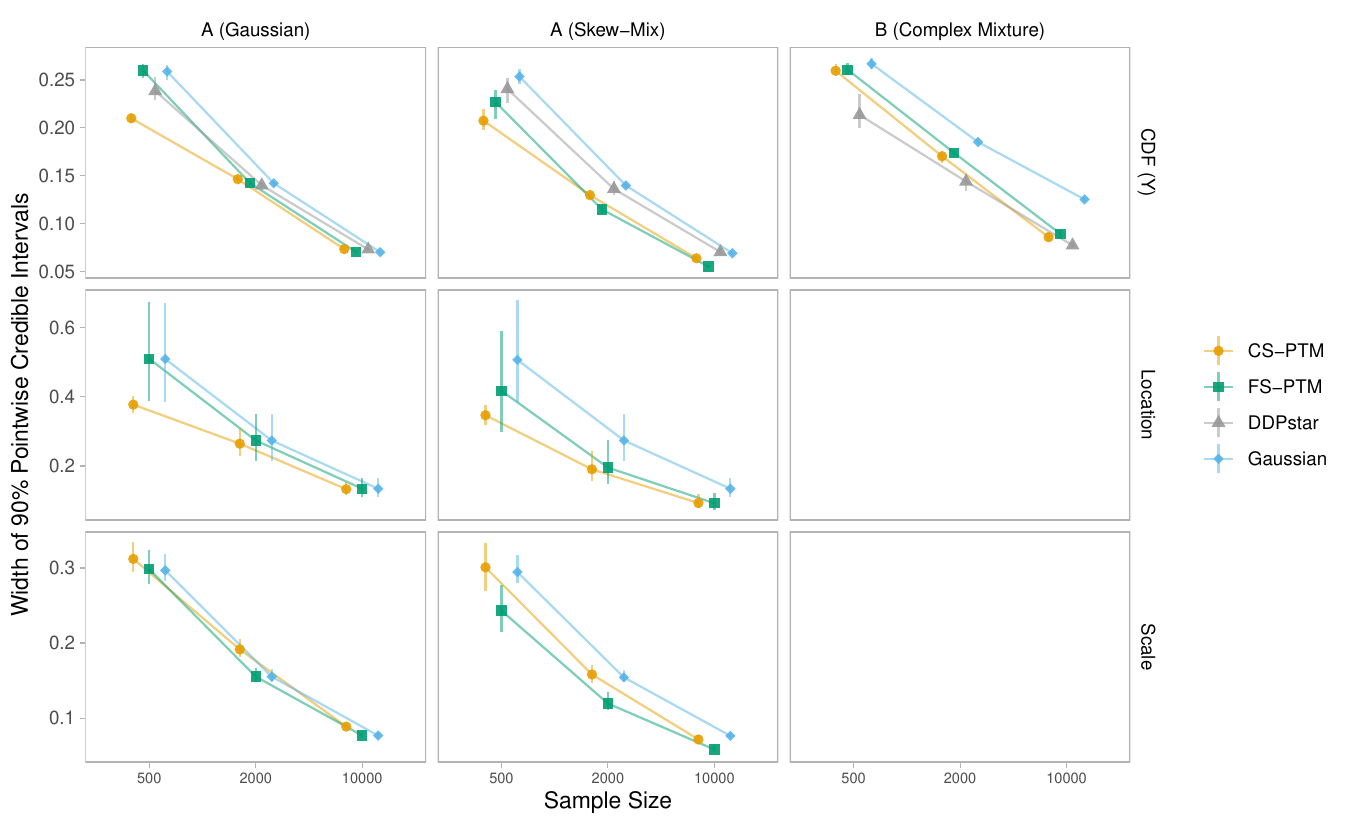}
    \caption{Mean widths of 90\% pointwise credible intervals by sample size.
        Vertical lines span the 0.05 and 0.95 quantiles over the $100$
        replications. Width is descriptive and must be considered alongside the coverage in \autoref{fig:sim-calibration}. In Scenario A,
        CS-PTM uses VI and the three competitors use MCMC; the similar VI fits of
        the FS-PTM and Gaussian model are omitted. In Scenario B, CS-PTM, FS-PTM, and Gaussian use VI, whereas DDPstar uses MCMC.
        CS-PTM and FS-PTM denote the conditional-shape and fixed-shape
        PTMs, respectively.}
    \label{fig:sim-ci-width}
\end{figure}

\FloatBarrier

\clearpage
\section{Additional application information}

\subsection{Optimization settings}
\label{app:application-optimization}

\subsubsection{Norwegian water conductivity application}
The two conditional MAP stages used the limited-memory Broyden--Fletcher--Goldfarb--Shanno (L-BFGS) algorithm with full batches for at most $20{,}000$
epochs, with relative tolerance $10^{-6}$ and early-stopping patience of five
epochs. The calibrated location and log-scale smoothing scales were initialized
at $\sqrt{\theta_{k,t}}$ and, during these initialization stages only, were
bounded below by $10^{-4}$. The three variational-inference stages used
mini-batches of $250$, $500$, and $500$ observations, respectively, with 16 Monte
Carlo draws per ELBO estimate and at most $5{,}000$ epochs per stage. Adam used a
warmup--cosine schedule from $10^{-6}$ to a peak learning rate of $10^{-4}$,
with warmup capped at $1{,}000$ optimizer updates and early-stopping patience of
100 epochs. The Gaussian comparison used patience 250 for its single
variational stage because it has no subsequent joint refinement.

\subsubsection{German temperature application}
All optimization stages used Adam with mini-batches of $3{,}778$ observations and at most
$5{,}000$ epochs per stage; variational inference used four Monte Carlo draws
per ELBO estimate. A warmup--cosine schedule increased the
learning rate from $10^{-6}$ to $3\times10^{-4}$ during preoptimization and to
$10^{-4}$ during variational inference. Warmup comprised 5\% of the planned
optimizer updates, capped at $2{,}000$ updates; the relative tolerance was
$10^{-5}$.
Early-stopping patience was five epochs during preoptimization and ten epochs
during variational inference. The location and log-scale main-effect smoothing
scales were initialized to yield approximately two effective degrees of
freedom\footnote{This conservative initialization was motivated by the absence
of data between approximately $1{,}500$ and $3{,}000$ m and was found to avoid
implausible interpolation of the elevation effect across this interval.}. 
During location--scale preoptimization,
optimization of the smoothing scales began at epoch 10 and early stopping was
disabled until epoch 60, ensuring at least 50 epochs in which coefficients and
smoothing scales were optimized jointly.

\subsection{PSIS-LOO diagnostics}
\label{app:application-loo-diagnostics}

\subsubsection{Norwegian water conductivity application}
Pareto-$k$ warnings occurred for both Norwegian conditional-shape PTM fits,
indicating that Pareto-smoothed importance sampling had difficulty reliably
approximating some leave-one-out posteriors. Such warnings may arise when
individual observations are influential, so that the full-data posterior is a
poor importance-sampling approximation to the corresponding leave-one-out
posterior, but may also reflect deficiencies in the variational posterior
approximation. The resulting PSIS-LOO estimates should therefore be interpreted
cautiously.

Using $3{,}000$ independent variational draws, $31$ of $13{,}425$ observations
($0.23\%$) exceeded the Pareto-$k$ threshold of $0.7$ for the pooled focal fit,
including $11$ observations with $k>1$. For the site-adjusted fit, $245$
observations ($1.82\%$) exceeded the threshold, including $88$ with $k>1$.
The maximum Pareto-$k$ values were $1.30$ and $2.65$, respectively, indicating
more severe instability for the site-adjusted fit.

For the pooled comparison, the conditional-shape PTM improves the reported
PSIS-LOO deviance over the next-best fixed-shape PTM by approximately
$5{,}494$, while the corresponding WAIC improvement is approximately
$5{,}525$. The large and consistent differences make it unlikely that the
qualitative conclusion is driven solely by the small fraction of problematic
observations. In the site-adjusted comparison, however, the reported PSIS-LOO
advantage is only approximately $140$ deviance units and decreases to
approximately $58$ when the conditional-shape PTM estimate is recomputed using
$3{,}000$ draws. We therefore do not interpret this comparison as strong
evidence for superior predictive performance. A $K$-fold
cross-validation analysis would be appropriate if a more definitive ranking
is desired for this analysis. Since the site-adjusted analysis is presented primarily as a
sensitivity analysis and is otherwise interpreted cautiously, we take this limitation
to not affect the main conclusions.

\subsubsection{German temperature application}
Using $3{,}000$ variational draws, Pareto-$k$ warnings occurred for 21 of the $1{,}182{,}514$ observations ($0.0018\%$), meaning that, for 21 observations, the threshold of $0.7$ was exceeded. No value was larger than $1$. Such warnings suggest that a few influential observations make it difficult for the importance-sampling approximation to estimate the corresponding leave-one-out posteriors reliably. The PSIS-LOO estimate should therefore be interpreted with some caution. Nevertheless, the advantage of the focal model under PSIS-LOO is very large and consistent with WAIC, making it unlikely that these localized warnings overturn the qualitative conclusion. A $K$-fold cross-validation analysis could provide a more definitive check.

\subsection{Additional results}
\label{app:application-results}

\autoref{tab:dwd-recent-fit-comparison} compares all completed production
fits from the recent reanalyses of the German daily-temperature and Norwegian
water-conductivity data.

\begin{table}[H]
    \centering
    \caption{Predictive comparison of the recent German daily-temperature and
        Norwegian water-conductivity fits. Reported quantities are the
        PSIS-LOO and WAIC deviances, their standard errors (SE), and the
        corresponding effective numbers of parameters $p_{\mathrm{eff}}$, as in
        \autoref{tab:application-model-comparison}.}
    \label{tab:dwd-recent-fit-comparison}
    \scriptsize
    \setlength{\tabcolsep}{3pt}
    \begin{tabular*}{\linewidth}{@{\extracolsep{\fill}}lccrrrrrr@{}}
        \toprule
        & & & \multicolumn{3}{c}{PSIS-LOO}
        & \multicolumn{3}{c}{WAIC} \\
        \cmidrule(lr){4-6}\cmidrule(l){7-9}
        Model & $D$ & $[a,b]$
        & Deviance $\downarrow$ & SE & $p_{\mathrm{eff}}$
        & Deviance $\downarrow$ & SE & $p_{\mathrm{eff}}$ \\
        \midrule
        \multicolumn{9}{@{}l}{\textbf{\sffamily Norwegian water conductivity}} \\
        \multicolumn{3}{@{}l}{\rule{0pt}{4ex}\textit{Without monitoring-site effects}}
        & & & & & & \\
        CS-PTM & 10 & $[-4,4]$
        & $19{,}125$ & $203.7$ & $260$
        & $19{,}117$ & $203.6$ & $256$ \\
        CS-PTM & 10 & $[-5,5]$
        & $19{,}571$ & $203.6$ & $237$
        & $19{,}560$ & $203.4$ & $232$ \\
        CS-PTM & 20 & $[-5,5]$
        & \textbf{$18{,}515$} & $210.6$ & $393$
        & \textbf{$18{,}481$} & $209.4$ & $376$ \\
        \addlinespace
        FS-PTM & 10 & $[-4,4]$
        & $24{,}108$ & $225.2$ & $129$
        & $24{,}105$ & $225.1$ & $128$ \\
        FS-PTM & 10 & $[-5,5]$
        & $24{,}009$ & $222.9$ & $109$
        & $24{,}007$ & $222.9$ & $108$ \\
        FS-PTM & 20 & $[-5,5]$
        & $24{,}091$ & $225.4$ & $167$
        & $24{,}087$ & $225.3$ & $165$ \\
        \addlinespace
        Gaussian & -- & --
        & $25{,}752$ & $230.9$ & $177$
        & $25{,}749$ & $231.3$ & $176$ \\
        \addlinespace
        \multicolumn{3}{@{}l}{\rule{0pt}{4ex}\textit{With monitoring-site effects}}
        & & & & & & \\
        CS-PTM & 10 & $[-4,4]$
        & \textbf{$-10{,}885$} & $245.4$ & $941$
        & \textbf{$-11{,}050$} & $242.8$ & $858$ \\
        CS-PTM & 10 & $[-5,5]$
        & $-10{,}742$ & $248.1$ & $922$
        & $-10{,}926$ & $244.9$ & $830$ \\
        CS-PTM & 20 & $[-5,5]$
        & $-10{,}493$ & $241.9$ & $980$
        & $-10{,}640$ & $241.2$ & $907$ \\
        \addlinespace
        FS-PTM & 10 & $[-4,4]$
        & $-10{,}746$ & $243.1$ & $720$
        & $-10{,}818$ & $242.5$ & $684$ \\
        FS-PTM & 10 & $[-5,5]$
        & $-10{,}713$ & $242.8$ & $740$
        & $-10{,}787$ & $242.3$ & $703$ \\
        FS-PTM & 20 & $[-5,5]$
        & $-10{,}522$ & $242.7$ & $830$
        & $-10{,}610$ & $242.2$ & $785$ \\
        \addlinespace
        Gaussian & -- & --
        & $-9{,}472$ & $273.1$ & $1{,}098$
        & $-9{,}826$ & $266.2$ & $921$ \\
        \addlinespace[0.75em]
        \multicolumn{9}{@{}l}{\textbf{\sffamily German daily temperature}} \\
        CS-PTM & 10 & $[-4,4]$
        & $1{,}467{,}667$ & $1{,}522.2$ & $7{,}361$
        & $1{,}467{,}572$ & $1{,}522.1$ & $7{,}314$ \\
        CS-PTM & 10 & $[-5,5]$
        & \textbf{$1{,}457{,}565$} & $1{,}515.8$ & $5{,}907$
        & \textbf{$1{,}457{,}472$} & $1{,}515.6$ & $5{,}860$ \\
        CS-PTM & 15 & $[-5,5]$
        & $1{,}458{,}648$ & $1{,}522.5$ & $8{,}005$
        & $1{,}458{,}400$ & $1{,}521.7$ & $7{,}881$ \\
        CS-PTM & 20 & $[-5,5]$
        & $1{,}464{,}637$ & $1{,}524.5$ & $11{,}007$
        & $1{,}464{,}304$ & $1{,}523.3$ & $10{,}840$ \\
        \addlinespace
        FS-PTM & 10 & $[-4,4]$
        & $1{,}505{,}028$ & $1{,}565.9$ & $3{,}155$
        & $1{,}504{,}998$ & $1{,}565.8$ & $3{,}140$ \\
        FS-PTM & 10 & $[-5,5]$
        & $1{,}505{,}006$ & $1{,}565.3$ & $3{,}144$
        & $1{,}504{,}976$ & $1{,}565.2$ & $3{,}130$ \\
        FS-PTM & 20 & $[-5,5]$
        & $1{,}504{,}869$ & $1{,}565.9$ & $3{,}163$
        & $1{,}504{,}839$ & $1{,}565.9$ & $3{,}148$ \\
        \addlinespace
        Gaussian & -- & --
        & $1{,}511{,}783$ & $1{,}572.8$ & $4{,}833$
        & $1{,}511{,}737$ & $1{,}572.7$ & $4{,}810$ \\
        \bottomrule
    \end{tabular*}

    \vspace{0.35em}
    \begin{minipage}{\linewidth}
        \footnotesize
        \textit{Note.} All German fits use latitude and longitude main-effect
        basis dimension $k_{\mathrm{sp}}=10$ in the location and log-scale
        predictors. $D$ is the number of free transformation parameters, and
        $[a,b]$ is the onion-spline core. Dashes indicate quantities that do
        not apply to the Gaussian fits.
        CS-PTM and FS-PTM denote the conditional-shape and fixed-shape
        PTMs, respectively.
    \end{minipage}
\end{table}

\begin{table}[H]
    \centering
    \caption{Posterior mean complete conditional quantiles in degrees Celsius
        for the nine focal locations at city-specific elevations on four
        representative days. The January 15 and July 15 entries
        correspond to the colored rug marks in the lower panels of
        \autoref{fig:dwd-temperature-quantile-curves},
        \autoref{fig:dwd-basic-ptm-quantile-curves}, and
        \autoref{fig:dwd-gaussian-quantile-curves}. CS-PTM denotes the focal
        conditional-shape PTM, FS-PTM the best fixed-shape PTM ($D=20$ and
        onion-spline core $[-5,5]$), and Gauss the Gaussian location--scale
        model.}
    \label{tab:dwd-complete-conditional-quantiles}
    \scriptsize
    \setlength{\tabcolsep}{3pt}
    \begin{tabular*}{\linewidth}{@{\extracolsep{\fill}}l*{9}{r}@{}}
        \toprule
        & \multicolumn{3}{c}{$p=0.01$}
        & \multicolumn{3}{c}{$p=0.5$}
        & \multicolumn{3}{c}{$p=0.99$} \\
        \cmidrule(lr){2-4}\cmidrule(lr){5-7}\cmidrule(l){8-10}
        Location
        & CS-PTM & FS-PTM & Gauss
        & CS-PTM & FS-PTM & Gauss
        & CS-PTM & FS-PTM & Gauss \\
        \midrule
        \multicolumn{10}{@{}l}{\textit{Spring (April 15)}} \\
        Meppen & 1.75 & 1.24 & 1.11 & 9.75 & 9.80 & 9.73 & 18.81 & 18.44 & 18.35 \\
        Hamburg & 1.32 & 0.80 & 0.54 & 9.06 & 9.08 & 8.94 & 17.77 & 17.44 & 17.34 \\
        Berlin & 1.27 & 0.84 & 0.73 & 9.61 & 9.60 & 9.70 & 18.72 & 18.44 & 18.67 \\
        Cologne & 2.20 & 1.72 & 1.52 & 10.81 & 10.61 & 10.56 & 19.90 & 19.58 & 19.61 \\
        G\"ottingen & 0.73 & 0.51 & 0.20 & 9.29 & 9.25 & 9.08 & 17.89 & 18.07 & 17.96 \\
        Dresden & 0.72 & 0.31 & 0.09 & 9.58 & 9.36 & 9.32 & 18.50 & 18.50 & 18.55 \\
        Frankfurt & 2.44 & 2.28 & 1.93 & 10.94 & 10.82 & 10.65 & 19.36 & 19.44 & 19.38 \\
        Stuttgart & 1.47 & 1.49 & 1.44 & 10.53 & 10.37 & 10.32 & 19.28 & 19.34 & 19.19 \\
        Munich & 0.05 & 0.16 & 0.28 & 9.49 & 9.38 & 9.22 & 18.33 & 18.68 & 18.16 \\
        \addlinespace[0.5em]
        \multicolumn{10}{@{}l}{\textit{Summer (July 15)}} \\
        Meppen & 12.03 & 11.00 & 10.69 & 17.95 & 18.29 & 18.14 & 27.01 & 25.66 & 25.59 \\
        Hamburg & 11.79 & 10.89 & 10.44 & 17.75 & 18.10 & 18.14 & 26.51 & 25.38 & 25.83 \\
        Berlin & 13.03 & 12.22 & 11.97 & 19.66 & 19.90 & 19.80 & 29.02 & 27.65 & 27.62 \\
        Cologne & 12.95 & 11.81 & 11.64 & 19.31 & 19.53 & 19.32 & 28.64 & 27.32 & 27.01 \\
        G\"ottingen & 12.17 & 11.40 & 11.22 & 18.40 & 18.63 & 18.53 & 27.14 & 25.94 & 25.84 \\
        Dresden & 13.06 & 12.38 & 11.96 & 19.53 & 19.61 & 19.44 & 28.02 & 26.91 & 26.92 \\
        Frankfurt & 13.78 & 12.94 & 12.67 & 20.13 & 20.38 & 20.17 & 28.80 & 27.89 & 27.67 \\
        Stuttgart & 13.52 & 12.81 & 12.64 & 20.09 & 20.28 & 20.25 & 28.41 & 27.82 & 27.87 \\
        Munich & 11.88 & 11.52 & 11.52 & 18.90 & 19.10 & 19.09 & 26.84 & 26.76 & 26.67 \\
        \addlinespace[0.5em]
        \multicolumn{10}{@{}l}{\textit{Fall (October 15)}} \\
        Meppen & 3.70 & 3.74 & 3.46 & 10.78 & 10.82 & 10.71 & 18.10 & 17.97 & 17.96 \\
        Hamburg & 3.47 & 3.69 & 3.63 & 10.42 & 10.41 & 10.47 & 17.30 & 17.20 & 17.31 \\
        Berlin & 3.50 & 3.54 & 3.39 & 10.85 & 10.72 & 10.72 & 18.20 & 17.98 & 18.05 \\
        Cologne & 4.36 & 4.59 & 4.39 & 11.89 & 12.05 & 11.96 & 19.70 & 19.60 & 19.53 \\
        G\"ottingen & 2.84 & 3.17 & 3.03 & 10.76 & 10.78 & 10.80 & 18.68 & 18.46 & 18.57 \\
        Dresden & 2.42 & 2.87 & 3.01 & 10.57 & 10.50 & 10.67 & 18.30 & 18.20 & 18.33 \\
        Frankfurt & 3.70 & 4.09 & 4.00 & 11.13 & 11.16 & 11.15 & 18.50 & 18.30 & 18.31 \\
        Stuttgart & 3.17 & 3.57 & 3.39 & 11.23 & 11.22 & 11.20 & 18.95 & 18.95 & 19.01 \\
        Munich & 1.38 & 1.91 & 1.72 & 9.62 & 9.60 & 9.51 & 17.27 & 17.37 & 17.30 \\
        \addlinespace[0.5em]
        \multicolumn{10}{@{}l}{\textit{Winter (January 15)}} \\
        Meppen & -8.36 & -6.69 & -6.94 & 3.09 & 2.82 & 2.95 & 11.41 & 12.42 & 12.83 \\
        Hamburg & -8.93 & -7.07 & -6.99 & 2.48 & 2.17 & 2.20 & 10.53 & 11.50 & 11.39 \\
        Berlin & -10.74 & -8.46 & -8.68 & 1.72 & 1.41 & 1.37 & 10.26 & 11.37 & 11.42 \\
        Cologne & -7.56 & -5.27 & -5.36 & 3.66 & 3.55 & 3.69 & 11.88 & 12.46 & 12.74 \\
        G\"ottingen & -10.17 & -7.68 & -8.09 & 2.35 & 2.10 & 1.94 & 11.10 & 11.99 & 11.98 \\
        Dresden & -10.41 & -8.09 & -8.00 & 1.43 & 1.14 & 1.22 & 9.46 & 10.46 & 10.43 \\
        Frankfurt & -9.66 & -6.85 & -7.07 & 2.63 & 2.45 & 2.45 & 11.29 & 11.85 & 11.98 \\
        Stuttgart & -10.53 & -7.31 & -7.44 & 2.27 & 2.17 & 2.17 & 11.30 & 11.75 & 11.77 \\
        Munich & -13.30 & -9.54 & -9.91 & 0.09 & 0.05 & -0.08 & 9.22 & 9.73 & 9.75 \\
        \bottomrule
    \end{tabular*}
\end{table}

\begin{figure}[H]
    \centering
    \includegraphics[width=\linewidth]{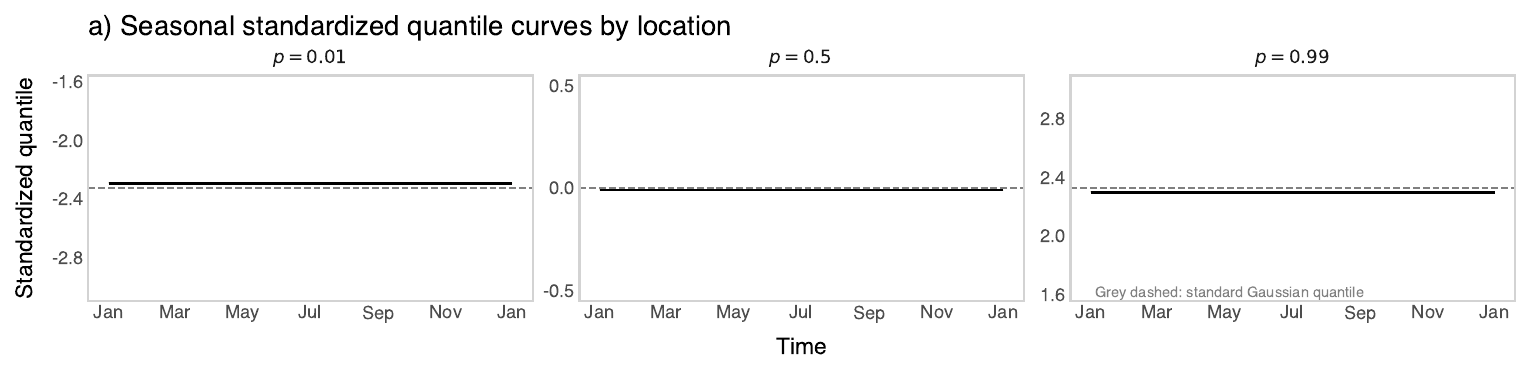}
    \par\vspace{0.35em}
    \includegraphics[width=\linewidth]{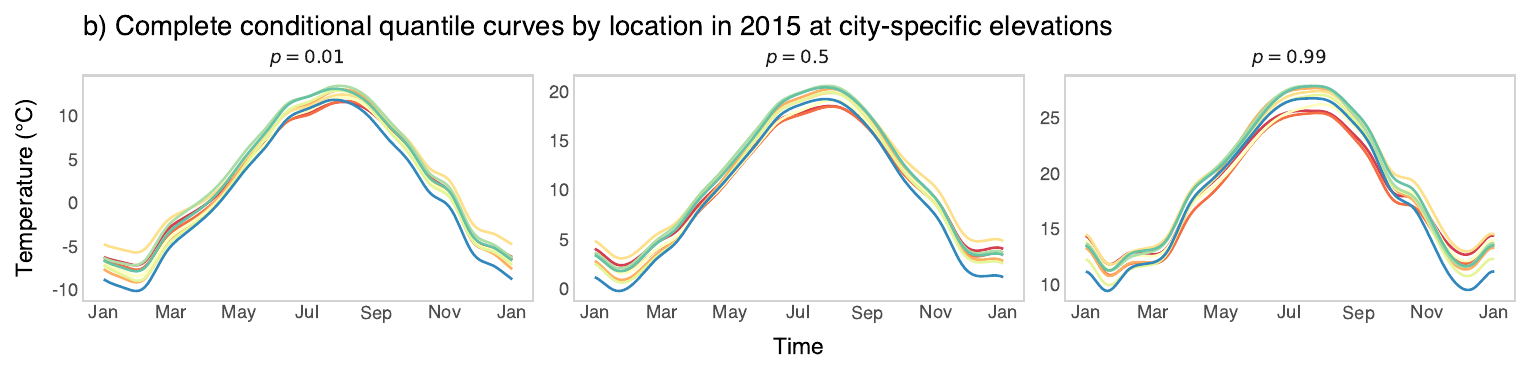}
    \par\vspace{0.35em}
    \includegraphics[width=\linewidth]{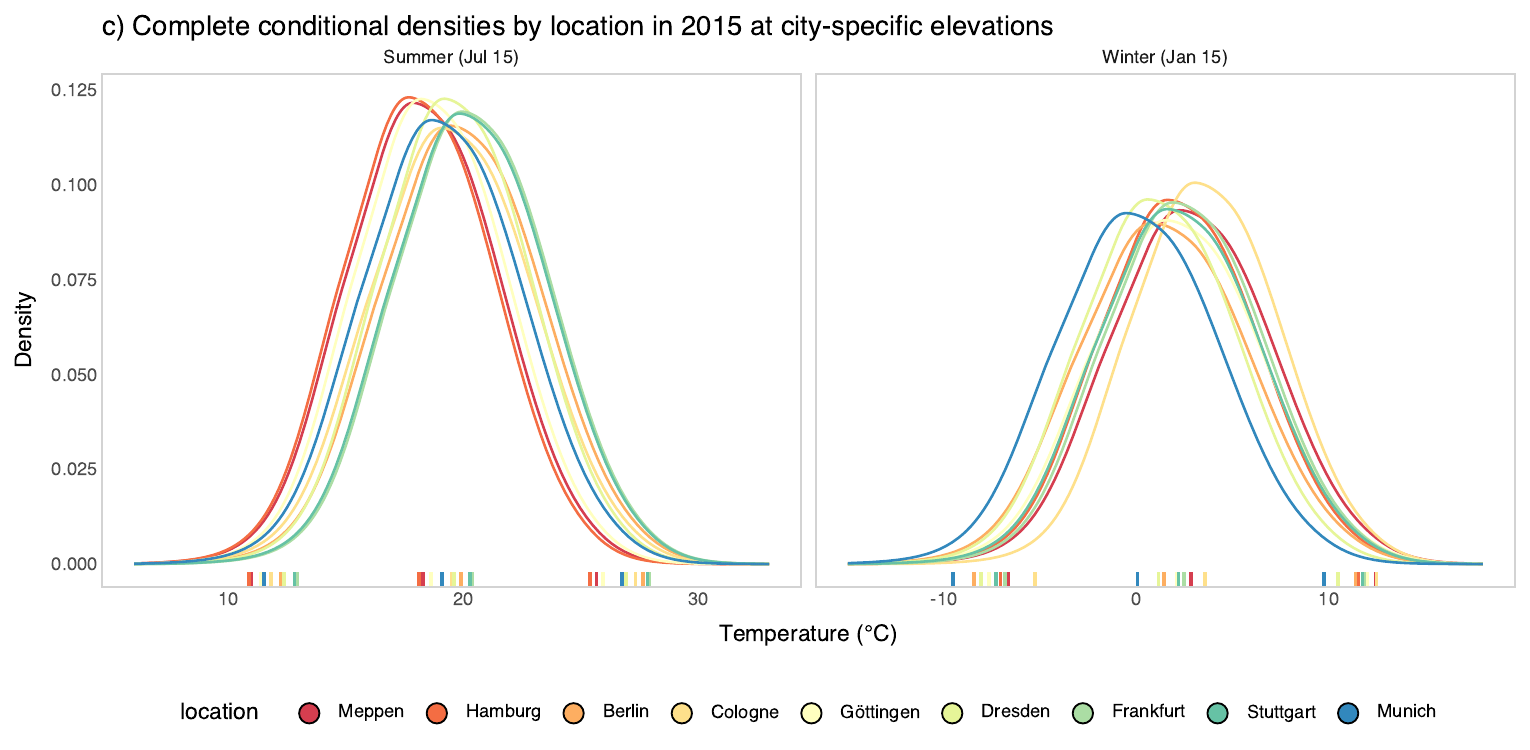}
    \caption{Seasonal distribution summaries for the best fixed-shape PTM in the
        German daily-temperature application ($D=20$ and onion-spline core
        $[-5,5]$), using the layout and locations of
        \autoref{fig:dwd-temperature-quantile-curves}. The top row shows
        posterior mean standardized quantiles at
        $p\in\{0.01,0.5,0.99\}$; they are constant over time and location
        because the fixed-shape PTM uses a covariate-invariant transformation, and
        gray dashed lines mark the corresponding standard Gaussian quantiles.
        The middle row shows complete conditional quantiles in degrees Celsius
        during 2015 at the city-specific elevations. The lower panels show the
        corresponding complete conditional densities on two
        representative days. Colored rug marks on the horizontal axes indicate the
        location-specific $p\in\{0.01,0.5,0.99\}$ quantiles. Colors identify
        locations using the shared legend below the density panels.}
    \label{fig:dwd-basic-ptm-quantile-curves}
\end{figure}

\begin{figure}[H]
    \centering
    \includegraphics[width=\linewidth]{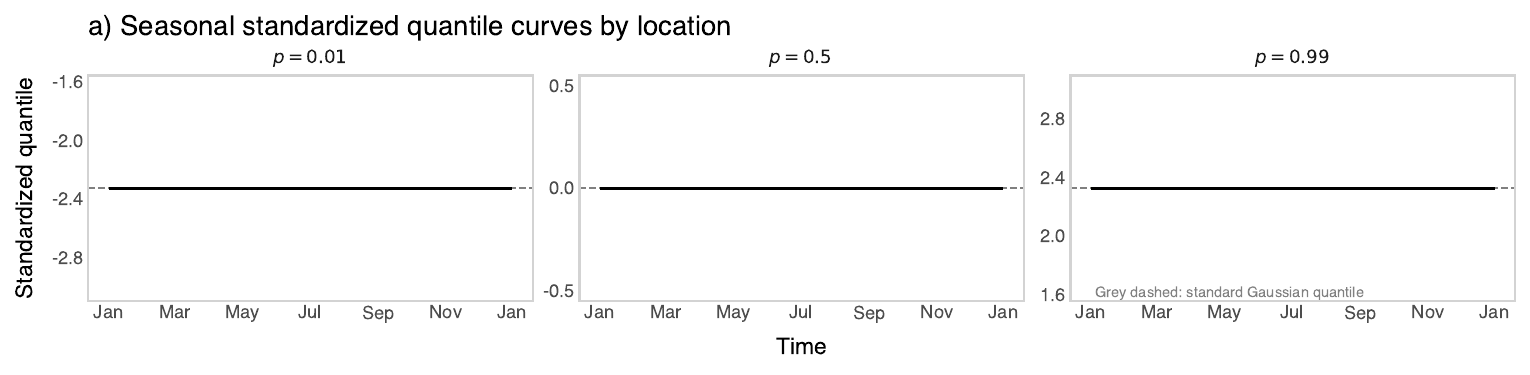}
    \par\vspace{0.35em}
    \includegraphics[width=\linewidth]{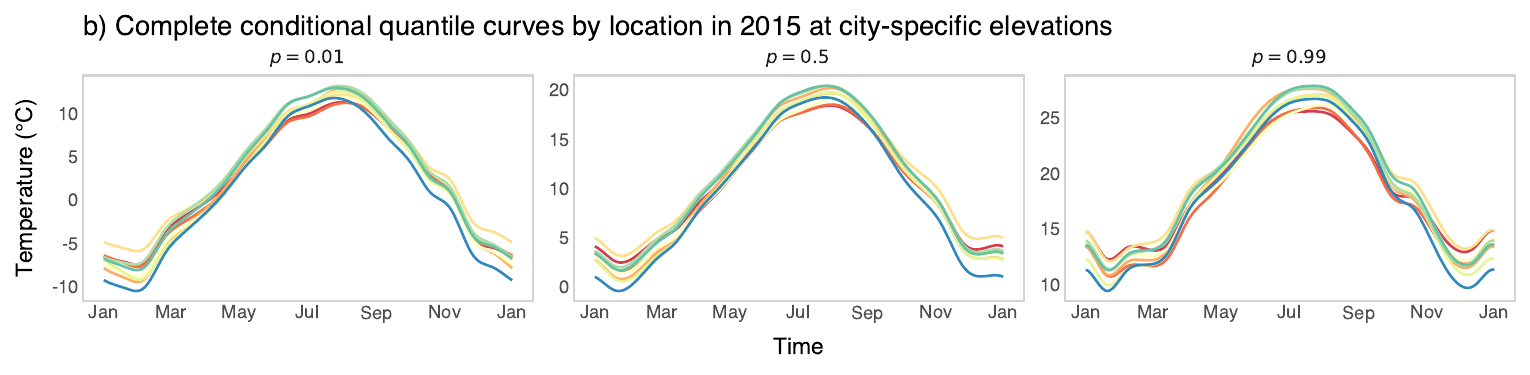}
    \par\vspace{0.35em}
    \includegraphics[width=\linewidth]{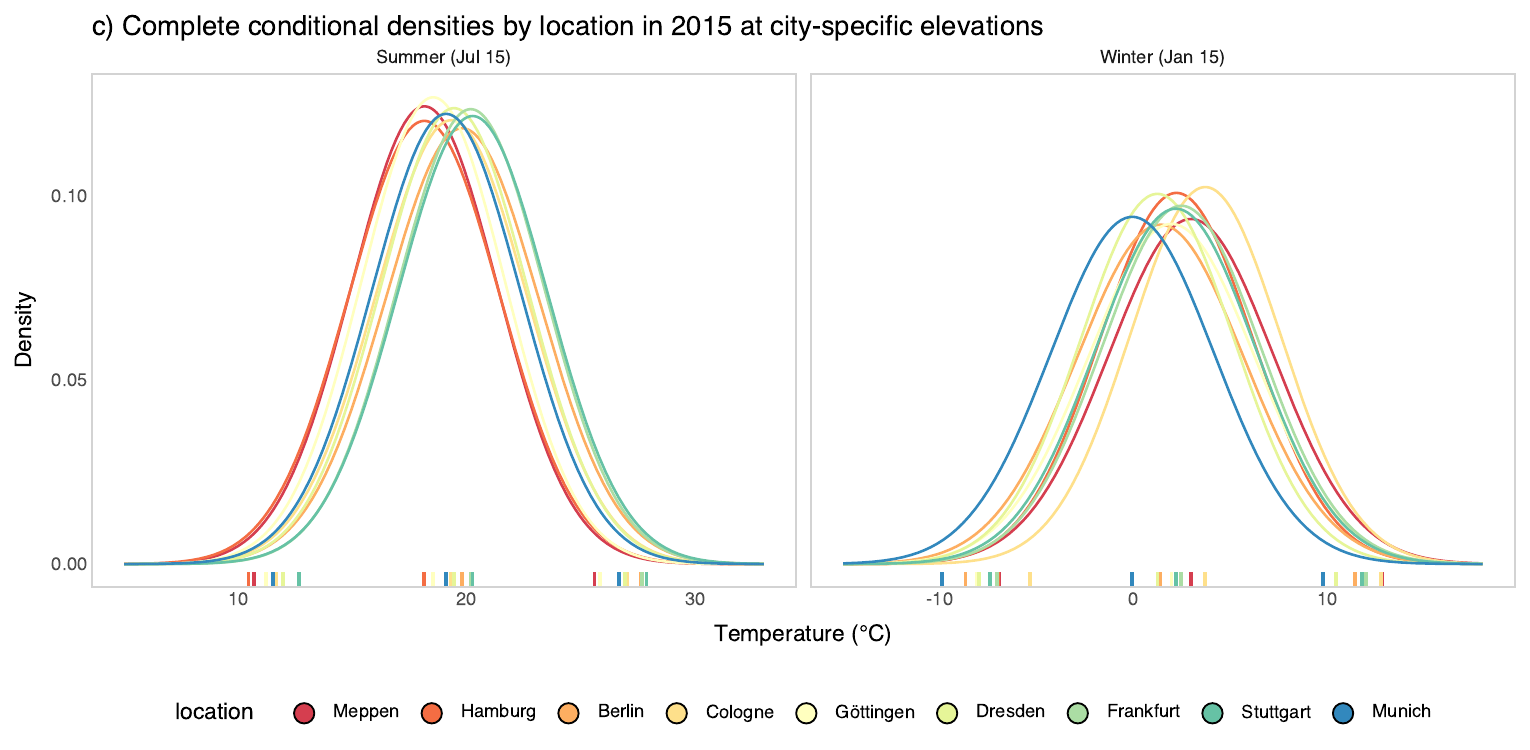}
    \caption{Seasonal distribution summaries for the Gaussian location--scale
        model in the German daily-temperature application, using the layout
        and locations of \autoref{fig:dwd-temperature-quantile-curves}. The top
        row shows posterior mean standardized quantiles at
        $p\in\{0.01,0.5,0.99\}$; they are constant over time and location and
        coincide with the corresponding standard Gaussian quantiles by
        construction. The middle row shows complete conditional quantiles in
        degrees Celsius during 2015 at the city-specific elevations. The lower
        panels show the corresponding complete conditional densities on
        two representative days.
        Colored rug marks on the horizontal axes indicate
        the location-specific $p\in\{0.01,0.5,0.99\}$ quantiles. Colors
        identify locations using the shared legend below the density panels.}
    \label{fig:dwd-gaussian-quantile-curves}
\end{figure}

\begin{figure}[H]
    \centering
    \includegraphics[width=\linewidth]{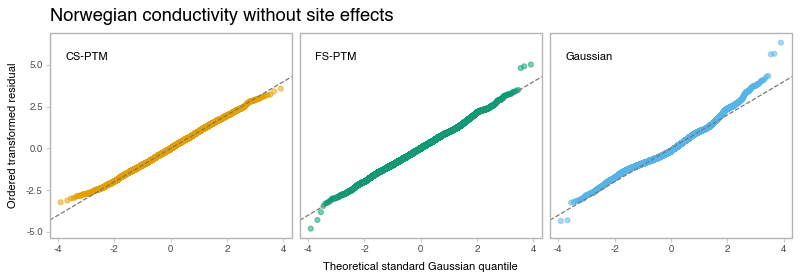}\par
    \includegraphics[width=\linewidth]{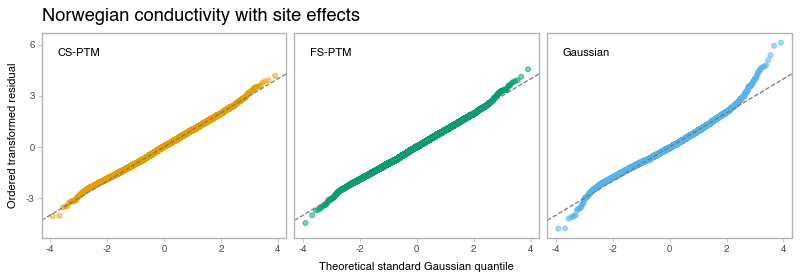}\par
    \includegraphics[width=\linewidth]{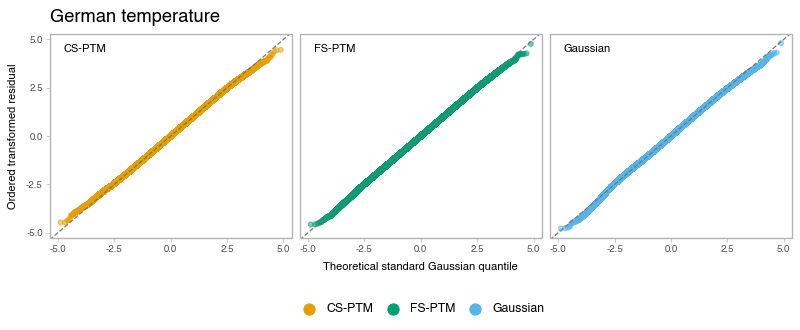}
    \caption{Q--Q plots of posterior mean transformed residuals for the best
        CS-PTM, FS-PTM, and Gaussian model within each application block
        of \autoref{tab:dwd-recent-fit-comparison}, ordered from top to bottom as
        Norwegian conductivity without monitoring-site effects, Norwegian
        conductivity with monitoring-site effects, and German temperature.
        Points show the complete
        ordered transformed residuals against their theoretical standard
        Gaussian quantiles; gray dashed lines mark the identity. The German comparison uses the CS-PTM with $D=10$ and $[a,b]=[-5,5]$, the FS-PTM with $D=20$ and
        $[a,b]=[-5,5]$, and the Gaussian model, all with
        $k_{\mathrm{sp}}=10$. Without monitoring-site effects, the Norwegian
        comparison uses the CS-PTM with $D=20$ and $[a,b]=[-5,5]$, the
        FS-PTM with $D=10$ and $[a,b]=[-5,5]$, and the Gaussian model. With
        monitoring-site effects, both Norwegian transformation models use
        $D=10$ and $[a,b]=[-4,4]$. CS-PTM and FS-PTM denote the
        conditional-shape and fixed-shape PTMs, respectively.}
    \label{fig:application-qq}
\end{figure}

\autoref{fig:norway-effects-site-ri} complements the focal Norwegian
water-conductivity results in \autoref{fig:norway-effects} by showing the same
effect summaries after monitoring-site random intercepts are added to the
location and log-scale predictors.

\begin{figure}[H]
    \centering
    \includegraphics[width=0.32\linewidth]{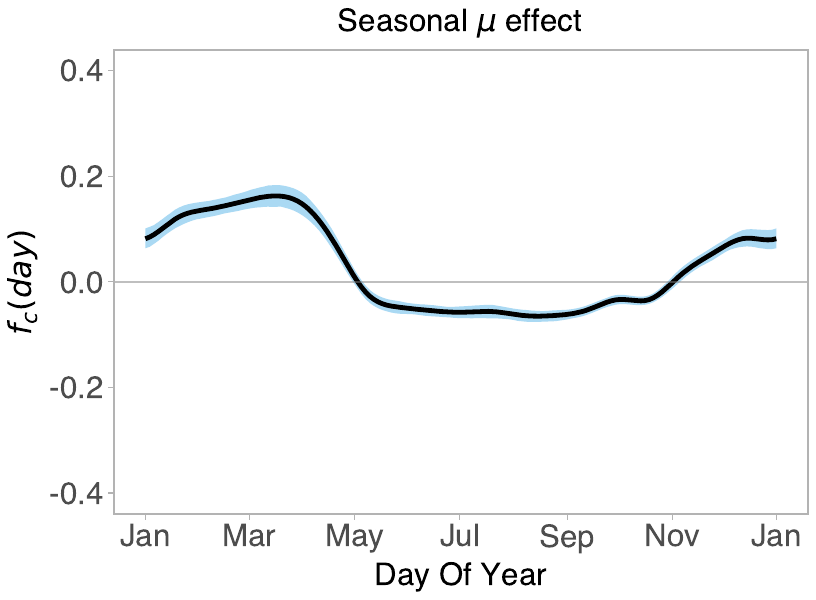}\hfill
    \includegraphics[width=0.32\linewidth]{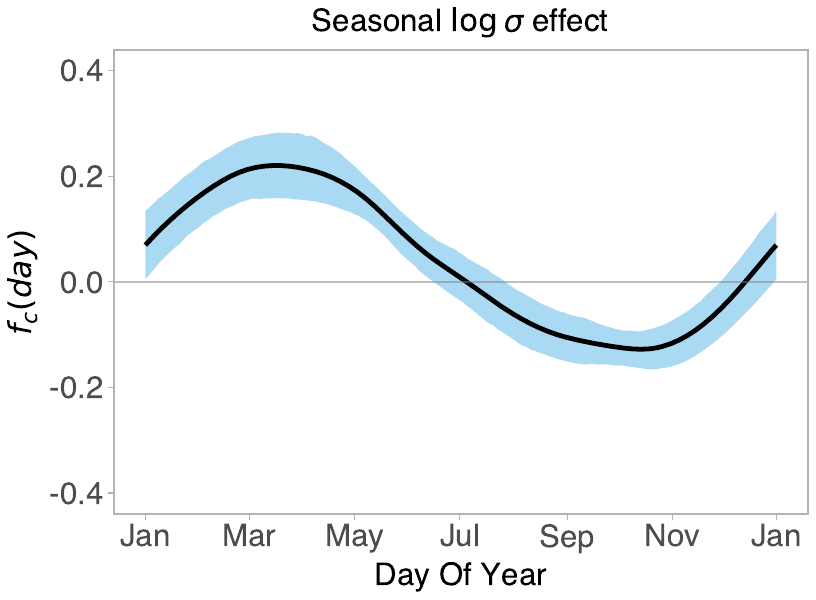}\hfill
    \includegraphics[width=0.32\linewidth]{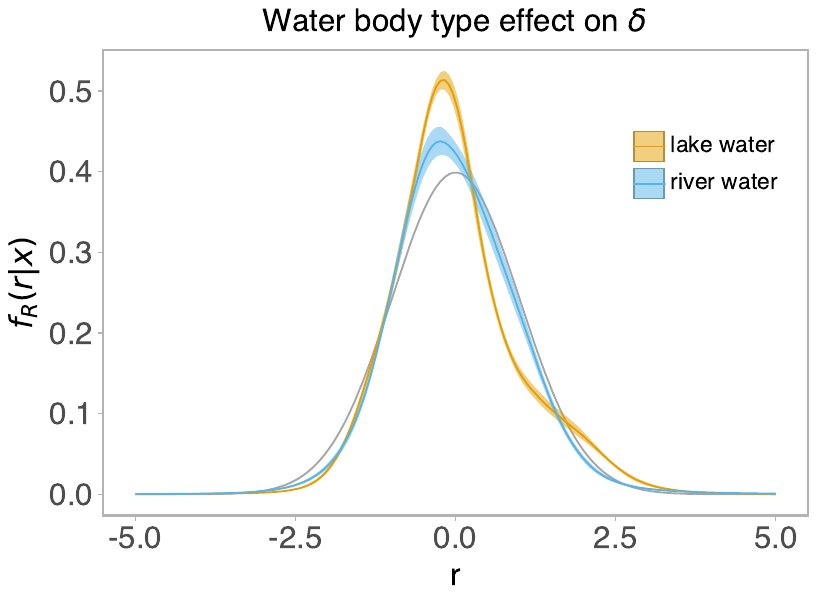}
    \par\vspace{0.75em}
    \includegraphics[width=0.32\linewidth]{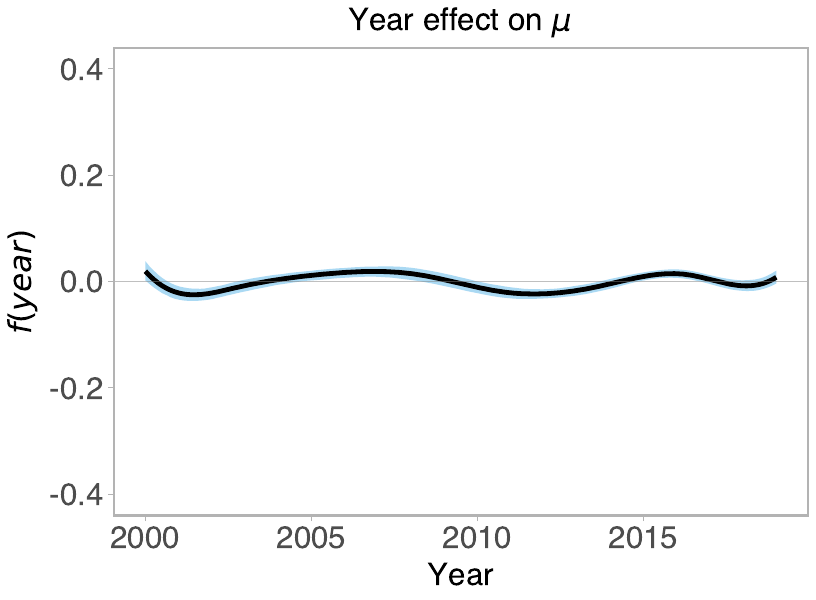}\hfill
    \includegraphics[width=0.32\linewidth]{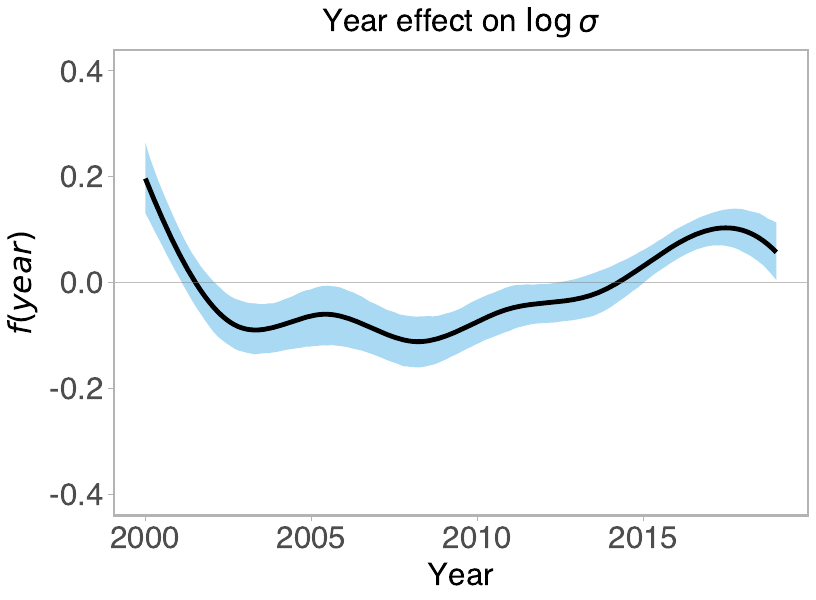}\hfill
    \includegraphics[width=0.32\linewidth]{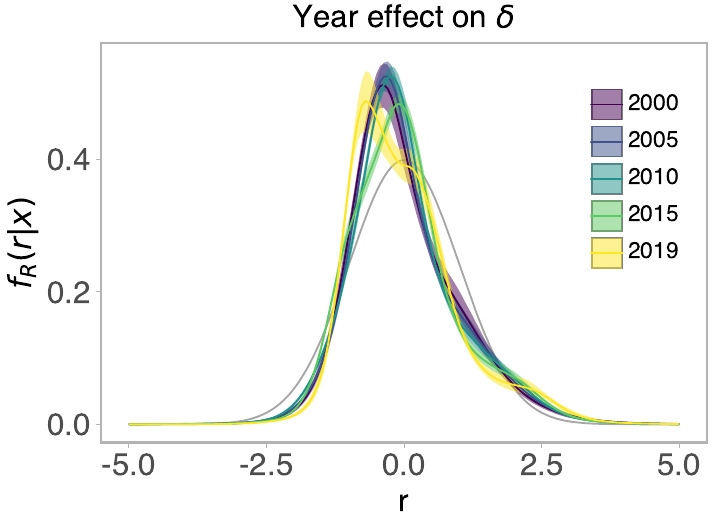}
    \par\vspace{0.75em}
    \includegraphics[width=0.32\linewidth]{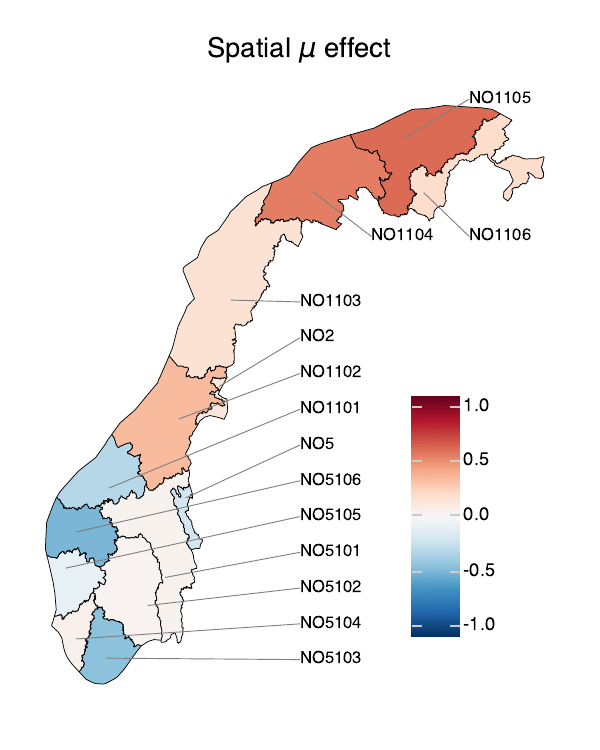}\hfill
    \includegraphics[width=0.32\linewidth]{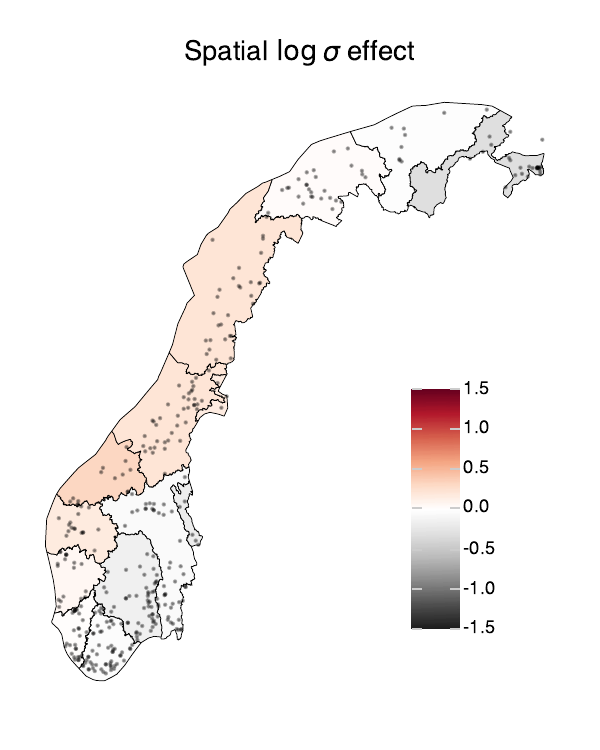}\hfill
    \includegraphics[width=0.32\linewidth]{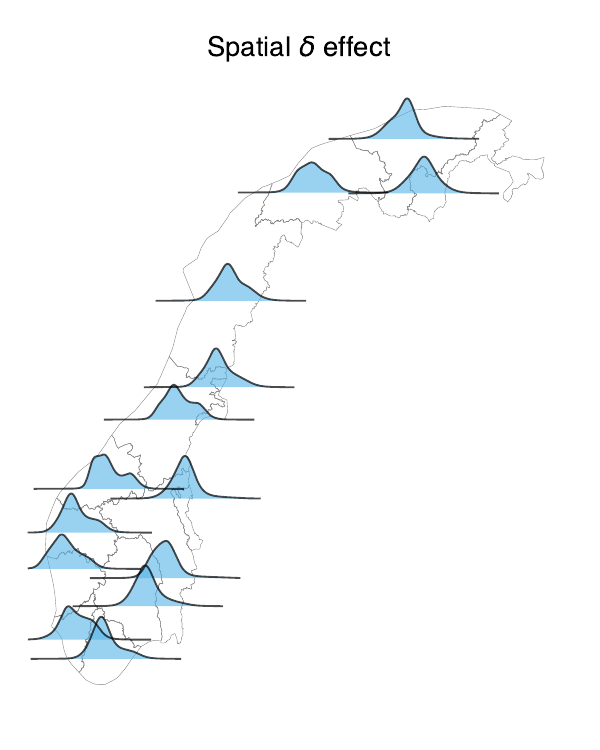}
    \caption{Estimated effects in the PTM with $D=10$ transformation
        parameters, onion-spline core $[a,b]=[-4,4]$, and monitoring-site random
        intercepts in the location and log-scale predictors. The layout mirrors
        \autoref{fig:norway-effects}: the top row shows day-of-year effects on
        location and log-scale and the water-body-type effect on shape; the
        middle row shows year effects on all three predictors; and the bottom
        row shows river-basin district effects. Day-of-year shape previews are
        omitted because their curves overlap almost exactly. Blue shaded
        ribbons are 90\% pointwise credible intervals. Shape effects are
        term-wise standardized-density previews with the remaining centered
        transformation terms held at their reference values; monitoring-site
        effects do not enter the shape predictor.}
    \label{fig:norway-effects-site-ri}
\end{figure}

\end{document}